\newcommand{\declarecolor}[2]{\definecolor{#1}{RGB}{#2}\expandafter\newcommand\csname #1\endcsname[1]{\textcolor{#1}{##1}}}
\newcommand{\declareperson}[1]{\expandafter\newcommand\csname#1\endcsname[1]{\textcolor{Orange}{#1: ##1}}}
\theoremstyle{plain}
\newtheorem{theorem}{Theorem}[section]
\newtheorem{lemma}[theorem]{Lemma}
\newtheorem{corollary}[theorem]{Corollary}
\newtheorem{proposition}[theorem]{Proposition}
\newtheorem{fact}[theorem]{Fact}
\newtheorem{conjecture}[theorem]{Conjecture}
\theoremstyle{definition}
\theoremstyle{remark}
\newtheorem{remark}[theorem]{Remark}
\newtheorem{example}[theorem]{Example}
\newlist{parts}{enumerate}{10}
\setlist[parts]{label=\arabic*.,ref=\arabic*}
	\crefname{partsi}{Part}{Parts}
	\crefname{partsi}{part}{parts}
\Crefname{partsi}{Part}{Parts}
\newcommand*{\Q}{{\mathbb{Q}}}
\newcommand*{\Z}{{\mathbb{Z}}}
\let\R\relax
\newcommand*{\R}{{\mathbb{R}}}
\newcommand*{\I}{{\mathcal{I}}}
\newcommand*{\B}{{\mathcal{B}}}
\let\P\relax
\newcommand*{\P}{{\mathcal{P}}}
\let\H\relax
\newcommand*{\H}{{\mathcal{H}}}
\newcommand*{\1}{{\mathds{1}}}
\newcommand*{\bmlambda}{{\bm{\lambda}}}
\newcommand*{\bmkappa}{{\bm{\kappa}}}
\let\A\relax
\newcommand*{\A}{{\bm{A}}}
\newcommand*{\V}{{\bm{V}}}
\let\a\relax
\newcommand*{\a}{{\bm{a}}}
\let\b\relax
\newcommand*{\b}{{\bm{b}}}
\let\p\relax
\newcommand*{\p}{{\bm{p}}}
\let\v\relax
\newcommand*{\v}{{\bm{v}}}
\newcommand*{\w}{{\bm{w}}}
\newcommand*{\y}{{\bm{y}}}
\newcommand*{\z}{{\bm{z}}}
\newcommand*{\ef}[2]{#1\star#2}
\newcommand*{\barK}{\overline{K}}
\newcommand*{\barM}{M_0}
\newcommand*{\ALG}{\beta}
\newcommand*{\Hess}[1]{\nabla^2 #1}
\newcommand*{\tmu}{{\tilde{\mu}}}
\newcommand*{\tM}{{\tilde{M}}}
\newcommand*{\tN}{{\tilde{N}}}
\newcommand*{\tB}{{\tilde{B}}}
\newcommand*{\hats}{{\hat{s}}}
\newcommand*{\hatt}{{\hat{t}}}
\newcommand*{\haty}{{\hat{y}}}
\newcommand*{\hatz}{{\hat{z}}}
\newcommand*{\hatbmy}{{\bm{\hat{y}}}}
\newcommand*{\hatbmz}{{\bm{\hat{z}}}}
\newcommand*{\SharpP}{\ComplexityFont{\#P}}
\let\poly\relax
\DeclareMathOperator{\poly}{poly}
\DeclareMathOperator{\supp}{supp}
\DeclareMathOperator{\conv}{conv}
\DeclareMathOperator{\rank}{rank}
\DeclareMathOperator*{\argmax}{argmax}
\let\Im\relax
\DeclareMathOperator{\Im}{Im}
\providecommand{\given}{}
\DeclarePairedDelimiterX{\card}[1]{\lvert}{\rvert}{\renewcommand\given{\nonscript\:\delimsize\vert\nonscript\:\mathopen{}}#1}
\DeclarePairedDelimiterX{\abs}[1]{\lvert}{\rvert}{\renewcommand\given{\nonscript\:\delimsize\vert\nonscript\:\mathopen{}}#1}
\DeclarePairedDelimiterX{\norm}[1]{\lVert}{\rVert}{\renewcommand\given{\nonscript\:\delimsize\vert\nonscript\:\mathopen{}}#1}
\DeclarePairedDelimiterX{\tuple}[1]{\lparen}{\rparen}{\renewcommand\given{\nonscript\:\delimsize\vert\nonscript\:\mathopen{}}#1}
\DeclarePairedDelimiterX{\parens}[1]{\lparen}{\rparen}{\renewcommand\given{\nonscript\:\delimsize\vert\nonscript\:\mathopen{}}#1}
\DeclarePairedDelimiterX{\brackets}[1]{\lbrack}{\rbrack}{\renewcommand\given{\nonscript\:\delimsize\vert\nonscript\:\mathopen{}}#1}
\DeclarePairedDelimiterX{\set}[1]\{\}{\renewcommand\given{\nonscript\:\delimsize\vert\nonscript\:\mathopen{}}#1}
\let\Pr\relax
\DeclarePairedDelimiterXPP{\Pr}[1]{\mathbb{P}}[]{}{\renewcommand\given{\nonscript\:\delimsize\vert\nonscript\:\mathopen{}}#1}
\DeclarePairedDelimiterXPP{\PrX}[2]{\mathbb{P}_{#1}}[]{}{\renewcommand\given{\nonscript\:\delimsize\vert\nonscript\:\mathopen{}}#2}
\DeclarePairedDelimiterXPP{\Ex}[1]{\mathbb{E}}[]{}{\renewcommand\given{\nonscript\:\delimsize\vert\nonscript\:\mathopen{}}#1}
\DeclarePairedDelimiterXPP{\ExX}[2]{\mathbb{E}_{#1}}[]{}{\renewcommand\given{\nonscript\:\delimsize\vert\nonscript\:\mathopen{}}#2}
\DeclarePairedDelimiter{\dotprod}{\langle}{\rangle}
\newcommand*{\eval}[1]{\left.#1\right\rvert}
\title{Log-Concave Polynomials I: Entropy and a Deterministic Approximation Algorithm for Counting Bases of Matroids}
\author{Nima Anari}
\affil{\small Computer Science Department\\ Stanford University, \textsf{anari@cs.stanford.edu}}
\author{Shayan Oveis Gharan}
\affil{\small Computer Science and Engineering\\ University of Washington, \textsf{shayan@cs.washington.edu}}
\author{Cynthia Vinzant}
\affil{\small Department of Mathematics\\ North Carolina State University, \textsf{clvinzan@ncsu.edu}}
\begin{document}
	\maketitle
	
	\begin{abstract}
	We give a deterministic polynomial time $2^{O(r)}$-approximation algorithm for the number of bases of a given matroid of rank $r$ and the number of common bases of any two matroids of rank $r$. To the best of our knowledge, this is the first nontrivial deterministic approximation algorithm that works for arbitrary matroids. Based on a lower bound of \textcite{ABF94} this is almost the best possible result assuming oracle access to independent sets of the matroid. 

	There are two main ingredients in our result: For the first, we build upon recent results of \textcite{AHK15,HW17} on combinatorial hodge theory to derive a connection between matroids and log-concave polynomials. We expect that several new applications in approximation algorithms will be derived from this connection in future. Formally, we prove that the multivariate generating polynomial of the bases of any matroid is log-concave as a function over the positive orthant. For the second ingredient, we develop a general framework for approximate counting in discrete problems, based on convex optimization. The connection goes through subadditivity of the entropy. For matroids, we prove that an approximate superadditivity of the entropy holds by relying on the log-concavity of the corresponding polynomials. 

	\end{abstract}
	
	
	\section{Introduction}
\label{sec:introduction}

Efficient algorithms for optimizing linear functions over convex sets, i.e., convex programming, are one of the pinnacles of algorithm design. Convex sets yield easy instances of optimization in the continuous world. Much the same way, matroids yield easy instances of optimization in the discrete world. Going beyond optimization, computing the volume of or sampling from convex sets is well understood algorithmically; however there has not been an analogous progress on counting problems involving matroids. In this work, we try to address this issue by designing nearly tight deterministic approximate counting algorithms for discrete structures involving matroids and their intersections. We introduce a general optimization-based algorithm for approximate counting involving discrete objects, and show that our algorithm performs well for matroids and their intersections.

A matroid $M = (E,\I)$ is a structure consisting of a finite ground set $E$ and a non-empty collection $\I$ of \emph{independent} subsets of $E$ satisfying: 
\begin{enumerate}[i)]  
	\item If $S \subseteq T$ and $T \in \I$, then $S \in \I$.  
	\item If $S,T\in \I$ and $\card{T} > \card{S}$, then there exists an element $i \in T\setminus S$ such that $S\cup \set{i} \in \I$. 
\end{enumerate}
The \emph{rank} of a matroid is the size of the largest independent set of that matroid. If $M$ has rank $r$, any set $S\in \I$ of size $r$ is called a \emph{basis} of $M$. Let $\B_M \subset \I$ denote the set of bases of $M$.

Many optimization problems are well understood on matroids. Matroids are exactly the class of objects for which an analogue of Kruskal's algorithm works and gives the smallest weight basis. 

One can associate to any matroid $M$ a polytope $\P_M$, defined by exponentially many constraints, called the \emph{matroid base polytope}. The vertices of $\P_M$ are the indicator vectors of all bases of $M$, i.e., $\P_M=\conv\set{\1_B\given B\in \B_M}$. Furthermore, using the duality of optimization and separation, one can design a separation oracle for $\P_M$ in order to minimize any convex function over $\P_M$ \cite{Cun84}.

More difficult problems associated to matroids come from counting. For example, given a matroid $M$, is there a polynomial time algorithm to count the number of bases of $M$? This problem is \SharpP-hard in the worst case even if the matroid is representable over a finite field \cite{Sno12}, so the next natural question is: How well can we approximate the number of bases of a given matroid $M$ in polynomial time? This is the main question addressed in this paper. Note that the number of bases of any matroid $M$ of rank $r$ is at most $\binom{\card{E}}{r}\approx \card{E}^r$, so there is a simple $\card{E}^r$ approximation to the number of bases of $M$.

We also address counting problems on the intersection of two matroids. Given two matroids $M=(E,\I_M),N=(E,\I_N)$ of rank $r$ on the same ground set $E$, the matroid intersection problem is to optimize a (linear) function over all bases $B$ common to both $M$ and $N$. This problem can also be solved in polynomial time because $\P_M \cap \P_N$ is exactly the convex hull of the indicator vectors of $\B_M\cap \B_N$ \cite[see, e.g.,][]{Sch03}. Perhaps, the most famous example of matroid intersection  is perfect matchings in bipartite graphs. Since we can optimize over the intersection of two matroids, it is natural to ask if one can approximate the number of bases common to two rank-$r$ matroids $M$ and $N$. This is the second problem that we address in this paper. 

Note that there are \NP-hard problems involving the intersection of just three matroids, e.g., the Hamiltonian path problem. It is \NP-hard to test if there is a single basis in the intersection of three matroids. We will not discuss intersections of more than two matroids in this paper, since any multiplicative approximation would be \NP-hard.

\subsection{Previous Work}

There is a long line of work on designing approximation algorithms to count the number of bases of a matroid. One idea is to use the Markov Chain Monte Carlo technique. For any matroid $M$, there is a well-known chain, called the basis exchange walk, which mixes to the uniform distribution over all bases. Mihail and Vazirani conjectured, about three decades ago, that the chain mixes in polynomial time, and hence one can approximate the number of bases of any matroid on $n$ elements within $1+\epsilon$ factor in time $\poly(n, 1/\epsilon)$. To this date, this conjecture has been proved only for a special class of matroids known as \emph{balanced matroids} \cite{MS91,FM92}. Balanced matroids are special classes of matroids where the uniform distribution over the bases of the matroid, and any of its minors, satisfies the pairwise negative correlation property. Unfortunately, many interesting matroids are not balanced. An important example is the  matroid of all acyclic subsets of edges of a graph $G=(V,E)$ of size at most $k$ (for some $k<\card{V}-1$) \cite{FM92}.

Many of the extensive results in this area \cite{Gam99,JS02,JSTV04,Jur06,Clo10,CTY15,AOR16} only study approximation algorithms for a limited class of matroids, and not much is known beyond the class of balanced matroids. 

Most of the classical results in approximate counting rely on randomized algorithms based on the Markov Chain Monte Carlo technique. There are also a few results in the literature that exploit tools from convex optimization \cite{Bar97,HSKK96,BS07}. To the best of our knowledge, the only non-trivial approximation algorithm that works for any matroid is a \emph{randomized} algorithm of \textcite{BS07} that gives, roughly, a $\log(\card{E})^r$ approximation factor to the number of bases of a  given matroid of rank $r$ and the number of common bases of any two given matroids of rank $r$, in the worst case. We remark that this algorithm works for any family of subsets, not just matroids and their intersections, assuming access to an optimization oracle. The approximation factor gets better if the number of bases of the given matroid(s) are significantly less than $\binom{\card{E}}{r}\simeq \card{E}^r$.

On the negative side, \textcite{ABF94} showed that any deterministic polynomial time algorithm that has access to the matroid $M$ on $n$ elements only through an independence oracle can only approximate the number of bases of $M$ up to a factor of $2^{O(n/\log(n)^2)}$. They actually showed the stronger result that any deterministic algorithm making $k$ queries to the independence oracle must have an approximation factor of at least $2^{\Omega(n/\log(k)^2)}$, as long as $k=2^{o(n)}$. An immediate corollary is a \emph{rank-dependent} lower bound, namely that any deterministic algorithm making polynomially many independence queries to a matroid of rank $r$ must have an approximation ratio of $2^{\Omega(r/\log(n)^2)}$ as long as $r\gg \log(n)$. This is because we can always start with a matroid on $\simeq r$ elements and add loops to get a matroid on $n$ elements without changing the number of bases or the rank.

The problem of approximating the number of bases in the intersection of two matroids $M,N$ is very poorly understood. \Textcite{JSV04} give a randomized polynomial time approximation to the number of perfect matchings of a bipartite graph, a special case of intersection of two matroids, up to a factor of $1+\epsilon$. As for deterministic algorithms, for this special case of bipartite perfect matchings, a $2^{O(r)}$-approximation was first introduced by \textcite{LSW98}, relying on the Van der Waerden conjecture, and later an improvement in the base of the exponent was achieved by \textcite{GS14}. Recently, a subset of the authors \cite{AO17} have shown that one can approximate the number of bases in the intersection of two \emph{real stable} matroids, having oracle access to each of their generating polynomials, up to a $2^{O(r)}$ multiplicative error \cite[also, cf.][]{SV17}. Real stable matroids can be seen as a special case of balanced matroids. See the techniques in \cref{sec:matroidintersection} and \textcite{AO17} for more details. 

\subsection{Our Results}
The main result of this paper is the following.  
\begin{theorem}\label{thm:mainmatroid}
	Let $M=([n],\I)$ be an arbitrary matroid of rank $r$ given by an independence oracle; that is for every $S\subseteq [n]$, one can query the oracle if $S\in \I$. There is a \emph{deterministic} polynomial time algorithm that outputs a number $\ALG$ satisfying
	\[ \max\set{2^{-O(r)}\ALG, \sqrt{\ALG}}\leq \card{\B_M}\leq \ALG, \]
	where $\B_M$ is the set of bases of $M$.
\end{theorem}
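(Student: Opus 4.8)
The plan is to reduce the whole problem to a single convex program and to analyze that program via subadditivity together with an approximate superadditivity of entropy. Write $g_M(\y)=\sum_{B\in\B_M}\prod_{e\in B}y_e$ for the generating polynomial of the bases of $M$; it is homogeneous of degree $r$ and satisfies $\card{\B_M}=g_M(\1)$, and the lower bounds below will use the first main ingredient, that $g_M$ is log-concave on the positive orthant. For $p\in[0,1]$ let $h(p)=-p\log p-(1-p)\log(1-p)$ be the binary entropy, with logarithms to base $2$. The algorithm is: using the independence oracle, compute the matroid rank function greedily---this yields a separation oracle for the base polytope $\P_M=\set{\bm{x}\ge\bm{0}\given \bm{x}([n])=r,\ \bm{x}(S)\le\rank(S)\ \forall S\subseteq[n]}$, with separation reducing to submodular minimization---then, since $\p\mapsto\sum_{e\in[n]}h(p_e)$ is concave, compute
\[
	\Phi\;:=\;\max\Bigl\{\,\textstyle\sum_{e\in[n]}h(p_e)\;:\;\p\in\P_M\,\Bigr\}
\]
in polynomial time (e.g.\ by the ellipsoid method), and output $\ALG:=2^{\Phi}$, with $\Phi$ rounded slightly upward to absorb the solver's error, at the cost of only a $1+o(1)$ factor.

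\textbf{The bound $\card{\B_M}\le\ALG$ (subadditivity of entropy).} Let $\mu$ be uniform on $\B_M$ and $p_e:=\PrX{B\sim\mu}{e\in B}$, so that $\p=\ExX{B\sim\mu}{\1_B}\in\conv\set{\1_B\given B\in\B_M}=\P_M$. Since $B\mapsto(\1[e\in B])_{e\in[n]}$ is injective, subadditivity of Shannon entropy applied to the coordinate indicators gives $\log\card{\B_M}=H(\mu)\le\sum_{e}h(p_e)\le\Phi$, i.e.\ $\card{\B_M}\le 2^{\Phi}=\ALG$.

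\textbf{The bounds $2^{-O(r)}\ALG\le\card{\B_M}$ and $\sqrt{\ALG}\le\card{\B_M}$ (approximate superadditivity).} Let $\p$ now be an optimizer of the program and let $\operatorname{cap}_{\p}(g_M):=\inf_{\y>\bm{0}}g_M(\y)\big/\prod_{e}y_e^{p_e}$ be the capacity of $g_M$ at $\p$; since $\p\in\P_M=\operatorname{Newt}(g_M)$ this quantity is positive, and evaluating at $\y=\1$ gives the trivial estimate $\operatorname{cap}_{\p}(g_M)\le g_M(\1)=\card{\B_M}$. By convex duality, $\log\operatorname{cap}_{\p}(g_M)$ equals the entropy $H(\nu)$ of the maximum-entropy distribution $\nu$ on $\B_M$ whose coordinate marginals are $\p$, while $\Phi=\sum_e h(p_e)$ is the sum of its coordinate entropies. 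The technical heart of the proof is the complementary, approximately-superadditive estimate
\[
	H(\nu)\;\ge\;\max\Bigl\{\,\textstyle\sum_{e}h(p_e)-O(r),\ \tfrac12\sum_{e}h(p_e)\,\Bigr\},
	\qquad\text{equivalently}\qquad
	\operatorname{cap}_{\p}(g_M)\;\ge\;\max\bigl\{\,2^{-O(r)}2^{\Phi},\ 2^{\Phi/2}\,\bigr\}.
\]
Granting it, $\card{\B_M}\ge\operatorname{cap}_{\p}(g_M)\ge 2^{-O(r)}2^{\Phi}=2^{-O(r)}\ALG$ and $\card{\B_M}\ge 2^{\Phi/2}=\sqrt{\ALG}$, which with the previous bound is exactly the theorem. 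Both estimates are already tight for the direct sum of $r$ copies of the rank-$1$ matroid on two elements, where $\card{\B_M}=2^r$, $\Phi=2r$, and $\ALG=4^r$.

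\textbf{The main obstacle.} Everything reduces to the capacity lower bound $\operatorname{cap}_{\p}(g_M)\ge 2^{-O(r)}2^{\Phi}$ (with $2^{\Phi/2}$ as its companion in the regime where $\card{\B_M}$ is small), and this is precisely where log-concavity is indispensable: for an arbitrary family of $r$-sets the analogous bound, proved by a generic argument, loses $2^{O(r\log\log n)}$ instead of $2^{O(r)}$---the source of the $(\log n)^{O(r)}$ guarantee of prior work. I would prove it by induction on $n$ via the deletion--contraction identity $g_M(\y)=y_e\,g_{M/e}(\y)+g_{M\setminus e}(\y)$: the specialization $y_e\mapsto 0$ produces the generating polynomial of the deletion $M\setminus e$ and the derivative $\partial_{y_e}$ produces that of the contraction $M/e$, so every step of the recursion again yields the basis generating polynomial of a matroid---hence again log-concave by the first ingredient, so the hypothesis survives each step. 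One then shows that the total loss incurred in the ratio of $\operatorname{cap}_{\p}(g_M)$ to its ``ideal'' value $2^{\Phi}$ is only $2^{O(r)}$---governed by the degree $r$ rather than by the number of variables $n$---via a Gurvits-type capacity inequality now driven by log-concavity in place of the classical real-rootedness, with a separate but analogous argument giving the companion bound $2^{\Phi}\le\card{\B_M}^2$. The main obstacle is making this per-step analysis go through under only log-concavity (a genuinely real, and strictly weaker, hypothesis than real-rootedness) and with the per-step losses telescoping against the degree so that the cumulative error stays $O(r)$; one also has to check that the $M$-convex (matroid) structure of the support lets the non-differentiated variables be specialized consistently throughout the recursion.
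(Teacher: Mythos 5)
Your algorithm, your subadditivity upper bound, and your reduction of the lower bounds to the single estimate $\H(\nu)\ge\max\{\sum_e\H(p_e)-O(r),\ \tfrac12\sum_e\H(p_e)\}$ for the max-entropy distribution $\nu$ with marginals $\p$ all coincide with the paper's proof (the identification $\log\operatorname{cap}_{\p}(g_M)=\H(\nu)$ is the paper's \cref{lem:geometric-program}). But that estimate is the entire mathematical content of the theorem beyond routine convex programming, and you do not prove it: you state it, label it ``the main obstacle,'' and sketch a deletion--contraction induction on $n$ with Gurvits-type per-step capacity losses that you yourself note you cannot yet make telescope against the degree $r$ rather than against $n$. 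That is a genuine gap, not a minor omission. A per-variable induction naively loses a constant factor per variable, i.e.\ $2^{O(n)}$, and you give no mechanism for charging the loss to $p_e$ so that it sums to $O(\sum_e p_e)=O(r)$; this is exactly the hard part, and it is also not how the paper handles the single-matroid case (the paper only uses an induction of this flavor for matroid \emph{intersection}, in \cref{prop:joint}, where the per-step loss $\phi$ is carefully engineered to be $e^{-O(p_i)}$).

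The paper's actual route to the estimate is much shorter and avoids capacity recursions entirely: \cref{thm:mainentropy} shows $\H(\nu)\ge\sum_i \nu_i\log\frac{1}{\nu_i}$ for \emph{any} log-concave distribution by a single application of Jensen's inequality to the concave function $\z\mapsto\log g_\nu(z_1/\nu_1,\dots,z_n/\nu_n)$, comparing its value at the barycenter $(\nu_1,\dots,\nu_n)$ with its expectation over the vertices $\1_S$. The additive-$O(r)$ form then follows from $(1-p)\log\frac{1}{1-p}\le p$ and $\sum_i p_i=r$. The multiplicative factor-$2$ bound (your $\sqrt{\ALG}$ claim) is obtained by applying the \emph{same} inequality to the dual distribution $\nu^*$, which is log-concave because it is a limit of external fields applied to the uniform distribution on the bases of the dual matroid $M^*$; your sketch says only that ``a separate but analogous argument'' gives this companion bound, but without invoking matroid duality and the log-concavity of $g_{M^*}$ there is no visible source for it. To complete your proof you would either need to carry out the Jensen/duality argument or actually establish the capacity recursion with degree-controlled losses; as written, neither is done.
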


Our algorithm can be implemented with only oracle access to the independent sets of the matroid. Therefore, by the work of \textcite{ABF94}, this is almost the best we can hope for any deterministic algorithm.

As an immediate corollary of the above result we can count the number of independent sets of any given size $k$. This is because independent sets of size $k$ form the bases of the truncation of the original matroid, which itself is a matroid.
\begin{corollary}
	Let $M$ be an arbitrary matroid given by an independence oracle. There is a deterministic polynomial time algorithm that for any given integer $k$ outputs a number $\ALG$ such that
	\[ \max\set{2^{-O(k)}\ALG,\sqrt{\ALG}} \leq \card{\I^k_M} \leq \ALG, \]
	where $\I^k_M$ is the set of independence sets of $M$ of size exactly $k$.
\end{corollary}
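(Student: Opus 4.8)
The plan is to reduce the statement to \cref{thm:mainmatroid} by running that algorithm on the \emph{rank-$k$ truncation} of $M$. Recall that for a matroid $M = ([n], \I)$ and an integer $k$, the truncation $M_k$ has ground set $[n]$ and independent sets $\I_k \coloneqq \set{S \in \I : \card{S} \le k}$. First I would verify that $M_k$ is a matroid: downward closure is inherited from $\I$, and for the exchange axiom, if $S, T \in \I_k$ with $\card{T} > \card{S}$, then $S, T \in \I$, so there is $i \in T \setminus S$ with $S \cup \set{i} \in \I$, and since $\card{S \cup \set{i}} = \card{S} + 1 \le \card{T} \le k$ we have $S \cup \set{i} \in \I_k$.

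Next I would pin down $\B_{M_k}$. Using the oracle one can compute the rank $r$ of $M$ greedily in polynomial time (start from $\emptyset$ and try adding elements one at a time). If $k > r$ then no subset of $[n]$ of size $k$ is independent, so $\I^k_M = \emptyset$ and the algorithm simply outputs $\ALG = 0$, which vacuously satisfies the required inequalities. If $k \le r$, then $M_k$ has rank exactly $k$, and I claim $\B_{M_k} = \I^k_M$: any independent set of size $k$ is clearly a basis of $M_k$, and conversely any $S \in \I_k$ with $\card{S} < k$ can be extended, by the augmentation property of $M$ applied to $S$ and a basis of $M$ (which has size $r \ge k > \card{S}$), to a larger set still in $\I_k$, so it is not maximal; hence every basis of $M_k$ has size exactly $k$.

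The last ingredient is that we have an independence oracle for $M_k$ given one for $M$: on a query $S \subseteq [n]$, return ``yes'' iff $\card{S} \le k$ and the $M$-oracle reports $S \in \I$. This costs one call to the $M$-oracle plus $O(n)$ bookkeeping, so feeding $M_k$ into the algorithm of \cref{thm:mainmatroid} yields a deterministic polynomial time algorithm. That theorem then returns $\ALG$ with $\max\set{2^{-O(k)} \ALG, \sqrt{\ALG}} \le \card{\B_{M_k}} = \card{\I^k_M} \le \ALG$, which is exactly the claimed guarantee with the rank parameter $k$ in place of $r$.

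There is no substantive obstacle here: all of the difficulty is contained in \cref{thm:mainmatroid}. The only points that require a little care are the boundary case $k > r$, where the output must be exactly $0$ since $\sqrt{\ALG} \le 0$ forces it, and the verification that the bases of the truncation are precisely the size-$k$ independent sets, which relies on the matroid augmentation axiom.
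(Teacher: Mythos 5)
Your proposal is correct and matches the paper's intended argument exactly: the paper derives this corollary in one line by observing that the size-$k$ independent sets are the bases of the rank-$k$ truncation, which is itself a matroid, and then invoking \cref{thm:mainmatroid}. You have simply filled in the routine verifications (truncation axioms, identification of its bases, oracle simulation, and the degenerate case $k>r$) that the paper leaves implicit.
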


Building further on our techniques, we show that one can approximate the number of bases in the intersection of \emph{any} two matroids. 
\begin{theorem}\label{thm:matroidintersection}
	Let $M$ and $N$ be two matroids of rank $r$ on the ground set $[n]$ given by independence oracle. There is a \emph{deterministic} polynomial time algorithm that outputs a number $\ALG$ such that
	\[ 2^{-O(r)}\ALG\leq \card{\B_M\cap \B_N} \leq  \ALG,\]
	where $\B_M, \B_N$ are the sets of bases of $M$ and $N$, respectively.
\end{theorem}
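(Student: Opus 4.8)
The algorithm is an instance of the general optimization-based approximate-counting framework of this paper, run on the set family $\B_M\cap\B_N$; the only access it needs is the ability to optimize linear functions over $\conv\set{\1_B\given B\in\B_M\cap\B_N}=\P_M\cap\P_N$, and a separation oracle for this polytope is obtained for free by running the separation oracles of $\P_M$ and of $\P_N$ in turn. Concretely I would (i) use the ellipsoid method with this separation oracle to compute, to sufficient accuracy, the value $W:=\max\set{\sum_{i=1}^n h(p_i)\given\p\in\P_M\cap\P_N}$, where $h$ is the binary entropy function — a concave maximization over a convex body, hence solvable in deterministic polynomial time — and (ii) output $\ALG:=2^{\lceil W\rceil}$. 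The bound $\card{\B_M\cap\B_N}\le\ALG$ is the easy half and is exactly the guarantee the framework provides: the uniform distribution $\mu$ on $\B_M\cap\B_N$ has a marginal vector lying in $\P_M\cap\P_N$, so subadditivity of the Shannon entropy gives $\log\card{\B_M\cap\B_N}=H(\mu)\le\sum_i h(p_i)\le W$. All the difficulty is in the matching lower bound $\ALG\le 2^{O(r)}\card{\B_M\cap\B_N}$, i.e.\ $W\le\log\card{\B_M\cap\B_N}+O(r)$.

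This inequality is an approximate superadditivity of the entropy for the family $\B_M\cap\B_N$, and the plan is to reduce it to the single-matroid version already proved (\cref{thm:mainmatroid} and its proof), which is where log-concavity of the basis generating polynomials enters. Let $\p^\star$ attain $W$. Since $\p^\star\in\P_M$, applying the single-matroid machinery to $M$ — using log-concavity of $g_M$ — produces an exponential-family (Gibbs) distribution $\mu_M$ supported on $\B_M$ whose marginal vector is $\p^\star$ and whose entropy satisfies $H(\mu_M)\ge\sum_i h(p^\star_i)-O(r)$; symmetrically, using log-concavity of $g_N$, we get $\mu_N$ on $\B_N$ with marginals $\p^\star$ and $H(\mu_N)\ge\sum_i h(p^\star_i)-O(r)$. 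Because any distribution with marginals $\p^\star$ has entropy at most $\sum_i h(p^\star_i)$, these statements are precisely that the relative entropies $D(\mu_M\,\|\,\pi)$ and $D(\mu_N\,\|\,\pi)$ are $O(r)$, where $\pi=\bigotimes_i\mathrm{Ber}(p^\star_i)$ is the product measure with the same marginals; in particular $\pi(\B_M)\ge 2^{-O(r)}$ and $\pi(\B_N)\ge 2^{-O(r)}$ as well.

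The crux — and what I expect to be the main obstacle — is to fuse $\mu_M$ and $\mu_N$ into a single distribution supported on $\B_M\cap\B_N$ that still has entropy at least $W-O(r)$; such a distribution immediately yields $\log\card{\B_M\cap\B_N}\ge W-O(r)$. Generic measure theory is not enough: although $\mu_M$ and $\mu_N$ are both close in relative entropy to the same product measure $\pi$, conditioning $\pi$ on $\B_M\cap\B_N$ need not produce a distribution whose marginals have large $\sum_i h(\cdot)$, and bounding $\card{\B_M\cap\B_N}$ by $\pi(\B_M\cap\B_N)/\max_B\pi(B)$ loses a factor $2^{\Omega(n)}$. So the fusion must use the matroid exchange structure of $\B_M$ and of $\B_N$ together with the log-concavity of both $g_M$ and $g_N$. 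I would try to carry it out by the peeling strategy underlying the single-matroid proof: process the ground set one element $e$ at a time via the decomposition $\B_M\cap\B_N=(\B_{M/e}\cap\B_{N/e})\uplus(\B_{M\setminus e}\cap\B_{N\setminus e})$ — which stays inside the class of intersections of two matroids — and show that for a suitable $e$ the split into $e\in B$ versus $e\notin B$ costs only $O(1)$ in the entropy relaxation, so that over the $r$ rounds the total loss is $O(r)$. The delicate point is that $\B_M\cap\B_N$ carries no single log-concave generating polynomial to which the single-matroid step applies verbatim, so the per-round loss has to be controlled by tracking the single-matroid capacities of $g_M$ and of $g_N$ simultaneously and arguing that a common good peeling direction $e$ exists at every step. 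Once $W\le\log\card{\B_M\cap\B_N}+O(r)$ is in hand, combining it with the subadditivity upper bound proves \cref{thm:matroidintersection}.
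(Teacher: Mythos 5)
Your algorithm and the upper-bound half are exactly the paper's: optimize $\sum_i\H(p_i)$ over $\P_M\cap\P_N$ and invoke subadditivity. You have also correctly located the crux — turning the two single-matroid entropy bounds into a lower bound on $\log\card{\B_M\cap\B_N}$ — but the fusion step you propose is a hope, not a proof, and I see concrete reasons to doubt it as stated. First, there is no ``peeling strategy underlying the single-matroid proof'' to mimic: the proof of \cref{thm:mainmatroid} does not peel elements, it applies Jensen's inequality directly to the log-concave generating polynomial (\cref{thm:mainentropy}). Second, peeling the ground set one element at a time takes $n$ rounds, not $r$; an $O(1)$ loss per round would give $O(n)$, which is useless, so you would need to argue that only the $r$ contraction branches cost anything — and no such argument is given. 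Third, you give no mechanism for why a ``common good peeling direction'' for $M$ and $N$ simultaneously should exist; this is precisely where the interaction between the two matroids lives, and it is left entirely open.

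The paper closes this gap by a different, non-iterative device. It forms the polynomial $g_M(\y)\,g_{N^*}(\z)$ in $2n$ variables, observes that this is the basis generating polynomial of the matroid $M\oplus N^*$ and hence completely log-concave by \cref{thm:gMlogconcave}, and writes
\[
\card{\B_M\cap\B_N}=\parens[\Big]{\prod_{i=1}^n(\partial_{y_i}+\partial_{z_i})}\,g_M(\y)\,g_{N^*}(\z),
\]
since the $S$-term of the expansion survives exactly when $S$ is a common basis. The technical heart is then \cref{prop:joint} (and \cref{cor:capacitylb}): for any completely log-concave multiaffine $g$, the quantity $\prod_i(\partial_{y_i}+\partial_{z_i})g$ is at least $(\p/e^2)^{\p}$ times the capacity $\inf_{\y,\z>0}g(\y,\z)/\y^{\p}\z^{1-\p}$, proved by induction on $n$ with an explicit bivariate base case. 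The capacity factors over the two matroids and is converted to entropies of distributions on $\B_M$ and $\B_{N^*}$ with marginals $\p$ and $1-\p$ via \cref{lem:geometric-program}, after which \cref{thm:mainentropy} gives the $O(r)$ additive slack. In short: the single completely log-concave polynomial your sketch says is missing does exist — on a doubled variable set — and the fusion is done in one shot by a capacity inequality rather than by peeling. Without that (or an equivalent) ingredient, your argument does not yet establish the lower bound.
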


Counting common bases of two matroids is a self-reducible problem. Roughly speaking, this means that if we want to count common bases that include some given elements $i_1,\dots,i_k\in [n]$ and exclude some other elements $j_1,\dots,j_l\in [n]$, then we get an instance of the same problem; we just have to replace the input matroids by their minors obtained by contracting $i_1,\dots,i_k$ and deleting $j_1,\dots,j_l$.

\Textcite{JS89} showed that for self-reducible counting problems, one can boost the approximation factor of any algorithm at the expense of an increase in running time and using randomization. In particular, as a corollary of \cref{thm:matroidintersection} and the results of \textcite{JS89} we get the following.

\begin{corollary}
	Let $M$ and $N$ be two matroids of rank $r$ on the ground set $[n]$ given by independence oracles. There is a \emph{randomized} algorithm that for any desired $\epsilon,\delta>0$ outputs a number $\beta$ approximating the number of common bases of $M$ and $N$ within a factor of $1-\epsilon$ with probability at least $1-\delta$:
	\[ \Pr[\Big]{(1-\epsilon)\beta \leq \card{\B_M\cap\B_N}\leq \beta}\geq 1-\delta. \]
	The running time of this algorithm is $2^{O(r)}\poly(n, \frac{1}{\epsilon},\log\frac{1}{\delta})$.
\end{corollary}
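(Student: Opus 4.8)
The plan is to deduce the corollary from \cref{thm:matroidintersection} via the self-reducibility machinery of \textcite{JS89}, which upgrades any polynomial-time approximate counter for a self-reducible family into a randomized $(1\pm\epsilon)$-approximation with overhead polynomial in the approximation factor. As already observed, counting common bases is self-reducible: conditioning on $i\in B$ replaces $(M,N)$ by the contractions $(M/i,N/i)$ and conditioning on $i\notin B$ replaces it by the deletions $(M\setminus i,N\setminus i)$, and both are strictly smaller instances of the same problem; moreover the rank drops, by exactly one, only under contraction, so a chain of contractions reaches a rank-zero matroid in exactly $r$ steps.

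Concretely, I would build a telescoping estimator for $Z:=\card{\B_M\cap\B_N}$. Pick $e_1,\dots,e_r$ greedily: with $(M',N')$ the rank-$(r-k+1)$ minor obtained after contracting $e_1,\dots,e_{k-1}$, run the counter $\ALG$ of \cref{thm:matroidintersection} on each contraction $(M'/e,N'/e)$---whose output, up to a $2^{O(r)}$ factor, is the number of common bases of $(M',N')$ through $e$---and let $e_k$ attain the maximum. Since every common basis of $(M',N')$ has $r-k+1$ elements, some element is contained in at least a $1/n$ fraction of them, and because $\ALG$ never under-counts and over-counts by at most $2^{O(r)}$, the selected $e_k$ lies in a fraction $\rho_k\ge 2^{-O(r)}/n$ of them. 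Contracting an entire common basis leaves only the empty basis, hence $Z=\prod_{k=1}^{r}\rho_k^{-1}$. Each $\rho_k$ is exactly the probability that a uniformly random common basis of the $k$-th minor contains $e_k$, and I would estimate it by drawing almost-uniform samples of common bases of that minor and recording the empirical frequency of $e_k$. With per-stage multiplicative accuracy $\epsilon/r$, failure probability $\delta/r$, and $1/\rho_k\le n\,2^{O(r)}$, a multiplicative Chernoff bound makes $2^{O(r)}\poly(n,1/\epsilon,\log(1/\delta))$ samples suffice per stage; a union bound over the $r$ stages then gives $\prod_k\hat\rho_k^{-1}\in[(1-\epsilon)Z,(1+\epsilon)Z]$ with probability $\ge 1-\delta$, and a routine rescaling turns this into the one-sided estimate $\ALG$ in the statement.

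The one ingredient still required is an almost-uniform sampler for common bases of a matroid pair, to any prescribed distortion, and this is exactly what \textcite{JS89} extract from the polynomial-time counter of \cref{thm:matroidintersection} through their equivalence between approximate counting and almost-uniform generation for self-reducible problems. I expect the remaining work to be bookkeeping rather than a new idea: the sampler and the counter are run against each other along the self-reducibility tree, and one has to verify that the $2^{O(r)}$ approximation factor enters the final running time only polynomially---rather than raised to the depth of the recursion---which is precisely the guarantee that \textcite{JS89} provide. Feeding their sampler into the telescoping estimator above and adding up the cost of the $r$ stages then yields a randomized algorithm running in time $2^{O(r)}\poly(n,\tfrac1\epsilon,\log\tfrac1\delta)$, as claimed.
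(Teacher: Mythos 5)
Your proposal is correct and follows essentially the same route as the paper, which obtains the corollary by observing that counting common bases is self-reducible (via contraction and deletion in both matroids) and then directly invoking the Jerrum--Sinclair boosting theorem applied to the $2^{O(r)}$-approximate counter of \cref{thm:matroidintersection}. The paper leaves the internals of the \textcite{JS89} machinery (the sampler--counter equivalence and the telescoping estimator along the self-reducibility tree) as a black box, whereas you spell them out, but the argument is the same.
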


Counting bases of a single matroid is a special case obtained by letting $M=N$; so this result applies to counting bases of a single matroid as well. Also observe that this algorithm becomes a fully polynomial time randomized approximation scheme (FPRAS) when $r=O(\log n)$.
We also prove that a slight generalization of our algorithmic framework provides $2^{O(r)}$-approximation to weighted counts of bases.
\begin{theorem}
\label{thm:weighted}
Let $M$ and $N$ be two matroids of rank $r$ on the ground set $[n]$ given by independence oracles. There is a deterministic polynomial time algorithm that for any given weights $\bmlambda=(\lambda_1,\dots,\lambda_n)\in \R_{\geq 0}^n$, outputs a number $\ALG$ that $2^{O(r)}$-approximates the $\bmlambda$-weighted intersection of $M$ and $N$:
	\[ 2^{-O(r)}\ALG\leq \sum_{B\in \B_M\cap \B_N}\prod_{i\in B} \lambda_i \leq \ALG. \]
\end{theorem}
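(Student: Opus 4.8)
The plan is to re-run the convex-optimization framework that proves \cref{thm:matroidintersection}, now carrying an \emph{external field} $\theta_i=\log\lambda_i$. The target $\sum_{B\in\B_M\cap\B_N}\prod_{i\in B}\lambda_i$ is exactly a $\bmlambda$-reweighted count of common bases (if $\B_M\cap\B_N=\emptyset$, detectable by the matroid-intersection algorithm, output $0$). Before anything else I would dispose of two degeneracies: if $\lambda_i=0$ for $i$ in some set $Z$, then no common basis meeting $Z$ contributes, so we may delete $Z$ from both $M$ and $N$ and assume $\bmlambda\in\R_{>0}^n$; and since rescaling $\bmlambda\mapsto c\bmlambda$ multiplies the target by exactly $c^r$ — a factor we track separately — we may further normalize so that $\max_i\lambda_i=1$, keeping every $\log\lambda_i$ of controlled magnitude.

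For the upper bound, the Gibbs variational principle gives $\log\!\big(\sum_{B\in\B_M\cap\B_N}\prod_{i\in B}\lambda_i\big)=\max_{\mu}\big(H(\mu)+\sum_i q_i(\mu)\log\lambda_i\big)$, where $\mu$ ranges over probability measures on $\B_M\cap\B_N$, $q_i(\mu)$ is the probability that $i$ lies in a basis drawn from $\mu$, and the identity uses that $\sum_{i\in B}\log\lambda_i$ is linear in $\1_B$. By subadditivity of entropy $H(\mu)\le\sum_i h\big(q_i(\mu)\big)$ with $h(\cdot)$ the binary entropy, and the vector $\big(q_i(\mu)\big)_i$ lies in $\P_M\cap\P_N$; hence the target is at most $\exp\!\big(\max_{x\in\P_M\cap\P_N}\sum_i(h(x_i)+x_i\log\lambda_i)\big)$, and I would let $\ALG$ be (an approximation of) this. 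Compared with the unweighted algorithm the only change is the extra linear term $\sum_i x_i\log\lambda_i$ in the concave objective, which is harmless for the ellipsoid method: $\P_M\cap\P_N$ still admits a separation oracle built from the independence oracles of $M$ and $N$, and when $\bmlambda$ is given with $\poly(n)$ bits the objective, its gradient, and the optimal value all have $\poly(n)$ bit complexity.

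For the matching lower bound I would reuse the approximate superadditivity of entropy for matroids. Matroid structure enters that argument only through the log-concavity of the basis generating polynomials on the positive orthant, so it suffices to observe that the external field can be absorbed into, say, $M$'s generating polynomial: replacing $g_M(\z)=\sum_{B\in\B_M}\prod_{i\in B}z_i$ by $g_{M,\bmlambda}(\z):=g_M(\lambda_1z_1,\dots,\lambda_nz_n)$ keeps it homogeneous of degree $r$ with nonnegative coefficients, and keeps it log-concave on $\R_{>0}^n$, because a coordinatewise positive rescaling is a linear automorphism of the positive orthant. (The Newton polytope is unchanged, so the feasible region remains $\P_M\cap\P_N$; which of $M,N$ absorbs the field is irrelevant.) Running the proof of \cref{thm:matroidintersection} verbatim with $g_{M,\bmlambda}$ in place of $g_M$ then yields $\log\!\big(\sum_{B\in\B_M\cap\B_N}\prod_{i\in B}\lambda_i\big)\ge\max_{x\in\P_M\cap\P_N}\sum_i(h(x_i)+x_i\log\lambda_i)-O(r)$, and combining with the upper bound gives the claimed $2^{-O(r)}\ALG\le\sum_{B\in\B_M\cap\B_N}\prod_{i\in B}\lambda_i\le\ALG$.

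The step I expect to be the main obstacle is checking that this $2^{O(r)}$ loss is \emph{uniform in the external field} — that the constant hidden in the approximate-superadditivity estimate depends only on the rank $r$, not on the coefficients of $g_{M,\bmlambda}$ (equivalently, not on $n$ or on $\max_i\abs{\log\lambda_i}$). This should hold because that estimate is proved by an $r$-step peeling/chain-rule argument in which each step loses an absolute constant determined by the local behaviour of a log-concave homogeneous polynomial, a quantity invariant under positive coordinatewise rescaling; but one must confirm the reweighting does not slip into the per-step bound, and combine this with the normalization above (and, if needed, further passage to minors to remove coordinates of negligible weight) so that the convex program can be solved to sufficient accuracy in deterministic polynomial time with the overall ratio staying $2^{O(r)}$.
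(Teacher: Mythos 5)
Your proposal is correct, and for the lower bound it takes a genuinely different route from the paper. The paper's upper bound (\cref{lem:weighted-upperbound}) is the same as yours: subadditivity of entropy applied to the Gibbs measure $\ef{\bmlambda}{\mu}$. For the lower bound, however, the paper deliberately avoids reopening the analytic machinery: it first treats $\bmlambda\in\Z_{\geq 0}^n$ by replacing each element $i$ with $\lambda_i$ parallel copies, so that the $\bmlambda$-weighted count of common bases of $M,N$ equals the unweighted count for the blown-up matroids $\tM,\tN$; it then invokes \cref{thm:matroidintersection} as a black box (checking that the lifted marginals $p_i/\lambda_i$ turn the unweighted objective into the weighted one), extends to $\Q_{\geq 0}^n$ by degree-$r$ homogeneity in a common denominator, and to $\R_{\geq 0}^n$ by continuity and density. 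Your route instead absorbs the field into $g_M$ via the coordinate rescaling $g_{M,\bmlambda}(\z)=g_M(\lambda_1z_1,\dots,\lambda_nz_n)$ and reruns the analytic proof; the paper explicitly acknowledges this direct proof is possible but opts for the reduction. The uniformity issue you flag as the main obstacle does resolve favorably, and for a cleaner reason than a per-step accounting: the capacity inequality (\cref{prop:joint}, \cref{cor:capacitylb}) carries a factor $\phi(\p)$ that depends only on $\p$ and not on the polynomial, with $\phi(\p)\geq e^{-O(r)}$ whenever $\sum_i p_i=r$; the entropy lower bound \cref{thm:mainentropy} holds for arbitrary log-concave distributions; and \cref{lem:geometric-program} already handles non-indicator $\mu$, producing exactly the extra $\sum_i p_i\log\lambda_i$ term in $\log\inf_{\y}g_{M,\bmlambda}(\y)/\y^{\p}=\H(\tmu)+\sum_i p_i\log\lambda_i$. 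What each approach buys: the paper's reduction reuses \cref{thm:matroidintersection} verbatim at the cost of the integer/rational/real limiting argument and the parallel-element construction (and implicitly needs the $O(r)$ there to be independent of the ground-set size, which it is); your direct argument avoids the discretization entirely but requires re-verifying weight-independence of the constants, which, as above, is immediate from the structure of \cref{prop:joint} and \cref{thm:mainentropy} rather than something requiring the normalization $\max_i\lambda_i=1$ or passage to minors.
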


\subsection{Techniques}
In this section we discuss the main ideas of our proof. We rely heavily on the \emph{basis generating polynomial} of a matroid $M$, 
\begin{equation}
	\label{eq:gM}
	g_M(z_1,\dots,z_n)=\sum_{B\in \B_M} \prod_{i\in B} z_i.
\end{equation}
For some matroids, such as partition matroids and graphic matroids, the polynomial $g_M$ has a special property called real stability. A multivariate polynomial $g\in\R[z_1,\dots,z_n]$ is \emph{real stable} if $g(z_1, \dots, z_n)\neq 0$ whenever $\Im(z_i)>0$ for all $i=1, \dots, n$. Recently, real stable polynomials have been used for numerous counting and sampling tasks \cite{NS16,AOR16,AO17,SV17,AOSS17,AMOV18}.

Matroids with real stable basis generating polynomials are called \emph{real stable matroids}. Many properties of these matroids can be derived from the stability of their generating polynomials. If $g_M$ is real stable, then the uniform distribution over the bases of the matroid, as well as its minors, satisfy pairwise negative correlation. Then $M$ is a balanced matroid and one can count the number of bases of $M$ within a $1+\epsilon$ multiplicative error by a polynomial time randomized algorithm. 

On the counting side, roughly speaking, real stable matroids are almost all we know how to handle. However, it is known that many matroids are not real stable. Even if we allow arbitrary positive coefficients in front of the monomials in the generating polynomial, instead of uniform coefficients, for some matroids we can never get a real stable polynomial \cite{Bra07}. Here we define a more general class of polynomials, namely log-concave and completely log-concave polynomials, to be able to study all matroids with analytical techniques.

Given a real stable polynomial $g(z_1,z_2,\dots, z_n)$, its univariate restriction $g(z,z,\dots,z)$ is real-rooted, and it follows that its coefficients form a \emph{log-concave} sequence. Recently, \textcite{AHK15} proved that certain univariate polynomials associated with matroids have log-concave coefficients, for \emph{any} matroid. Their work resolved several long standing open problems in combinatorics. It is important to note that such results are very unlikely to follow from the theory of real stability because not all matroids support real stable polynomials.

The first ingredient of our paper is that we exploit some of the results and theory developed by \textcite{AHK15} to show that $\log(g_M(z_1,\dots,z_n))$ is concave as a function on the positive orthant (see \cref{thm:gMlogconcave}). Any real stable polynomial with nonnegative coefficients is log-concave on the positive orthant but the converse is not necessarily true. In this paper we study properties of log-concave polynomials with nonnegative coefficients  as a generalization of real stable polynomials and show that they satisfy many of the closure properties of real stable polynomials.  Here, we mainly focus on  applications in approximate counting, but we hope that these techniques can be used for many other applications in algorithm design, operator theory, and combinatorics.

Our second ingredient is a general framework for approximate counting based on convex optimization. 
We consider probability distributions $\mu:2^{[n]}\to\R_{\geq 0}$ on subsets of $[n]$. 
Firstly, we show that if $\mu$ has a log-concave generating polynomial, we can approximate its entropy using the marginal probabilities of the underlying elements (see \cref{thm:mainentropy}). The marginal probability $\mu_i$ of an element $i\in [n]$ is the probability that $i$ is included in a random sample of $\mu$. We show that $\sum_{i=1}^n \parens{\mu_i \log\frac1{\mu_i}+(1-\mu_i)\log\frac{1}{1-\mu_i}}$ gives a ``good'' approximation of the entropy $\H(\mu)$ of $\mu$. 

This is particularly interesting when $\mu$ is the uniform distribution over the bases of a matroid $M = ([n],\I)$, in which case $\H(\mu)$ equals $\log(\card{\B_M})$. From the marginals $\mu_i$, one can approximate $\H(\mu)$, but finding the marginal probabilities is no easier than estimating $\H(\mu)$. Instead, we observe that the vector of marginals $(\mu_1,\dots,\mu_n)$ must lie in $\P_M$. So instead of trying to find $\mu_i$'s, we use a convex program to find a point $\p=(p_1,\dots,p_n)\in \P_M$ maximizing the sum of marginal entropies, $\sum_{i=1}^n \parens{p_i\log\frac1{p_i}+(1-p_i)\log\frac1{1-p_i}}$. Using properties of maximum entropy convex programs (see \cref{thm:maxentropy}), we show that this also gives a ``good'' approximation of $\H(\mu)=\log(\card{\B_M})$. 

To prove \cref{thm:matroidintersection}, we exploit some of the previous tools that a subset of authors developed \cite{AO17} to approximate the number of bases in the intersection of two real stable matroids. In this paper we generalize these techniques to all matroids. In the process we show that for any matroid $M$ the polynomial $g_M$ is \emph{completely log-concave}, meaning that taking directional derivatives of the polynomial $g_M$ with respect to directions in $\R_{\geq 0}^n$ results in a polynomial that as a function is log-concave on the positive orthant (see \cref{thm:gMlogconcave}).

\subsection{Algorithmic Framework}
\label{sec:framework}
Our algorithms in the cases of a single matroid and the intersection of two matroids are actually instantiations of the same framework that could be applied to more general discrete structures. We use a general framework based on convex programming. Take an arbitrary family $\B\subseteq 2^{[n]}$ of subsets of $\set{1,\dots,n}$ as our discrete object. For us $\B$ will be either the set of bases of a matroid, or the common bases of two matroids, but our framework could be applied to more general families. We assume that we can optimize linear functions over the polytope $\P_\B=\conv\set{\1_B\given B\in \B}$. This is true in both cases involving matroid(s), as long as we have access to the corresponding independence oracle(s).

The key observation is that the entropy of the uniform distribution $\mu$ over the elements of $\B$ equals $\log(\card{\B})$ and that using subadditivity of the entropy we can relate this entropy to the points in  $\P_\B$. More precisely if $\mu_i$ is the marginal probability of element $i$ being in a randomly chosen element of $\B$, then $\tuple{\mu_1,\dots,\mu_n}$ is a point of $\P_\B$; it is precisely the average of all vertices of $\P_\B$. This fact, together with subadditivity of the entropy gives us
\begin{align*} \log(\card{\B})=\H(\mu)&\leq \sum_{i=1}^n\parens*{\mu_i\log\frac{1}{\mu_i}+(1-\mu_i)\log\frac{1}{1-\mu_i}}\\ &\leq \max\set*{\sum_{i=1}^n \parens*{p_i\log\frac{1}{p_i}+(1-p_i)\log\frac{1}{1-p_i}}\given \p=(p_1,\dots,p_n) \in \P_\B}. \end{align*}
The last quantity is something that we can efficiently compute, because we can optimize over the polytope $\P_\B$. Therefore, we have a convex-programming-based way of obtaining an upper bound for $\log(\card{\B})$. By exponentiating the result, we get an upper bound on $\card{\B}$; this is the output of our algorithm, $\ALG$ in \cref{thm:mainmatroid,thm:matroidintersection}.  Our results show that for matroids and intersections of two matroids, this upper bound becomes a ``good'' approximation, in the sense that there is a complementary lower bound. We leave the question of finding more discrete structures for which this algorithm provides a ``good'' approximation open. As a potential direction, we make the following  concrete conjecture.
\begin{conjecture}
	\label{conj:matching}
	Let $G=(V,E)$ be a graph with an even number of nodes, and let $\B\subseteq 2^E$ be the set of all perfect matchings in $G$. Then
	\[ \max\set*{\sum_{i\in E} \parens*{p_i\log\frac{1}{p_i}+(1-p_i)\log\frac{1}{1-p_i}}\given \p \in \P_\B}-O(\card{V})\leq \log(\card{\B}). \]
\end{conjecture}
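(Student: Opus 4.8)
The left side of \cref{conj:matching} is, up to the additive $O(\card{V})$ that gets absorbed into the exponent, exactly $\log$ of the number $\ALG$ produced by the generic convex-programming algorithm of \cref{sec:framework} run on $\B=\set{\text{perfect matchings of }G}$: the perfect-matching polytope has a polynomial-time separation oracle, so the algorithm runs, and the conjecture asserts that it is a $2^{O(\card{V})}$-approximation --- the analogue of the complementary lower bounds we prove for one matroid (\cref{thm:mainmatroid}) and for the intersection of two matroids (\cref{thm:matroidintersection}). The plan is to prove the corresponding ``approximate superadditivity of entropy'' for the uniform measure $\mu$ over $\B$. Let $\p^\star\in\P_\B$ maximize $\sum_{i\in E}\parens{p_i\log\frac1{p_i}+(1-p_i)\log\frac1{1-p_i}}$, and let $\nu$ be the maximum-entropy distribution on $\B$ with marginal vector $\p^\star$; by \cref{thm:maxentropy} it has the exponential-family form $\nu(B)\propto\prod_{i\in B}\lambda_i$, and since the uniform distribution maximizes entropy we always have $\H(\nu)\le\H(\mu)=\log\card{\B}$. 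So it suffices to show
\[
	\sum_{i\in E}\parens*{p^\star_i\log\tfrac1{p^\star_i}+(1-p^\star_i)\log\tfrac1{1-p^\star_i}}\ \le\ \H(\nu)+O(\card{V}).
\]
Writing $g_\B$ for the perfect-matching generating polynomial and $\mathrm{cap}_{\p}(g_\B)=\inf_{\z>0}g_\B(\z)\big/\prod_i z_i^{p_i}$, so that $\H(\nu)=\log\mathrm{cap}_{\p^\star}(g_\B)$, this is the assertion that the gap between the sum of marginal binary entropies and $\log\mathrm{cap}(g_\B)$ at $\p^\star$ is only $O(\card{V})$ --- precisely the kind of estimate that \cref{thm:mainentropy} supplies when the generating polynomial is log-concave.

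The bipartite case requires nothing new: if $G=(A\sqcup B,E)$ with $\card{A}=\card{B}$, then the perfect matchings of $G$ are exactly the common bases of the two partition matroids whose blocks are the stars at the vertices of $A$ and at the vertices of $B$, and the perfect-matching polytope equals $\P_M\cap\P_N$; hence the output of the generic framework on $\B$ coincides with that of our matroid-intersection algorithm, which \cref{thm:matroidintersection} certifies to be a $2^{O(r)}=2^{O(\card{V})}$-approximation (here $g_\B$ is a permanent, and the required capacity comparison is also the content of the permanent bounds behind \textcite{LSW98,GS14}). For $d$-regular graphs the conjecture can likewise be checked directly: concavity forces the marginal-entropy maximum to be at most $\card{E}\cdot h(1/d)$ for $h$ the binary entropy, while the number of perfect matchings is at least $\bigl((d-1)^{d-1}/d^{d-2}\bigr)^{\card{V}/2}$ by the theorems of Schrijver (bipartite) and Csikv\'ari (general), and a short computation shows these differ by at most $\card{V}$. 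The real content of \cref{conj:matching} is therefore general, non-bipartite graphs, where perfect matchings form a matroid-parity problem rather than a matroid intersection.

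For general $G$ one cannot simply invoke \cref{thm:mainentropy}, because $g_\B$ is \emph{not} log-concave on the positive orthant: already for $K_4$, where $g_\B=z_{12}z_{34}+z_{13}z_{24}+z_{14}z_{23}$, restricting $\log g_\B$ to the half-line $z_{12}=z_{34}=t>0$ with all other $z_e=1$ gives $t\mapsto\log(t^2+2)$, whose second derivative $(4-2t^2)/(t^2+2)^2$ is positive for $t<\sqrt2$; since this half-line moves in the nonnegative direction $\1_{\set{12}}+\1_{\set{34}}$, even complete log-concavity fails. (This is intrinsic: the support of a completely log-concave polynomial is M-convex, whereas the perfect matchings of $K_4$ are not.) One therefore needs a substitute for the log-concavity input, and two routes suggest themselves. \emph{(a)} Work with the larger multivariate matching polynomial $M_G(\z,\y)=\sum_{m\text{ matching}}\prod_{e\in m}z_e\prod_{v\notin V(m)}y_v$, which carries Heilmann--Lieb real-rootedness, note $g_\B(\z)=M_G(\z,\mathbf 0)$, and bound the error incurred on passing to the slice $\y=\mathbf 0$ by the total weight carried by near-perfect matchings (defect $2,4,\dots$), which one would control by $2^{O(\card{V})}$ via a Tutte--Berge or augmenting-path argument. \emph{(b)} Analyze the KKT point $\p^\star$ of the concave program directly, using the face of the matching polytope containing $\p^\star$, and round $\p^\star$ to a distribution on perfect matchings of entropy at least $\sum_i h(p^\star_i)-O(\card{V})$.

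The main obstacle is exactly this failure of log-concavity: in contrast to matroids and their intersections, the generating polynomial of the object we wish to count is neither real stable nor log-concave, so none of the analytic closure theory of this paper applies to $g_\B$ off the shelf. Route \emph{(a)} cannot be closed by quoting a closure property, since the matching polynomial is itself not log-concave on the positive orthant (the signed version has mixed coefficients, the unsigned version is not stable), so it requires a genuinely quantitative Heilmann--Lieb-type comparison of $M_G(\z,\y)$ for small $\y$ with $g_\B(\z)$ that, to our knowledge, is not in the literature. Route \emph{(b)} requires understanding which faces of the matching polytope can carry the maximizer $\p^\star$, a structural question about the matching polytope that the matroid theory does not touch. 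The bipartite and regular cases make \cref{conj:matching} plausible, but a proof in full generality appears to need an idea genuinely beyond the methods developed here.
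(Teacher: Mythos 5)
This statement is \cref{conj:matching}: the paper poses it as an open problem and offers no proof, so there is nothing in the paper to compare your write-up against, and you correctly stop short of claiming a proof. What you have written is an accurate assessment of the state of the conjecture rather than a resolution of it. Your two solid points are exactly right: (i) for bipartite $G$ the perfect matchings are the common bases of the two partition matroids induced by the bipartition, $\P_\B=\P_M\cap\P_N$ by Birkhoff--von Neumann, and \cref{thm:matroidintersection} with $r=\card{V}/2$ gives the conjecture in that case; (ii) the $K_4$ computation showing $\log(t^2+2)$ is convex for $t<\sqrt2$ correctly demonstrates that $g_\B$ is not log-concave on the positive orthant, which is precisely why neither \cref{thm:mainentropy} nor the capacity machinery of \cref{sec:matroidintersection} applies off the shelf and why the statement is left as a conjecture.

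One caveat on your ``regular graphs'' paragraph: the claimed lower bound $\bigl((d-1)^{d-1}/d^{d-2}\bigr)^{\card{V}/2}$ on the number of perfect matchings is Schrijver's theorem for \emph{bipartite} $d$-regular graphs and is not known (and in the stated form cannot hold) for general regular graphs --- a $d$-regular non-bipartite graph may have no perfect matching at all, and even for bridgeless cubic graphs the best known general lower bound is a much weaker $2^{\Omega(\card{V})}$. The upper-bound half of your computation is fine ($\P_\B$ sits inside the degree-constrained polytope, over which the symmetric point $p_e=1/d$ maximizes the concave objective, giving $\card{E}\,h(1/d)$, which differs from $\log$ of Schrijver's bound by $(d-1)\card{V}\log\frac{d}{d-1}\leq\card{V}$), but the matching-count lower bound needed to close even this special case is itself open outside the bipartite setting. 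So the ``regular case'' should not be listed alongside the bipartite case as settled. Your routes (a) and (b) are reasonable speculation, but, as you acknowledge, neither is carried out, and the conjecture remains open.
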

If \cref{conj:matching} is correct, we immediately get a deterministic polynomial time $2^{O(\card{V})}$-approximation algorithm for counting perfect matchings, because we have efficient optimization over $\P_\B$. To the best of our knowledge, for nonbipartite graphs, no such result is known as of this writing.

Finally we remark that this framework can also handle weighted counting. For weight vector $\bmlambda=(\lambda_1,\dots,\lambda_n)\in \R_{\geq 0}^n$, the $\bmlambda$-weighted count of a family $\B\subseteq 2^{[n]}$ is simply
\[ \sum_{B\in \B}\prod_{i\in B} \lambda_i. \]
For $\lambda_1,\dots,\lambda_n=1$, this quantity becomes $\card{\B}$. To handle $\bmlambda$-weighted counts we can simply change our concave program slightly to
\[ \max\set*{\sum_{i=1}^n\parens*{p_i\log \frac{\lambda_i}{p_i}+(1-p_i)\log \frac{1}{1-p_i}}\given \p=(p_1,\dots,p_n)\in \P_\B}. \]
In \cref{sec:weighted} we show that this is always an upper bound on the logarithm of $\bmlambda$-weighted count of $\B$, and that for matroids and intersections of two matroids, this is a ``good'' approximation.

\subsection{Organization}

The rest of the paper is organized as follows. In \cref{sec:preliminaries}, we go over the necessary preliminaries from matroid theory, convex entropy programs, linear algebra and develop some of the theory of log-concave polynomials. In \cref{sec:hodge} we review some tools from combinatorial Hodge theory and derive a connection with the application of differential operators to the basis generating polynomial of a matroid. Then, in \cref{sec:logconcavity} we prove that the generating polynomial of \emph{any} matroid is log-concave. In \cref{sec:entropylogconcave} we prove that one can approximate the entropy of a log-concave distribution from its marginals. In \cref{sec:maxentropymatroidbases} we prove \cref{thm:mainmatroid} and in \cref{sec:matroidintersection} we prove \cref{thm:matroidintersection}. Finally, \cref{sec:weighted} contains the proof of \cref{thm:weighted}.

\subsection{Acknowledgements} We would like to thank Matt Baker, June Huh, and Josephine Yu for their comments on an early draft of this paper. Part of this work was done while the first and third authors were visiting the Simons Institute for the Theory of Computing. It was partially supported by the DIMACS/Simons Collaboration on Bridging Continuous and Discrete Optimization through NSF grant CCF-1740425. Shayan Oveis Gharan is supported by the NSF grant CCF-1552097 and ONR-YIP grant N00014-17-1-2429. Cynthia Vinzant was partially supported by the National Science Foundation grant DMS-1620014.

	\section{Preliminaries}

\label{sec:preliminaries}

First, let us establish some notational conventions. Unless otherwise specified, all $\log$s are in base $e$. 
We often use bold letters to emphasize symbols representing a vector, array, or matrix of numbers or variables.  All vectors are assumed to be column vectors. For two vectors $\v, \w\in \R^n$, we denote the standard dot product between $\v$ and $\w$ by $\dotprod{\v, \w}=\v^\intercal \w$. We use $\R_{>0}$ and $\R_{\geq 0}$ to denote the set of positive and nonnegative real numbers, respectively, and  $[n]$ to denote $\set{1,\dots,n}$. When $n$ is clear from context, for a set $S\subseteq [n]$, we let $\1_S\in \R^n$ denote the indicator vector of $S$ with $(\1_S)_i=1$ if $i\in S$, and is $0$ otherwise. Abusing notation, we let $\1_i=\1_{\set{i}}$ be the $i$-th element of the standard basis for $\R^n$. For vectors $\z, \p\in \R^n$ we use $\z^\p$ to denote $\prod_{i=1}^n z_i^{p_i}$. Similarly we let $e^\p$ denote $\prod_{i=1}^n e^{p_i}=e^{\sum_{i=1}^n p_i}$. For a vector $\z\in \R^n$ and a set $S\subseteq [n]$, we let $\z^S$ denote $\prod_{i\in S} z_i$.

We use $\partial_{z_i}$ or $\partial_i$ to denote the partial differential operator $\partial/\partial z_i$. Given $\v=(v_1,\dots,v_n)\in \R^n$, we will use $\partial_{\v}$ to denote the differential operator $\sum_{i=1}^n v_i \partial_i$. For a collection of vectors $\v_1,\dots,\v_k\in \R^n$, we use $D_{\v_1,\dots,\v_k}$ to denote $\prod_{i=1}^k \partial_{\v_i}$. For a matrix $\V=\brackets{\v_1\given \dots\given \v_k}\in \R^{n\times k}$, viewed as a collection of column vectors, we use $D_{\V}$ to denote $D_{\v_1,\dots,\v_k}$, i.e.,
\[ D_{\V}=\prod_{j=1}^k\sum_{i=1}^n V_{ij}\partial_i. \]
We denote the gradient of a function or polynomial $g$ by $\nabla g$ and the Hessian of $g$ by $\Hess{g}$. For a polynomial $g\in \R[z_1, \hdots, z_n]$ and a number $c\in \R$ we write $\eval{g(z_1,\dots,z_n)}_{z_1=c}$ to denote the restricted polynomial in $z_2,\dots,z_n$ obtained by setting $z_1=c$. For a polynomial $g(z_1,\dots,z_n)$, we define $\supp(g)\subset \Z_{\geq 0}^n$ as the set of vectors $\bmkappa=(\kappa_1,\dots,\kappa_n)\in \Z_{\geq 0}^n$ such that the coefficient of the monomial $\prod_i z_i^{\kappa_i}$ is nonzero. The convex hull of $\supp(g)$ is known as the Newton polytope of $g$. We call a polynomial multiaffine if the degree of each variable is at most one.

\subsection{Log-Concave Polynomials}

We say that a polynomial $g\in \R[z_1, \dots, z_n]$ with nonnegative coefficients is \emph{log-concave} if $\log(g)$ is a concave function over $\R_{>0}^n$. For simplicity, we consider the zero polynomial to be log-concave. 
Equivalently, $g$ is log-concave if for any two points $\v, \w\in \R_{\geq 0}^n$ and $\lambda\in [0,1]$ we have
\[ g(\lambda \v+(1-\lambda)\w)\geq g(\v)^\lambda \cdot g(\w)^{1-\lambda}. \]
A polynomial $g$  with nonnegative coefficients is log-concave if and only if the Hessian of $\log(g)$ is negative semidefinite at all points of $\R_{\geq 0}^n$ where it is defined. In particular, the set of log-concave polynomials is closed in the Euclidean space of polynomials of degree $\leq d$. To see this, note that
the nonnegativity of the coefficients of $g$ ensure that if $g\neq 0$, then $\log(g)$ is defined at all points of $\R_{> 0}^n$. 
Furthermore the entries of the Hessian of $\log(g)$ at a point in $\R_{>0}^n$ are continuous functions in the coefficients of $g$. The closed-ness of the set of log-concave polynomials then follows from the closed-ness of the cone of negative semidefinite matrices. The basic operation that preserves log-concavity is an affine transformation of the inputs.
\begin{lemma}
	\label{prop:affinetransform}
Let $g\in \R[z_1,\dots, z_n]$ be a log-concave polynomial with nonnegative coefficients. 
Then for any affine transformation $T:\R^m\to \R^n$ with $T(\R_{\geq 0}^m)\subseteq \R_{\geq 0}^n$, 
$g(T(y_1,\dots,y_m)) \in \R[y_1,\dots,y_m]$  has nonnegative coefficients and is log-concave.
\end{lemma}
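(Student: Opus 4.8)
The plan is to reduce both assertions to the ``inequality form'' of log-concavity recorded just before the statement, namely that $g(\lambda\v+(1-\lambda)\w)\geq g(\v)^{\lambda}g(\w)^{1-\lambda}$ for all $\v,\w\in\R_{\geq 0}^n$ and $\lambda\in[0,1]$; this closed-orthant formulation is what keeps the argument clean. First I would pin down the shape of $T$: write $T(\y)=\A\y+\b$ with $\A\in\R^{n\times m}$ and $\b\in\R^n$. Evaluating at the origin gives $\b=T(\bm{0})\in\R_{\geq 0}^n$, so $\b$ has nonnegative entries; evaluating at $\y=t\1_j$ for $t>0$ and letting $t\to\infty$ forces the $j$-th column $\A\1_j$ to be entrywise nonnegative, since otherwise some coordinate of $T(t\1_j)=t\,\A\1_j+\b$ would be negative for large $t$. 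Hence every entry $\A_{ij}$ is nonnegative, so each coordinate map $z_i\mapsto(\A\y+\b)_i=\sum_{j}\A_{ij}y_j+b_i$ is a polynomial in $\y$ with nonnegative coefficients. Since polynomials with nonnegative coefficients are closed under sums and products, substituting these coordinate maps into $g$ --- which has nonnegative coefficients by hypothesis --- yields a polynomial $g(T(\y))$ with nonnegative coefficients, settling the first assertion.

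For log-concavity, fix $\v,\w\in\R_{\geq 0}^m$ and $\lambda\in[0,1]$. Since $T$ is affine, $T(\lambda\v+(1-\lambda)\w)=\lambda\,T(\v)+(1-\lambda)\,T(\w)$, and by the hypothesis $T(\R_{\geq 0}^m)\subseteq\R_{\geq 0}^n$ both $T(\v)$ and $T(\w)$ lie in $\R_{\geq 0}^n$. Applying the inequality form of log-concavity of $g$ to the pair $T(\v),T(\w)$ then gives
\[ g\bigl(T(\lambda\v+(1-\lambda)\w)\bigr)=g\bigl(\lambda\,T(\v)+(1-\lambda)\,T(\w)\bigr)\ \geq\ g(T(\v))^{\lambda}\,g(T(\w))^{1-\lambda}, \]
which is precisely the inequality form of log-concavity for $g\circ T$ on $\R_{\geq 0}^m$. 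Combined with the nonnegativity of the coefficients established above, this shows $g(T(y_1,\dots,y_m))$ is log-concave.

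I do not expect a genuine obstacle; the one subtlety worth flagging is exactly what this route sidesteps. A priori $\log(g)$ is only known to be concave on the \emph{open} orthant $\R_{>0}^n$, whereas $T$ may carry a point of $\R_{>0}^m$ onto the boundary $\partial\R_{\geq 0}^n$ --- this occurs precisely when some coordinate $(\A\y+\b)_i$ is identically zero, in which case $g\circ T$ is a restriction of $g$ to a coordinate face. Phrasing everything through the closed-orthant inequality characterization, which is equivalent to concavity of $\log g$ on $\R_{>0}^n$ by continuity of the polynomial $g$, makes this harmless. Alternatively one could invoke the fact that a concave function precomposed with an affine map is concave and then pass from $\R_{>0}^m$ to its closure by continuity of $g\circ T$, but the direct argument above is shorter and needs no limiting step.
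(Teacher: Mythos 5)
Your proof is correct and follows essentially the same route as the paper's: write $T(\y)=\A\y+\b$, deduce nonnegativity of $\A$ and $\b$ (the paper leaves this as ``one can check,'' which you fill in), and then verify the closed-orthant inequality form of log-concavity using affineness of $T$. Your closing remark about why the inequality formulation on $\R_{\geq 0}^n$ sidesteps boundary issues is a reasonable clarification but not a departure from the paper's argument.
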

\begin{proof}
We can write $T$ as  $T(\y)=\A\y+\b$ for some $\A\in \R^{n\times m}$ and $\b\in \R^n$, and one can check check that $T(\R_{\geq 0}^m)\subseteq \R_{\geq 0}^n$ if and only if $\A$ and $\b$ have nonnegative entries. It follows that $g(T(y_1,\dots,y_m))$ has nonnegative coefficients.
	
	To check log-concavity, note that $T$ being affine implies that for two points $\v, \w \in \R_{\geq 0}^m$ and $\lambda\in [0,1]$, we have $T(\lambda \v+(1-\lambda)\w)=\lambda T(\v)+(1-\lambda)T(\w)$. It follows that
	\begin{align*} 
		g(T(\lambda \v+(1-\lambda) \w))&=g(\lambda T(\v)+(1-\lambda)T(\w))\\
		&\geq g(T(\v))^\lambda\cdot g(T(\w))^{1-\lambda},
	\end{align*}
	where the inequality follows from log-concavity of $g$.
\end{proof}

\begin{proposition}\label{prop:preservers}
	The following operations preserve log-concavity:
	\begin{parts}
		\item \label{pres:permutation} Permutation: $g\mapsto g(z_{\pi(1)}, \dots, z_{\pi(n)})$ for $\pi\in S_n$.
		\item \label{pres:specialization} Specialization: $g\mapsto g(a,z_2, \dots, z_n)$, where $a\in \R_{\geq 0}$.
		\item \label{pres:scaling} Scaling: $g\mapsto  c\cdot g(\lambda_1 z_1, \dots,\lambda_n z_n)$, where $c, \lambda_1, \dots, \lambda_n\in \R_{\geq 0}$.
		\item \label{pres:expansion} Expansion: $g(z_1,\dots,z_n) \mapsto g(y_1+y_2+\dots+y_{m}, z_2,\dots,z_n) \in \R[y_1, \dots, y_m, z_2, \dots, z_n]$.
		\item \label{pres:multiplication} Multiplication: $g, h\mapsto g\cdot h$ where $g, h$ are log-concave.
	\end{parts}
\end{proposition}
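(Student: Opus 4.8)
The plan is to obtain parts \cref{pres:permutation}--\cref{pres:expansion} as direct instances of \cref{prop:affinetransform} by exhibiting each operation as a substitution $g \mapsto g(T(\cdot))$ for an affine map $T$ that sends the nonnegative orthant into the nonnegative orthant, and to dispose of part \cref{pres:multiplication} straight from the definition using additivity of $\log$. The overall strategy is thus entirely bookkeeping: identify the right $T$ in each case and check it has nonnegative data.

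Concretely, for \cref{pres:permutation} I would take $T\colon\R^n\to\R^n$ with $(T(\z))_i = z_{\pi(i)}$, which is linear with a $\{0,1\}$-valued (permutation) matrix, so $T(\R_{\geq 0}^n)\subseteq\R_{\geq 0}^n$ and \cref{prop:affinetransform} applies. For \cref{pres:specialization} I would take the affine map $T\colon\R^{n-1}\to\R^n$, $T(z_2,\dots,z_n)=(a,z_2,\dots,z_n)$; since $a\in\R_{\geq 0}$ we have $T(\R_{\geq 0}^{n-1})\subseteq\R_{\geq 0}^n$. For \cref{pres:expansion} I would take the linear map $T\colon\R^{m+n-1}\to\R^n$, $T(y_1,\dots,y_m,z_2,\dots,z_n)=(y_1+\dots+y_m,z_2,\dots,z_n)$, again given by a $\{0,1\}$-matrix. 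For \cref{pres:scaling}, the map $(z_1,\dots,z_n)\mapsto(\lambda_1 z_1,\dots,\lambda_n z_n)$ is linear with a nonnegative diagonal matrix, so \cref{prop:affinetransform} handles the rescaling of the variables; it then remains to note that multiplying by the constant $c\in\R_{\geq 0}$ preserves log-concavity, which is immediate since for $c>0$ we have $\log(c\cdot g)=\log c+\log g$, differing from $\log g$ by a constant, while for $c=0$ the result is the zero polynomial, log-concave by convention. In all four cases the resulting polynomial also has nonnegative coefficients, as guaranteed by \cref{prop:affinetransform} (and, for \cref{pres:scaling}, because $c\geq 0$).

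For \cref{pres:multiplication} I would argue directly: if $g$ or $h$ is identically zero then $gh=0$ is log-concave by convention; otherwise, since $g$ and $h$ have nonnegative coefficients and are not identically zero, both are strictly positive on $\R_{>0}^n$, so $\log(gh)=\log(g)+\log(h)$ is well-defined on $\R_{>0}^n$ and is concave there as a sum of concave functions, and $gh$ has nonnegative coefficients. I do not anticipate any genuine obstacle here; the only points that require a moment of care are the degenerate cases in which a substitution or product produces the zero polynomial (covered by the stated convention) and the verification that each $T$ really maps $\R_{\geq 0}$ into $\R_{\geq 0}$ so that \cref{prop:affinetransform} is legitimately invoked, together with the elementary fact that constant rescaling and pointwise multiplication interact additively with $\log$.
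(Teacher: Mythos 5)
Your proposal matches the paper's proof exactly: parts 1--4 via \cref{prop:affinetransform} with the appropriate affine map (plus the elementary constant-rescaling observation for \cref{pres:scaling}), and \cref{pres:multiplication} from additivity of $\log$. The paper simply leaves the choices of $T$ implicit, whereas you write them out; the extra care with the degenerate zero cases is fine and consistent with the paper's conventions.
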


\begin{proof} \Cref{pres:permutation,pres:specialization,pres:scaling,pres:expansion}  follow by choosing an appropriate affine transformation $T$ and applying \cref{prop:affinetransform}. For \cref{pres:scaling} we also need to use the elementary fact that scaling by $c\geq 0$ preserves log-concavity.
For \cref{pres:multiplication}, note that $\log (g\cdot h) = \log g + \log h$. Since the sum of any two concave functions is concave, $g\cdot h$ is log-concave.
\end{proof}

In general log-concavity is not preserved under taking derivatives.  For example, $g(z) = z^4/4 + z$ is log-concave on $\R_{>0}$, but $h = \partial g/\partial z = z^3+1$ is not: 
\[
	\frac{\partial ^2 \log(g)}{\partial z}=\frac{-4(z^3-2)^2}{z^2(z^3+4)^2}\leq0 
	\quad\text{ and }\quad
\eval{\frac{\partial^2 \log(h)}{\partial z}}_{z=1} =  \eval{\frac{-3z (z^3-2)}{(z^3+1)^2}}_{z=1}=3/4.
\]
In \cref{sec:logconcavity}, we remedy this by considering completely log-concave polynomials, for which log-concavity is preserved under differentiation. 

\subsection{Matroids}
\label{prelim:matroid}

Let $M=(E,\I)$ be a matroid, as defined in \cref{sec:introduction}. For any set $S\subseteq E$, the rank of $S$, denoted $\rank(S)$, is the size of the largest subset $A\subseteq S$ such that $A\in \I$. The rank of the matroid is the rank of the set $E$, and a set $B\subseteq E$ is a basis of $M$ if and only if $B\in \I$ and $\rank(B) = \rank(E)$.

We say a matroid $M$ is \emph{simple} if it has no loops and no parallel elements, meaning that for all pairs $i\neq j\in E$, $\rank(\set{i, j})=2$. The \emph{dual} matroid of $M$ is the matroid  $M^*=(E,\I^*)$ on the same set of elements $E$ whose bases are the complements $E\setminus B$ of bases $B$ of $M$ (and whose independent are subsets of those bases). Given two matroids, $M_1=(E_1,\I_1)$ and $M_2=(E_2,\I_2)$, we can also define their \emph{direct sum} to be the matroid
\[
	M_1 \oplus M_2 = \tuple*{E_1\sqcup E_2, \set*{ I_1 \sqcup I_2 \given I_1\in \I_1, I_2 \in \I_2}},
\]
where $\sqcup$ denotes disjoint union. When $M_1, M_2$ have disjoint ground sets, this notion coincides with that of matroid union.

The matroid base polytope $\P_M \subset \R^{E}$ of a matroid $M$ is the convex hull of the indicator vectors of its bases. If $M$ has rank $r$, it can also be defined by the following system of inequalities: 
\begin{equation}
	\P_M=\conv\set{\1_B\given B\in \B_M}=\set*{\p\in \R^E\given
	\begin{array}{lr}
 		\dotprod{\1_{E}, \p}=\sum_{i\in E} p_i = r, \\
 		\dotprod{\1_S, \p}=\sum_{i\in S} p_i \leq \rank(S) & \forall S\subseteq E, \\
 		\dotprod{\1_i, \p}=p_i\geq 0 & \forall i\in E.
	\end{array}}.
	\label{eq:matroidpolytope}	
\end{equation}

While the above description requires exponentially many constraints (one for each set $S\subseteq E$), because of the matroidal structure, one can optimize a linear function over the vertices of $\P_M$ in polynomial time, assuming access to an independence oracle, and thereby construct a separation oracle for the polytope \cite{Cun84}. Using this, one can minimize any convex function over the matroid base polytope in polynomial time. See \textcite{BV04} for background on convex optimization.


For $\bmlambda=(\lambda_1,\dots,\lambda_n) \in \R_{\geq 0}^n$, we define $\bmlambda$-weight of a subset of $E$ 
by taking the product of weights of elements in the subset.  That is, for a set $S\subseteq E$ we use
\begin{equation}\label{eq:weightltM}
	\bmlambda^S=\prod_{i\in S} \lambda_i
\end{equation}
to denote $\bmlambda$-weight of the set $S$. Note that if $g_M$ is the basis generating polynomial of a matroid $M$, as in \cref{eq:gM}, the $\bmlambda$-weight of a basis is the coefficient of the corresponding monomial in $g_M$ after scaling the variables by $\lambda_1,\dots,\lambda_n$, i.e. 
\[
	g_M(\lambda_1 z_1,\dots,\lambda_n z_n)=\sum_{B\in \B_M} \bmlambda^B \prod_{i\in B} z_i.
\]

\subsection{Linear Algebra}

A symmetric matrix $A\in\R^{n\times n}$ is positive semidefinite (PSD), denoted $A\succeq 0$, if for all $v\in\R^n$,
\[ v^\intercal Av \geq 0,\]
and it is positive definite (PD) if the above inequality is strict for all $v\neq 0$. Similarly, $A$ is negative semidefinite (NSD), denoted $A\preceq 0$, if $v^\intercal Av \leq 0$ for all $v\in\R^n$, and negative definite (ND) if $v^\intercal Av <0$ for $v \neq 0$. Equivalently, a real symmetric matrix is PSD (PD, NSD, ND) if its eigenvalues are nonnegative (positive, nonpositive, negative), respectively. 

Let $A\in\R^{n\times n}$ be a symmetric matrix with eigenvalues $\lambda_1 \geq \dots\geq\lambda_n$ and corresponding orthonormal eigenvectors $v_1,\dots,v_n$. Then, we can write $A$ as
\[ A=\sum_{i=1}^n \lambda_i v_iv_i^\intercal.\]

We say that a sequence of real numbers 
 $\beta_{1}\geq \dots\geq \beta_{n}$ \emph{interlaces} 
 $\alpha_{1}\geq \alpha_{2} \geq \dots\geq \alpha_n$ if 
 \[ \alpha_1\geq \beta_{1}\geq \alpha_{2} \geq \dots \geq \beta_{n-1} \geq \alpha_n \geq \beta_n.\]
The following useful theorem is known as Cauchy's interlacing theorem:
\begin{theorem}[{Cauchy's Interlacing Theorem I  \cite[see][Corollary~4.3.9]{HJ13}}]\label{thm:CauchyInterlacing}
	For a symmetric matrix $A\in\R^{n\times n}$ and vector $v\in \R^n$, the eigenvalues of $A$ interlace the eigenvalues of $A+vv^\intercal$.
\end{theorem}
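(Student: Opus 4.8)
The plan is to obtain both halves of the interlacing chain from the Courant--Fischer min-max characterization of eigenvalues. Write $\beta_1 \geq \dots \geq \beta_n$ for the eigenvalues of $A$ and $\alpha_1 \geq \dots \geq \alpha_n$ for those of $B = A + vv^\intercal$; unwinding the definition of ``interlaces'' used in this excerpt, it suffices to prove
\[ \alpha_1 \geq \beta_1 \geq \alpha_2 \geq \beta_2 \geq \dots \geq \alpha_n \geq \beta_n. \]
Recall that for a real symmetric $C\in\R^{n\times n}$ with eigenvalues $\gamma_1 \geq \dots \geq \gamma_n$ one has, for each $k$,
\[ \gamma_k = \max_{\dim S = k}\ \min_{0\neq x\in S} \frac{x^\intercal C x}{x^\intercal x} = \min_{\dim S = n-k+1}\ \max_{0\neq x\in S} \frac{x^\intercal C x}{x^\intercal x}. \]

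First I would establish $\alpha_k \geq \beta_k$ for every $k$. Since $vv^\intercal \succeq 0$ we have $x^\intercal B x = x^\intercal A x + (v^\intercal x)^2 \geq x^\intercal A x$ for all $x$, so for any fixed subspace $S$ the inner minimum of the Rayleigh quotient of $B$ is at least that of $A$; taking the maximum over all $k$-dimensional $S$ in the max-min form gives $\alpha_k \geq \beta_k$. Next I would establish $\alpha_k \leq \beta_{k-1}$ for $k = 2, \dots, n$. Let $U$ be a subspace of dimension $n-k+2$ attaining the min-max value $\beta_{k-1}$ for $A$, and let $v^\perp = \{x\in\R^n : v^\intercal x = 0\}$, which has dimension $n-1$. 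Then $\dim(U\cap v^\perp) \geq (n-k+2)+(n-1)-n = n-k+1$, so $U\cap v^\perp$ contains a subspace $S$ of dimension exactly $n-k+1$; on $S$ the rank-one term vanishes, so $x^\intercal B x = x^\intercal A x$ there, and using the min-max form of $\alpha_k$ over $(n-k+1)$-dimensional subspaces,
\[ \alpha_k \leq \max_{0\neq x\in S}\frac{x^\intercal B x}{x^\intercal x} = \max_{0\neq x\in S}\frac{x^\intercal A x}{x^\intercal x} \leq \max_{0\neq x\in U}\frac{x^\intercal A x}{x^\intercal x} = \beta_{k-1}. \]
Concatenating the two families of inequalities gives the displayed chain, and hence the theorem.

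There is no genuine obstacle; the only step that needs a moment's care is the dimension count $\dim(U\cap v^\perp)\geq n-k+1$ together with the observation that passing to $v^\perp$ is exactly what annihilates the correction $vv^\intercal$, so that on $S$ the quadratic forms of $A$ and $B$ literally coincide. An alternative route is through the secular equation: writing $A = \sum_i \beta_i u_i u_i^\intercal$ in an orthonormal eigenbasis and applying the matrix determinant lemma gives $\det(zI - B) = \det(zI - A)\bigl(1 - \sum_i (v^\intercal u_i)^2/(z-\beta_i)\bigr)$, so the eigenvalues of $B$ are the zeros of $f(z) = 1 - \sum_i (v^\intercal u_i)^2/(z-\beta_i)$; since each residue $(v^\intercal u_i)^2$ is nonnegative, $f$ is increasing on every interval between consecutive poles, which localizes exactly one root of $f$ in each gap determined by the $\beta_i$. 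This version, however, needs an extra continuity or perturbation argument to handle repeated $\beta_i$ and vanishing residues, so I would prefer the Courant--Fischer proof. In any case, since this is a textbook fact, one could simply invoke \textcite{HJ13}.
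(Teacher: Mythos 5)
Your proof is correct. Note that the paper does not prove this statement at all---it is quoted verbatim from Horn and Johnson (Corollary 4.3.9) as a black-box fact---so there is no internal argument to compare against; your Courant--Fischer derivation, with the two min-max/max-min forms and the dimension count $\dim(U\cap v^\perp)\geq n-k+1$ used to kill the rank-one term, is the standard textbook proof and is complete as written.
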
	
The following is an immediate consequence:
\begin{lemma}\label{lem:Cauchy1eigenvalue}
	Let $A\in\R^{n\times n}$ be a symmetric matrix and let $P\in\R^{m\times n}$. If $A$ has at most one positive eigenvalue, then $PAP^\intercal$ has at most one positive eigenvalue.
\end{lemma}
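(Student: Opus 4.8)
The plan is to reduce the statement to a single application of Cauchy's Interlacing Theorem (\cref{thm:CauchyInterlacing}) after writing $A$ as a rank-one perturbation of a negative semidefinite matrix. Concretely, first I would use the spectral decomposition $A=\sum_{i=1}^n\lambda_i v_iv_i^\intercal$ with $\lambda_1\geq\dots\geq\lambda_n$ and orthonormal $v_i$. The hypothesis that $A$ has at most one positive eigenvalue means $\lambda_2,\dots,\lambda_n\leq 0$, so setting $N=\sum_{i=2}^n\lambda_i v_iv_i^\intercal$ we get that $N$ is negative semidefinite and $A=N+\lambda_1 v_1v_1^\intercal$. If $\lambda_1\leq 0$ then $A$ itself is NSD and so is $PAP^\intercal$ (since $x^\intercal PAP^\intercal x=(P^\intercal x)^\intercal A(P^\intercal x)\leq 0$), which has zero positive eigenvalues and we are done; so assume $\lambda_1>0$ and write $A=N+vv^\intercal$ with $v=\sqrt{\lambda_1}\,v_1$.

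Next I would conjugate by $P$: $PAP^\intercal=PNP^\intercal+(Pv)(Pv)^\intercal$. The matrix $PNP^\intercal$ is negative semidefinite by the same computation as above, so all of its eigenvalues are $\leq 0$; in particular its largest eigenvalue $\beta_1$ satisfies $\beta_1\leq 0$. Now apply \cref{thm:CauchyInterlacing} with the symmetric matrix $PNP^\intercal\in\R^{m\times m}$ and the vector $w=Pv\in\R^m$: the eigenvalues of $PNP^\intercal$ interlace those of $PNP^\intercal+ww^\intercal=PAP^\intercal$. Writing $\alpha_1\geq\dots\geq\alpha_m$ for the eigenvalues of $PAP^\intercal$ and $\beta_1\geq\dots\geq\beta_m$ for those of $PNP^\intercal$, interlacing gives in particular $\alpha_2\leq\beta_1\leq 0$, hence $\alpha_2,\dots,\alpha_m\leq 0$. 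Thus $PAP^\intercal$ has at most one positive eigenvalue, namely possibly $\alpha_1$.

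I do not expect a serious obstacle here; the only thing to be careful about is the degenerate case $\lambda_1\le 0$ (handled separately above, since then there is no genuine rank-one ``bump'' to feed into the interlacing theorem) and making sure the interlacing inequality is read off in the right direction — the eigenvalues of the \emph{unperturbed} matrix $PNP^\intercal$ interlace those of the \emph{perturbed} matrix $PAP^\intercal$, so it is $\alpha_2\le\beta_1$ (and not $\alpha_1\le\beta_1$) that we extract. Everything else is a direct substitution.
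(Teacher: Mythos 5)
Your proof is correct and follows essentially the same route as the paper: write $A$ as a negative semidefinite matrix plus a rank-one term $vv^\intercal$, conjugate by $P$, and apply Cauchy's interlacing theorem (\cref{thm:CauchyInterlacing}) to conclude $\alpha_2\leq\beta_1\leq 0$. The only difference is that you make the decomposition explicit via the spectral theorem and treat the case $\lambda_1\leq 0$ separately, whereas the paper absorbs it by allowing $v=0$.
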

\begin{proof}
	Since $A$ has at most one positive eigenvalue, we can write $A=B+vv^\intercal$ for some vector $v\in \R^n$ and some negative semidefinite matrix $B$. Then $PAP^\intercal = PBP^\intercal + Pvv^\intercal P^\intercal$. First, observe that $PBP^\intercal\preceq 0$, since for $x\in\R^m$, $x^\intercal PB^\intercal P x = (Px)^\intercal B (Px)\leq 0$. Second, let $w = Pv\in \R^m$. Then $Pvv^\intercal P^\intercal=ww^\intercal$ and by \cref{thm:CauchyInterlacing}, the eigenvalues of $PBP^\intercal$ interlace the eigenvalues of $PBP^\intercal+(Pv)(Pv)^\intercal$. Since all eigenvalues of $PBP^\intercal$ are nonpositive, $PAP^\intercal=PBP^\intercal+ww^\intercal$ has at most one positive eigenvalue. 
\end{proof}

Another version of Cauchy's Interlacing Theorem can be stated as follows:

\begin{theorem}[{Cauchy's Interlacing Theorem II, \cite[see][Theorem~4.3.17]{HJ13}}]\label{thm:CauchyInterlacingII}
	Let  $A\in\R^{n\times n}$ be symmetric and $B \in \R^{(n-1)\times (n-1)}$ a principal submatrix of $A$. Then the eigenvalues of $B$ interlace the eigenvalues of $A$. That is, 
	\[ \alpha_1\geq \beta_{1}\geq \alpha_{2} \geq \dots \geq \beta_{n-1} \geq \alpha_n, \]
	where  $\alpha_{1},\dots, \alpha_n$ and $\beta_{1},\dots,\beta_{n-1}$ are the eigenvalues of $A$ and $B$, respectively. 
\end{theorem}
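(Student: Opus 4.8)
This is the classical Cauchy interlacing theorem for a principal submatrix, and the plan is to derive it from the Courant--Fischer min--max characterization of eigenvalues. First I would reduce to a convenient form: conjugating $A$ by a permutation matrix changes neither the spectrum of $A$ nor that of the principal submatrix, so I may assume $B$ is the leading $(n-1)\times(n-1)$ block of $A$. Identifying $\R^{n-1}$ with the coordinate hyperplane $W=\set{x\in\R^n\given x_n=0}$, every $y\in\R^{n-1}$ corresponds to some $x\in W$ with $y^\intercal By/(y^\intercal y)=x^\intercal Ax/(x^\intercal x)$; that is, the Rayleigh quotient of $B$ is exactly the Rayleigh quotient of $A$ restricted to $W$.

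The main step then uses both forms of Courant--Fischer. For a symmetric $m\times m$ matrix $C$ with eigenvalues $\gamma_1\geq\dots\geq\gamma_m$ one has
\[ \gamma_k=\max_{\dim V=k}\ \min_{0\neq v\in V}\frac{v^\intercal Cv}{v^\intercal v}=\min_{\dim V=m-k+1}\ \max_{0\neq v\in V}\frac{v^\intercal Cv}{v^\intercal v}. \]
Writing $\beta_k$ in the max--min form, the maximization runs over $k$-dimensional subspaces contained in $W$, which form a subfamily of all $k$-dimensional subspaces of $\R^n$; hence $\beta_k\leq\alpha_k$. Writing $\beta_k$ instead in the min--max form, the minimization runs over $(n-k)$-dimensional subspaces of $W$, a subfamily of all $(n-k)$-dimensional subspaces of $\R^n$, over which the infimum of the same objective is exactly $\alpha_{k+1}$; hence $\beta_k\geq\alpha_{k+1}$. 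Combining the two gives $\alpha_k\geq\beta_k\geq\alpha_{k+1}$ for every $k\in\set{1,\dots,n-1}$, which is the asserted interlacing.

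An alternative route, staying within the excerpt, is to bootstrap from \cref{thm:CauchyInterlacing}: letting $\hat A(t)$ denote $A$ with its $(n,n)$-entry replaced by $t$, the difference $\hat A(t)-\hat A(t')$ has rank one, so the eigenvalues of $A$ interlace those of $\hat A(t)$ for every $t$; and from the Schur-complement identity $\det(\lambda I-\hat A(t))=(\lambda-t)\det(\lambda I-B)-q(\lambda)$, with $q$ of degree at most $n-2$ independent of $t$, one sees that as $t\to+\infty$ exactly $n-1$ eigenvalues of $\hat A(t)$ converge to those of $B$ while the last escapes to $+\infty$; passing to the limit in the interlacing chain finishes the proof. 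There is no real obstacle here — the statement is a textbook fact — so the only thing demanding care is the index-and-dimension bookkeeping in the Courant--Fischer argument (selecting the right one of the two forms for each inequality), or, on the alternative route, the continuity-of-roots argument establishing that $n-1$ eigenvalues of $\hat A(t)$ tend to those of $B$.
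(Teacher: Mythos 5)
The paper offers no proof of this statement: it is quoted verbatim from Horn and Johnson (Theorem~4.3.17) and used as a black box, so there is nothing in the paper to compare against. Your Courant--Fischer argument is correct and complete: the reduction by permutation-conjugation to the leading block, the identification of the Rayleigh quotient of $B$ with that of $A$ restricted to the hyperplane $W$, and the use of the max--min form for $\beta_k\leq\alpha_k$ versus the min--max form for $\beta_k\geq\alpha_{k+1}$ (with the dimension count $(n-1)-k+1=n-k$) are all handled properly. The alternative route via \cref{thm:CauchyInterlacing} is also sound, with one small caveat: as stated in the paper, that theorem covers perturbations of the form $+vv^\intercal$, so the interlacing between $A$ and $\hat A(t)$ in the direction you need holds for $t\geq A_{nn}$ rather than ``for every $t$''; since you only send $t\to+\infty$, this does not affect the argument. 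The limit step does require the continuity-of-roots justification you flag, whereas the Courant--Fischer route avoids it entirely, which is why the latter is the cleaner of the two.
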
	

This has the immediate corollary:

\begin{corollary}
	\label{cor:Cauchy2}
	Let $A\in\R^{n\times n}$ be a symmetric matrix. If there is an $(n-1)$-dimensional linear subspace on which the quadratic form $z\mapsto z^\intercal A z$ is nonpositive, then $A$ has at most one positive eigenvalue.
\end{corollary}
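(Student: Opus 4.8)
The plan is to reduce the statement to Cauchy's Interlacing Theorem II (\cref{thm:CauchyInterlacingII}) by realizing the restriction of the quadratic form $z\mapsto z^\intercal A z$ to the given subspace as a genuine principal submatrix of $A$ in a suitably chosen orthonormal basis of $\R^n$. Let $W$ denote the $(n-1)$-dimensional subspace on which the form is nonpositive. First I would fix an \emph{orthonormal} basis $u_1,\dots,u_{n-1}$ of $W$ and extend it to an orthonormal basis $u_1,\dots,u_n$ of $\R^n$, collecting these vectors as the columns of an orthogonal matrix $Q=\brackets{u_1\mid\dots\mid u_n}\in\R^{n\times n}$. Since $Q$ is orthogonal, $Q^\intercal A Q$ is symmetric and has exactly the same eigenvalues as $A$.

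Next I would examine the leading $(n-1)\times(n-1)$ principal submatrix $B$ of $Q^\intercal A Q$, whose entries are $B_{ij}=u_i^\intercal A u_j$. For any $y\in\R^{n-1}$, setting $v=\sum_{i=1}^{n-1} y_i u_i\in W$ gives $y^\intercal B y = v^\intercal A v\leq 0$ by the hypothesis on $W$; hence $B$ is negative semidefinite, so all its eigenvalues $\beta_1\geq\dots\geq\beta_{n-1}$ are nonpositive. Applying \cref{thm:CauchyInterlacingII} to $Q^\intercal A Q$ and its principal submatrix $B$, and writing $\alpha_1\geq\dots\geq\alpha_n$ for the eigenvalues of $A$, the interlacing inequalities give in particular $\alpha_2\leq\beta_1\leq 0$. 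Therefore at most one eigenvalue of $A$, namely $\alpha_1$, can be positive, which is exactly the claim.

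There is essentially no serious obstacle here; the one point that requires a little care is that the hypothesis is about a linear subspace, whereas \cref{thm:CauchyInterlacingII} concerns a principal submatrix. Bridging this gap is precisely what forces us to pick an \emph{orthonormal} basis of $W$ and extend it to an orthonormal basis of $\R^n$: a merely linear basis would produce a congruent matrix $U^\intercal A U$ that is negative semidefinite but whose eigenvalues need not interlace those of $A$, so the orthogonality of $Q$ (hence the eigenvalue-preservation) is the crux of the argument.
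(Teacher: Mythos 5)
Your proof is correct and is essentially the paper's argument: the paper's one-line "after a change of basis" is exactly your orthogonal conjugation by $Q$, after which the top-left principal submatrix is negative semidefinite and \cref{thm:CauchyInterlacingII} gives the conclusion. Your remark that the change of basis must be orthogonal (so that eigenvalues are preserved) is the right point of care, and it is implicit in the paper's wording.
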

\begin{proof}
	After a change of basis, we can assume that the $(n-1)$-dimensional linear subspace on which $A$ is negative semidefinite is spanned by coordinate vectors $e_1, \dots, e_{n-1}$. Then the top-left principal minor of $A$ is negative semidefinite, so by \cref{thm:CauchyInterlacingII}, $A$ has at most one positive eigenvalue. 
\end{proof}

We will also need the following lemma inspired by arguments of \textcite{HW17}. 

\begin{lemma}
	\label{lem:almostNSD}
	Let $A\in \R^{n\times n}$ be a real symmetric matrix with nonnegative entries and at most one positive eigenvalue. Then for every $v\in \R_{\geq 0}^n$, the $n\times n$ matrix 
	\[
		(v^\intercal A v) \cdot A - t (Av)(Av)^\intercal
	\]
	is negative semidefinite for all $t\geq 1$. 
\end{lemma}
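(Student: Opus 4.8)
The plan is to first reduce to the case $t=1$. Since $(Av)(Av)^\intercal\succeq 0$, once we show $(v^\intercal A v)\,A-(Av)(Av)^\intercal\preceq 0$ we obtain, for every $t\ge 1$,
\[ (v^\intercal A v)\,A-t\,(Av)(Av)^\intercal \;=\; \bigl[(v^\intercal A v)\,A-(Av)(Av)^\intercal\bigr]-(t-1)(Av)(Av)^\intercal \;\preceq\; 0. \]
Writing out the quadratic form $x\mapsto x^\intercal\bigl[(v^\intercal A v)A-(Av)(Av)^\intercal\bigr]x$ and using $x^\intercal (Av)(Av)^\intercal x=(x^\intercal A v)^2$ (here $A=A^\intercal$), the $t=1$ statement is equivalent to
\[ (v^\intercal A v)(x^\intercal A x) \;\le\; (x^\intercal A v)^2 \qquad\text{for all } x\in\R^n. \qquad (\star) \]
Note that $v^\intercal A v=\sum_{i,j}A_{ij}v_iv_j\ge 0$ because $A$ has nonnegative entries and $v\in\R_{\ge 0}^n$; this is the only place the sign hypotheses are used.

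The heart of the proof is $(\star)$, which is a ``reverse Cauchy--Schwarz'' inequality for the (possibly indefinite) form $x\mapsto x^\intercal A x$, made available by the hypothesis that $A$ has at most one positive eigenvalue. I would argue by cases on the sign of $x^\intercal A x$. If $x^\intercal A x\le 0$, then $(\star)$ is immediate, since its left-hand side is $\le 0$ and its right-hand side is $\ge 0$. If $x^\intercal A x>0$, let $P\in\R^{2\times n}$ be the matrix whose two rows are $x^\intercal$ and $v^\intercal$, so that
\[ P A P^\intercal \;=\; \begin{pmatrix} x^\intercal A x & x^\intercal A v\\ x^\intercal A v & v^\intercal A v\end{pmatrix}. \]
By \cref{lem:Cauchy1eigenvalue} this symmetric $2\times 2$ matrix has at most one positive eigenvalue, and it has at least one, since its $(1,1)$ entry $x^\intercal A x$ is positive and hence it cannot be negative semidefinite. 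Therefore its eigenvalues are $\mu_1>0\ge\mu_2$, so $\det(PAP^\intercal)=\mu_1\mu_2\le 0$, i.e. $(x^\intercal A x)(v^\intercal A v)-(x^\intercal A v)^2\le 0$, which is $(\star)$.

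Combining, $(\star)$ holds for all $x$, hence $(v^\intercal A v)\,A-(Av)(Av)^\intercal\preceq 0$, and the first reduction then yields the claim for all $t\ge 1$. I do not anticipate a genuine obstacle; the argument is short once one recognizes that the desired semidefiniteness is equivalent to $(\star)$ and that $(\star)$ merely says a certain $2\times 2$ determinant is nonpositive. The one subtlety worth flagging is that a symmetric $2\times 2$ matrix with at most one positive eigenvalue need not have nonpositive determinant (it could be negative definite), so the determinant bound cannot be invoked unconditionally --- this is exactly why the case $x^\intercal A x\le 0$ must be peeled off first, after which positivity of the $(1,1)$ entry rules out negative definiteness.
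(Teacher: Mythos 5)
Your proof is correct and follows essentially the same route as the paper: both hinge on applying \cref{lem:Cauchy1eigenvalue} to the $2\times n$ compression $PAP^\intercal$ and reading off that its determinant is nonpositive, which is exactly your inequality $(\star)$. The only differences are cosmetic --- where you split on the sign of $x^\intercal A x$ to rule out the negative-definite case, the paper instead perturbs $v$ into the strictly positive orthant (using closedness of the NSD cone) so that $v^\intercal A v>0$ whenever $A\neq 0$, and your direct subtraction of the PSD matrix $(t-1)(Av)(Av)^\intercal$ replaces the paper's appeal to the cone structure for the reduction to $t=1$.
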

\begin{proof}
	Note that since the set of negative semidefinite matrices is a closed convex cone, it suffices to prove this for $v\in \R_{>0}^n$ and $t=1$. If $A$ is the zero matrix, the claim is immediate. Otherwise, $v^\intercal A v >0$. Let $w\in \R^n$ and consider the $2\times n$ matrix $P$ with rows $v^\intercal$ and $w^\intercal$. Then 
	\[
		P A P^\intercal=
		\begin{bmatrix} 
			v^\intercal Av & v^\intercal Aw \\ 
			w^\intercal Av & w^\intercal Aw
		\end{bmatrix}.
	\]
	By \cref{lem:Cauchy1eigenvalue}, $P A P^\intercal$ has at most one positive eigenvalue. On the other hand, the diagonal entry $v^\intercal Av$ is positive, so \cref{thm:CauchyInterlacingII} implies that $P A P^\intercal$ has a positive eigenvalue, meaning that it must have exactly one. It follows that 
	\begin{equation}
		\label{eq:Delta}
		\det(P A P^\intercal)=(v^\intercal A v) \cdot(w^\intercal A w)  - (w^\intercal Av) \cdot (v^\intercal Aw)\leq 0.
	\end{equation}
	Thus $w^\intercal ((v^\intercal A v) \cdot A - (Av)(Av)^\intercal) w \leq 0$ for all $w\in \R^n$. 
\end{proof}

\subsection{Entropy and External Fields}
\label{subsec:MaxEntrop}

For any probability distribution $\mu$ supported on a finite set $\Omega$, its \emph{entropy}, $\H(\mu)$, is defined to be
\[ \H(\mu)=\sum_{\omega\in \Omega} \mu(\omega)\log \frac{1}{\mu(\omega)}. \]
For a number $p\in [0, 1]$, we also use the shorthand
\[\H(p)=p\log \frac{1}{p}+(1-p)\log \frac{1}{1-p}\]
to denote the entropy of the Bernoulli distribution with parameter $p$. See \textcite{Cov06} for background on entropy and its properties. One of the basic facts we will use about entropy is subadditivity.
\begin{fact}
	\label{fact:subadditivity}
	Suppose that $X$ and $Y$ are finitely supported, not necessarily independent, random variables. Let the joint distribution of $(X, Y)$ be $\mu$, and let $\mu_X$ and $\mu_Y$ denote the marginal distributions of $X$ and $Y$ respectively. Then
	\[ \H(\mu)\leq \H(\mu_X)+\H(\mu_Y), \]
with equality if and only if $X$ and $Y$ are independent.
\end{fact}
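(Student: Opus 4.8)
The plan is to reduce the inequality to the nonnegativity of the relative entropy (Kullback--Leibler divergence) between the joint distribution $\mu$ and the product of its marginals $\mu_X\otimes\mu_Y$. Concretely, I would show that
\[
\H(\mu_X)+\H(\mu_Y)-\H(\mu)=\sum_{x,y}\mu(x,y)\,\log\frac{\mu(x,y)}{\mu_X(x)\,\mu_Y(y)},
\]
where the sum runs over pairs $(x,y)$ in the support of $\mu$, and then argue that the right-hand side is always nonnegative, with equality precisely when $X$ and $Y$ are independent.

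For the identity, I would expand each entropy over the joint support: writing $\mu_X(x)=\sum_y\mu(x,y)$ gives $\H(\mu_X)=-\sum_{x,y}\mu(x,y)\log\mu_X(x)$, and similarly for $\H(\mu_Y)$, while $\H(\mu)=-\sum_{x,y}\mu(x,y)\log\mu(x,y)$ directly; subtracting yields the displayed formula. Throughout I use the convention $0\log 0=0$ and sum only over atoms with $\mu(x,y)>0$, so every logarithm is well defined.

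For nonnegativity, I would apply Jensen's inequality to the strictly convex function $t\mapsto-\log t$ with the probability weights $\mu(x,y)$:
\[
\sum_{x,y}\mu(x,y)\,\log\frac{\mu_X(x)\mu_Y(y)}{\mu(x,y)}\;\le\;\log\!\Bigl(\sum_{x,y}\mu_X(x)\,\mu_Y(y)\Bigr)\;\le\;\log 1=0,
\]
the last step because the sum over atoms of $\supp(\mu)$ is at most the full sum $\sum_{x,y}\mu_X(x)\mu_Y(y)=1$. This gives $\H(\mu)\le\H(\mu_X)+\H(\mu_Y)$. For the equality case, strict convexity of $-\log$ forces the ratio $\mu_X(x)\mu_Y(y)/\mu(x,y)$ to be constant over $\supp(\mu)$; since both $\mu$ and $\mu_X\otimes\mu_Y$ sum to $1$ over the full index set, one then checks that this constant must be $1$ and hence $\mu(x,y)=\mu_X(x)\mu_Y(y)$ for \emph{all} $x,y$, i.e.\ $X$ and $Y$ are independent.

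This argument has no genuine obstacle; the only points that require a little care are the bookkeeping of zero-probability atoms (handled by the $0\log 0=0$ convention and restricting sums to $\supp(\mu)$) and making sure the Jensen equality condition is translated into full independence rather than a weaker condition on the support only. An alternative to the self-contained argument is simply to cite a standard reference such as \textcite{Cov06}.
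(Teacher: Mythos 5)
Your proof is correct. Note that the paper does not actually prove this statement: it is labeled a \emph{Fact} and deferred to the standard literature (the surrounding text points to \textcite{Cov06}). Your self-contained argument --- rewriting $\H(\mu_X)+\H(\mu_Y)-\H(\mu)$ as the relative entropy of $\mu$ with respect to $\mu_X\otimes\mu_Y$ and invoking Jensen for $-\log$ --- is exactly the textbook proof being referenced, and your handling of the equality case (constant ratio on the support forced to equal $1$, hence independence on and off the support) is careful and complete.
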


Let $\mu:2^{[n]}\to\R_{\geq 0}$ be a nonnegative function; we say $\mu$ is a \emph{probability distribution} if $\sum_{S\subseteq [n]} \mu(S) = 1$. The \emph{support} of $\mu$, denoted $\supp(\mu)$, is the collection of $S\subseteq [n]$ for which $\mu(S)\neq 0$, and the \emph{Newton polytope} $\P_\mu \subset \R^n$ of $\mu$ is the convex hull of the indicator vectors in its support, i.e., $\P_\mu=\conv\set{\1_S\given S\in \supp(\mu)}$. The entropy of $\mu$ equals \[ \H(\mu)=\sum_{S\in\supp(\mu)} \mu(S) \log \frac1{\mu(S)}.\]

To use entropy for approximate counting, we use the following fundamental fact: 
\begin{proposition}\label{prop:EntropyCounting}
	If $u:2^{[n]}\to\R_{\geq 0}$ is the uniform distribution over $S \in \supp(u)$, then $\H(u)$ equals the $\log$ of the number of elements in the support of $u$ and this is an upper bound for the entropy of any distribution $\mu$ with $\supp(\mu)\subseteq \supp(u)$. That is, for any distribution $\mu:2^{[n]}\to\R_{\geq 0}$,   
	\[ \H(\mu)\leq \log\parens*{\card{\supp(\mu)}},\]
	with equality when $\mu$ is the uniform distribution over its support. 
\end{proposition}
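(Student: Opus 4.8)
The plan is to first verify the equality $\H(u) = \log\card{\supp(u)}$ by direct substitution, and then derive the inequality from the classical fact that, among all distributions on a given finite set, the uniform one has the largest entropy. For the equality: if $u$ is uniform over its support, then $u(S) = 1/N$ for every $S\in\supp(u)$, where $N \coloneqq \card{\supp(u)}$, and so $\H(u) = \sum_{S\in\supp(u)} \tfrac1N\log N = \log N$.

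For the inequality, I would first reduce to the case where $u$ is uniform on $\supp(\mu)$: since $\supp(\mu)\subseteq\supp(\mu)$, it suffices to show $\H(\mu)\leq\log\card{\supp(\mu)}$, and the more general bound $\H(\mu)\leq\log\card{\supp(u)}$ for any $u$ with $\supp(\mu)\subseteq\supp(u)$ then follows by monotonicity of $\log$. To prove $\H(\mu)\leq\log\card{\supp(\mu)}$, write $\H(\mu) = \sum_{S\in\supp(\mu)}\mu(S)\log\tfrac1{\mu(S)} = \Ex[\big]{\log\tfrac1{\mu(S)}}$ for $S\sim\mu$, and apply Jensen's inequality using concavity of $\log$:
\[ \H(\mu) \;\leq\; \log\Ex[\Big]{\tfrac1{\mu(S)}} \;=\; \log\sum_{S\in\supp(\mu)}\mu(S)\cdot\tfrac1{\mu(S)} \;=\; \log\card{\supp(\mu)}. \]
Equivalently, one can invoke nonnegativity of the Kullback--Leibler divergence $D(\mu\,\|\,u)$ with $u$ uniform on $\supp(\mu)$, which expands to exactly $\log\card{\supp(\mu)} - \H(\mu)\geq 0$; see \textcite{Cov06}.

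For the equality case, note that $\log$ is strictly concave, so equality in Jensen's inequality holds if and only if $1/\mu(S)$ is constant over $S\in\supp(\mu)$, i.e.\ $\mu$ is uniform on its support; the first paragraph confirms that this value is attained. There is no serious obstacle here --- this is a textbook fact --- and the only care needed is bookkeeping with supports, making sure every term $\mu(S)\log\tfrac1{\mu(S)}$ is summed only over $S$ with $\mu(S)>0$, so that all quantities are well-defined and the convention for the zero distribution plays no role.
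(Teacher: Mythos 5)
Your proof is correct. The paper states \cref{prop:EntropyCounting} without proof (deferring to the standard literature, e.g.\ \textcite{Cov06}), and your argument --- direct computation for the uniform case, Jensen's inequality applied to $\log(1/\mu(S))$ for the upper bound, and strict concavity of $\log$ for the equality case --- is exactly the canonical textbook derivation the authors are implicitly invoking.
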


We define the generating polynomial of $\mu$ to be the multiaffine polynomial 
\[ g_{\mu}(z_1, \dots, z_n)=\sum_{S\subseteq [n]} \mu(S) \prod_{i\in S} z_i. \]
The nonnegative function or probability distribution $\mu$ is said to be \emph{log-concave} if its generating polynomial $g_{\mu}$ is log-concave as a function on the positive orthant. The marginal probability of an element $i$, $\mu_i$, is the probability that $i$ is in a random sample from $\mu$,
\[ \mu_i = \PrX{S\sim\mu}{i\in S} = \sum_{S \ni i} \mu(S) =  \eval{\partial_{z_i}g_\mu(z_1,\dots,z_n)}_{z_1=\dots=z_n=1}.\]

For a collection of positive numbers $\bmlambda=(\lambda_1,\dots,\lambda_n)$, the $\bmlambda$-external field applied to $\mu$ is a 
probability distribution $\ef{\bmlambda}{\mu}:2^{[n]}\rightarrow \R_{\geq 0}$ where for every $S$,
\begin{equation}\label{eq:ExtField}
	\PrX{\ef{\bmlambda}{\mu}}{S}\propto\bmlambda^S\cdot \mu(S)=\parens*{\prod_{i\in S} \lambda_i} \cdot \mu(S).
\end{equation}
As with matroid weights, we note that 
\[g_{\ef{\bmlambda}{\mu}}(z_1, \dots, z_n)\propto \sum_{S\subseteq [n]} \bmlambda^S \mu(S) \prod_{i\in S} z_i=g_{\mu}(\lambda_1 z_1, \dots, \lambda_n z_n). \]

The following theorem has been rediscovered many times:
\begin{theorem}[\cite{AGMOS10,SV14}]\label{thm:maxentropy}
Let $\mu:2^{[n]}\to\R_{\geq 0}$ be a function. For any point $\p$ in the polytope $\P_\mu$ and for any $\epsilon>0$, there exist weights $\lambda_1,\dots,\lambda_n \in \R_{> 0}$ such that the marginal probabilities of $\ef{\bmlambda}{\mu}$ are within $\epsilon$ of $\p$, i.e., for all $i \in [n]$,
\[ \abs*{p_i - \PrX{S\sim \ef{\bmlambda}{\mu}}{i\in S}} \leq \epsilon.\]
\end{theorem}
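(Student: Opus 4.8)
The plan is to obtain the weights as an approximate minimizer of a convex potential whose gradient records exactly the gap between the marginals of $\ef{\bmlambda}{\mu}$ and the target point $\p$. We may assume $\mu\not\equiv 0$, since otherwise $\P_\mu=\emptyset$ and there is nothing to prove. Reparametrizing $\lambda_i=e^{y_i}$, I would introduce
\[
	F(\bm y)\;=\;\log g_\mu(e^{y_1},\dots,e^{y_n})-\dotprod{\p,\bm y}\;=\;\log\parens*{\sum_{S\subseteq[n]}\mu(S)\,e^{\dotprod{\1_S,\bm y}}}-\dotprod{\p,\bm y},
\]
which is finite and smooth on all of $\R^n$. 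Differentiating and using \eqref{eq:ExtField} gives, for $\lambda_i=e^{y_i}$,
\[
	\partial_{y_i}F(\bm y)\;=\;\frac{\sum_{S\ni i}\mu(S)\,e^{\dotprod{\1_S,\bm y}}}{\sum_S\mu(S)\,e^{\dotprod{\1_S,\bm y}}}-p_i\;=\;\PrX{S\sim\ef{\bmlambda}{\mu}}{i\in S}-p_i,
\]
so it suffices to produce, for every $\epsilon>0$, a point $\bm y$ with $\norm{\nabla F(\bm y)}_\infty\leq\epsilon$.

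Two structural facts drive the argument. First, $F$ is convex: $\log\sum_S\mu(S)\,e^{\dotprod{\1_S,\bm y}}$ is a log-sum-exp of affine functions of $\bm y$, from which we subtract a linear function. Second — and this is the only place the hypothesis $\p\in\P_\mu$ is used — $F$ is bounded below: writing $\p=\sum_{S\in\supp(\mu)}c_S\1_S$ as a convex combination, the weighted AM--GM inequality gives $e^{\dotprod{\p,\bm y}}=\prod_S\bigl(e^{\dotprod{\1_S,\bm y}}\bigr)^{c_S}\leq\sum_S c_S\,e^{\dotprod{\1_S,\bm y}}\leq C\sum_S\mu(S)\,e^{\dotprod{\1_S,\bm y}}$ with $C=\max_{S\in\supp(\mu)}c_S/\mu(S)<\infty$, so $F(\bm y)\geq-\log C$ for all $\bm y$.

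The remaining — and main — step is to turn ``convex and bounded below'' into ``admits gradients of arbitrarily small norm.'' The hard part is that when $\p$ lies on the boundary of $\P_\mu$, the infimum of $F$ is approached only at infinity, so there is no exact minimizer at which to apply first-order optimality; this is precisely why the theorem promises an $\epsilon$-approximation rather than an exact match. I would resolve this by Tikhonov regularization: for each integer $k\geq 1$, the objective $F(\bm y)+\tfrac1k\norm{\bm y}^2$ is strongly convex and, by the lower bound on $F$, coercive, so it has a unique minimizer $\bm y^{(k)}$, at which $\nabla F(\bm y^{(k)})=-\tfrac2k\bm y^{(k)}$. Comparing the regularized objective at $\bm y^{(k)}$ with its value at $\bm 0$ gives $\tfrac1k\norm{\bm y^{(k)}}^2\leq F(\bm 0)-\inf F$, hence $\norm{\nabla F(\bm y^{(k)})}_\infty\leq\norm{\nabla F(\bm y^{(k)})}_2=\tfrac2k\norm{\bm y^{(k)}}\leq 2\sqrt{(F(\bm 0)-\inf F)/k}\longrightarrow 0$. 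Choosing $k$ large enough that this is at most $\epsilon$ and setting $\lambda_i=\exp\!\big(y^{(k)}_i\big)>0$ completes the proof. (Alternatively, one could run gradient descent on $F$ and argue that the gradient norm must eventually drop below $\epsilon$, since otherwise $F$ would decrease without bound, but the regularization argument is cleaner.)
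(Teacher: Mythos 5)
Your proof is correct and complete. The paper does not prove this theorem itself but cites it to \textcite{AGMOS10,SV14}; your argument — reparametrizing $\lambda_i=e^{y_i}$, observing that the gradient of the log-capacity potential $F(\bm{y})=\log g_\mu(e^{y_1},\dots,e^{y_n})-\dotprod{\p,\bm{y}}$ is exactly the marginal gap, using $\p\in\P_\mu$ together with weighted AM--GM to bound $F$ from below, and extracting near-stationary points via Tikhonov regularization to handle targets on the boundary of $\P_\mu$ — is essentially the standard route in those references (the same potential underlies \cref{lem:geometric-program}), and all steps check out.
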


If $\p$ is in the relative interior of the polytope $\P_{\mu}$, it turns out that we can take $\epsilon=0$ in the above theorem. In either case though, we have the following. 

\begin{corollary}\label{cor:extFieldClosure}
	Let $\mu:2^{[n]}\to\R_{\geq 0}$ be a nonnegative function and let $\p \in \P_\mu$. There is a probability distribution $\tmu$ with marginals $\p$ such that $\supp(\tmu)\subseteq\supp(\mu)$. Moreover $\tmu$ can be obtained as a limit of distributions of the form $\ef{\bmlambda}{\mu}$ for $\bmlambda\in \R_{>0}^n$.
\end{corollary}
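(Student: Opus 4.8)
The plan is to realize $\tmu$ as a subsequential limit of external-field tilts of $\mu$, combining \cref{thm:maxentropy} with a compactness argument. First I would set $\Delta$ to be the set of all probability distributions $\nu:2^{[n]}\to\R_{\geq 0}$ with $\supp(\nu)\subseteq\supp(\mu)$; since $\supp(\mu)$ is a finite set, $\Delta$ is a compact subset of the probability simplex on $2^{[n]}$, being cut out by the simplex constraints together with the finitely many linear equalities $\nu(S)=0$ for $S\notin\supp(\mu)$. The key observation is that for any $\bmlambda\in\R_{>0}^n$ the tilt $\ef{\bmlambda}{\mu}$ lies in $\Delta$: rescaling the weight of each $S$ by the strictly positive factor $\bmlambda^S$ (and renormalizing) neither creates nor destroys support.

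Next I would apply \cref{thm:maxentropy} with $\epsilon=1/k$ for each positive integer $k$, obtaining weights $\bmlambda^{(k)}\in\R_{>0}^n$ such that the distribution $\mu^{(k)}=\ef{\bmlambda^{(k)}}{\mu}$ has every marginal within $1/k$ of the corresponding coordinate of $\p$. All the $\mu^{(k)}$ lie in the compact set $\Delta$, so some subsequence converges (pointwise on the finite set $2^{[n]}$) to a limit $\tmu\in\Delta$. By construction $\tmu$ is a probability distribution with $\supp(\tmu)\subseteq\supp(\mu)$, and it is a limit of distributions of the form $\ef{\bmlambda}{\mu}$, which is exactly the ``moreover'' claim.

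It then remains to check that $\tmu$ has marginals $\p$. For each $i\in[n]$, the $i$-th marginal $\nu\mapsto\sum_{S\ni i}\nu(S)$ is a linear, hence continuous, functional on distributions over $2^{[n]}$; therefore the $i$-th marginal of $\tmu$ is the limit along the chosen subsequence of the $i$-th marginals of $\mu^{(k)}$, and since the latter are within $1/k$ of $p_i$ they converge to $p_i$. (When $\p$ lies in the relative interior of $\P_\mu$ one can instead invoke the $\epsilon=0$ version of \cref{thm:maxentropy} mentioned above and simply take $\tmu=\ef{\bmlambda}{\mu}$.) I do not anticipate a real obstacle here; the only subtlety is ensuring that passing to a limit neither enlarges the support nor destroys the probability-distribution property, and both follow from the closedness of $\Delta$.
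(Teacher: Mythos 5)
Your proposal is correct and follows essentially the same route as the paper's proof: invoke \cref{thm:maxentropy} with $\epsilon\to 0$, use compactness of the set of distributions supported on $\supp(\mu)$ to extract a convergent subsequence, and note that the marginals pass to the limit. The extra details you supply (continuity of the marginal functionals, closedness of the support constraint) are exactly the points the paper leaves implicit.
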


\begin{proof}
Note that the set of probability distributions $\tmu:2^{[n]} \to\R_{\geq 0}$ with $\supp(\tmu)\subseteq\supp(\mu)$ is compact. By \cref{thm:maxentropy}, for any $\epsilon>0$, there exist weights $\bmlambda=(\lambda_1,\dots,\lambda_n)\in \R_{>0}^n$ such that for each $i$, the $i$-th marginal of $\ef{\bmlambda}{\mu}$ is within $\epsilon$ of $p_i$. By passing to a convergent subsequence, we can take $\tmu:2^{[n]} \to\R_{\geq 0}$ to be the limit of the distributions $\ef{\bmlambda}{\mu}$ as $\epsilon\to 0$. Then the marginals of $\tmu$ are exactly $\p$, and, since the support of $\ef{\bmlambda}{\mu}$ is contained in the support of $\mu$ for all $\bmlambda$, it follows that the support of $\tmu$ is also contained in the support of $\mu$. 
\end{proof}

The following corollary follows from \cref{prop:preservers,thm:maxentropy}.
\begin{corollary}
\label{cor:maxentropy}
	Let $\mu:2^{[n]} \to\R_{\geq 0}$ be log-concave and let $\p \in \P_\mu$. There is a log-concave probability distribution $\tmu$ with marginals $\p$ such that $\supp(\tmu)\subseteq\supp(\mu)$.  Moreover, $\tmu$ can be obtained as the limit of distributions $\ef{\bmlambda}{\mu}$ for $\bmlambda \in \R_{>0}^n$.
\end{corollary}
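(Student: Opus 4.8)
The plan is to take the distribution $\tmu$ that \cref{cor:extFieldClosure} already produces and simply verify that it is log-concave; everything else asserted in the statement (the marginals of $\tmu$ are $\p$, $\supp(\tmu)\subseteq\supp(\mu)$, and $\tmu$ arises as a limit of distributions $\ef{\bmlambda}{\mu}$ with $\bmlambda\in\R_{>0}^n$) is exactly what \cref{cor:extFieldClosure} gives. So the only real content is the implication ``$\mu$ log-concave $\Rightarrow$ the limiting $\tmu$ log-concave,'' which I would split into two steps: first, every tilt $\ef{\bmlambda}{\mu}$ with $\bmlambda\in\R_{>0}^n$ is log-concave; second, log-concavity survives passing to the limit.

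For the first step, recall that $g_{\ef{\bmlambda}{\mu}}(z_1,\dots,z_n)$ is proportional to $g_\mu(\lambda_1 z_1,\dots,\lambda_n z_n)$. Since $\mu$ is log-concave, $g_\mu$ is a log-concave polynomial with nonnegative coefficients, and applying the scaling closure (\cref{pres:scaling} of \cref{prop:preservers})—namely multiplication by a positive normalizing constant together with the positive rescalings $z_i\mapsto \lambda_i z_i$—shows $g_{\ef{\bmlambda}{\mu}}$ is log-concave. Hence each $\ef{\bmlambda}{\mu}$ is a log-concave distribution.

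For the second step, I would invoke the closedness of the set of log-concave polynomials with nonnegative coefficients. All the polynomials $g_{\ef{\bmlambda}{\mu}}$ are multiaffine with nonnegative coefficients supported on $\supp(\mu)$, so they live in a single fixed finite-dimensional space, and $g_{\tmu}$ is their limit along the convergent subsequence furnished in the proof of \cref{cor:extFieldClosure}. As observed right after the definition of log-concave polynomials (the entries of $\Hess{\log g}$ are continuous in the coefficients of $g$ on $\R_{>0}^n$, and the negative semidefinite cone is closed), a limit of log-concave polynomials with nonnegative coefficients is again log-concave with nonnegative coefficients. Therefore $g_{\tmu}$ is log-concave, i.e., $\tmu$ is a log-concave distribution, and combining this with the properties already guaranteed by \cref{cor:extFieldClosure} completes the proof.

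I do not anticipate a serious obstacle; the argument is just a combination of \cref{cor:extFieldClosure}, the scaling closure in \cref{prop:preservers}, and closedness of log-concavity. The only points deserving a word of care are that closedness is stated for polynomials with \emph{nonnegative} coefficients, which holds along our sequence since tilting by $\bmlambda\in\R_{>0}^n$ preserves nonnegativity of coefficients and a limit of nonnegative numbers is nonnegative, and that $g_{\tmu}\neq 0$—true because $\tmu$ is a probability distribution—so that $\log g_{\tmu}$ is genuinely defined on $\R_{>0}^n$.
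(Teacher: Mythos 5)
Your proposal is correct and matches the paper's own proof: the paper likewise observes that $g_{\ef{\bmlambda}{\mu}}\propto g_\mu(\lambda_1 z_1,\dots,\lambda_n z_n)$ is log-concave by \cref{pres:scaling} of \cref{prop:preservers} and then appeals to the closedness of the set of log-concave polynomials to conclude that the limit $\tmu$ from \cref{cor:extFieldClosure} is log-concave. Your extra remarks on nonnegativity of coefficients and $g_{\tmu}\neq 0$ are sound but not needed beyond what the paper states.
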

\begin{proof}
	For every $\bmlambda \in \R_{>0}^n$, 
	\[ g_{\ef{\bmlambda}{\mu}}(z_1,\dots,z_n)\propto g_{\mu}(\lambda_1 z_1,\dots, \lambda_n z_n).\]
	Therefore by \cref{prop:preservers}, \cref{pres:scaling}, each $\ef{\bmlambda}{\mu}$ is log-concave. Since the set of log-concave polynomials is closed in the Euclidean topology, it follows that the limit $\tmu$ given in the proof of \cref{cor:extFieldClosure} is also log-concave. 
\end{proof}

\begin{remark}
	Although we will not use this fact, it is worth mentioning that the distribution $\tmu$ promised by \cref{cor:maxentropy} can be obtained by solving a maximum entropy program:
	\[ \argmax_{\tmu}\set*{\sum_{S} \tmu(S)\log \frac{\mu(S)}{\tmu(S)} \given \forall i\in [n]:\tmu_i=p_i} \]
\end{remark}

The entropy has the following interesting relationship with geometric programs.

\begin{lemma}[\cite{SV14}]
	\label{lem:geometric-program}
	Let $\mu:2^{[n]}\to \R_{\geq 0}$ have generating polynomial $g_\mu\in \R[z_1,\dots,z_n]$. Let $\p$ be a point in the Newton polytope of $\mu$. Then
	\[ 
		\log\parens*{\inf_{\z\in \R_{>0}^n} \frac{g_\mu(\z)}{\prod_i z_i^{p_i}}}=\sum_{S} \tmu(S)\log \frac{\mu(S)}{\tmu(S)},
	\]
	where $\tmu$ is the probability distribution given by \cref{cor:extFieldClosure,cor:maxentropy}. In particular, if $\mu$ is the indicator function of a family $\B\subseteq 2^{[n]}$, i.e., if $\mu$ only takes values in $\set{0,1}$, then the above quantity is the same as the entropy $\H(\tmu)$. 
\end{lemma}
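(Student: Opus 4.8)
The plan is to recognize the identity as strong duality for a geometric program (the Gibbs variational principle), and to prove the two inequalities by quite different means: ``$\ge$'' by a one-line weighted AM--GM, and ``$\le$'' by first-order optimality together with a limiting argument. First I would substitute $z_i=e^{y_i}$; then $\log\inf_{\z\in\R_{>0}^n}g_\mu(\z)/\z^\p=\inf_{\y\in\R^n}F_\p(\y)$, where $F_\p(\y)=\log\sum_{S}\mu(S)e^{\dotprod{\1_S-\p,\y}}$ is convex (a log-sum-exp of affine functions, shifted by a linear term). For $\y\in\R^n$ write $\nu_\y$ for the external-field distribution $\ef{\bmlambda}{\mu}$ with $\lambda_i=e^{y_i}$. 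Throughout, for a probability distribution $\nu$ with $\supp(\nu)\subseteq\supp(\mu)$ write $\Phi(\nu)=\sum_{S}\nu(S)\log\tfrac{\mu(S)}{\nu(S)}$; this $\Phi$ is strictly concave and continuous on the compact simplex of such $\nu$, and if $\nu$ has marginals $\p$ then $\sum_{S}\nu(S)\1_S=\p$, so $\z^\p=\prod_{S\in\supp(\nu)}(\z^S)^{\nu(S)}$ for $\z\in\R_{>0}^n$.

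For the lower bound, fix $\z\in\R_{>0}^n$. Dropping the terms of $g_\mu(\z)$ outside $\supp(\tmu)$ and applying weighted AM--GM with weights $\tmu(S)$ and values $\mu(S)\z^S/\tmu(S)$,
\[ \frac{g_\mu(\z)}{\z^\p}\ \ge\ \frac1{\z^\p}\sum_{S\in\supp(\tmu)}\tmu(S)\cdot\frac{\mu(S)\z^S}{\tmu(S)}\ \ge\ \frac1{\z^\p}\prod_{S\in\supp(\tmu)}\parens*{\frac{\mu(S)\z^S}{\tmu(S)}}^{\tmu(S)}\ =\ \prod_{S\in\supp(\tmu)}\parens*{\frac{\mu(S)}{\tmu(S)}}^{\tmu(S)}, \]
where the last equality uses $\prod_{S\in\supp(\tmu)}(\z^S)^{\tmu(S)}=\z^{\sum_S\tmu(S)\1_S}=\z^\p$. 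Taking $\log$ and then $\inf$ over $\z$ gives $\log\inf_\z g_\mu(\z)/\z^\p\ge\Phi(\tmu)$. (This step uses nothing about log-concavity, and it works verbatim with any feasible $\nu$ in place of $\tmu$, so the left side is in fact $\ge\sup_\nu\Phi(\nu)$ over distributions $\nu$ with marginals $\p$ and support in $\supp(\mu)$.)

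For the matching bound, take first $\p$ in the relative interior of $\P_\mu$. The gradient of $\y\mapsto\log\sum_S\mu(S)e^{\dotprod{\1_S,\y}}$ is the marginal vector of $\nu_\y$, and as $\y$ ranges over $\R^n$ this sweeps out exactly the relative interior of $\P_\mu$ (using \cref{thm:maxentropy} with $\epsilon=0$, valid since $\p$ is in the relative interior); so there is $\y^\ast$ with $\nabla F_\p(\y^\ast)=0$, which is a global minimizer by convexity, whence $\inf_\z g_\mu(\z)/\z^\p=e^{F_\p(\y^\ast)}$. Now $\nu_{\y^\ast}$ has marginals $\p$ and full support $\supp(\mu)$, so it is the distribution $\tmu$ of \cref{cor:extFieldClosure,cor:maxentropy}; substituting $\nu_{\y^\ast}(S)=\mu(S)e^{\dotprod{\1_S,\y^\ast}-F_\p(\y^\ast)-\dotprod{\p,\y^\ast}}$ into $\Phi$ and using $\sum_S\nu_{\y^\ast}(S)\1_S=\p$ gives $\Phi(\nu_{\y^\ast})=F_\p(\y^\ast)$, matching the lower bound. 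For $\p$ on the boundary of $\P_\mu$, I would approach $\p$ along a segment from a fixed interior point, $\p^{(k)}\to\p$ with $\p^{(k)}$ in the relative interior, and pass to the limit. The left side $\p\mapsto\log\inf_\z g_\mu(\z)/\z^\p$ is the negative of a convex conjugate, hence concave and upper semicontinuous, hence continuous along such an inward segment; and the distributions $\nu^{(k)}$ furnished by the interior case for $\p^{(k)}$ subconverge in the compact simplex to a distribution with marginals $\p$, support in $\supp(\mu)$, and $\Phi$-value equal to $\lim_k\log\inf_\z g_\mu(\z)/\z^{\p^{(k)}}=\log\inf_\z g_\mu(\z)/\z^\p$. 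By the lower bound this value is the maximum of $\Phi$ over distributions with marginals $\p$ and support in $\supp(\mu)$, attained at a unique point by strict concavity of $\Phi$; this point is the $\tmu$ of \cref{cor:extFieldClosure,cor:maxentropy}, so the gap is closed. Finally the ``in particular'' clause is immediate: if $\mu$ is $\{0,1\}$-valued then $\mu(S)=1$ on $\supp(\mu)$, so $\Phi(\tmu)=\sum_S\tmu(S)\log\tfrac1{\tmu(S)}=\H(\tmu)$.

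I expect the main obstacle to be exactly this boundary case: there the infimum over $\z$ is not attained and no $\y$ realizes marginals $\p$, so one must show the external-field distributions genuinely converge to the distribution promised by \cref{cor:extFieldClosure}; this is handled through the upper semicontinuity of both sides along inward segments and the uniqueness of the maximum-entropy distribution with prescribed marginals. The AM--GM estimate and the interior first-order computation are routine by comparison.
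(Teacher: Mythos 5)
The paper does not prove this lemma --- it is quoted from \textcite{SV14} without proof --- so there is no in-paper argument to compare against; your proof must stand on its own, and it essentially does. The AM--GM lower bound, the substitution $z_i=e^{y_i}$ turning the left side into $\inf_\y F_\p(\y)$, the first-order computation showing $\Phi(\nu_{\y^\ast})=F_\p(\y^\ast)$ at a critical point, and the limiting argument for boundary $\p$ are all correct; this is the standard Gibbs-variational / geometric-programming duality, which is also how \textcite{SV14} prove it. The one soft spot is the final identification ``this point is the $\tmu$ of \cref{cor:extFieldClosure,cor:maxentropy}.'' Your construction produces \emph{a} distribution with marginals $\p$ achieving $\Phi=\log\inf_\z g_\mu(\z)/\z^\p$, and strict concavity of $\Phi$ shows the maximizer is unique; but \cref{cor:extFieldClosure} defines $\tmu$ as a subsequential limit of external fields whose marginals need not approach $\p$ along your inward segment, so you still owe an argument that \emph{that} limit is the maximizer. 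The cleanest fix is to note that your interior computation gives the exact identity $\Phi(\ef{\bmlambda}{\mu})=\inf_\y F_{\q}(\y)$ for \emph{every} external field, where $\q$ is the marginal vector of $\ef{\bmlambda}{\mu}$ (since $\y=\log\bmlambda$ is a critical point of $F_{\q}$), and then use that the concave, finite function $\q\mapsto\inf_\y F_\q(\y)$ is continuous on the polytope $\P_\mu$ (upper semicontinuity is free as an infimum of affine functions; lower semicontinuity of a finite concave function on a polyhedron is Gale--Klee--Rockafellar), so the identity passes to any limit. Alternatively, read $\tmu$ as the max-entropy distribution of the paper's remark following \cref{cor:maxentropy}, in which case your argument is complete as written. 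This is a gap of bookkeeping rather than of substance.
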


For us, this will be of particular interest when $\mu$ is the indicator function over its support. In this case, by \cref{prop:EntropyCounting}, the infimum in \cref{lem:geometric-program} will give a lower bound for the entropy of $\mu$ and thus the $\log$ of the size of its support.

	\section{Hodge Theory for Matroids}

\label{sec:hodge}
In this section we review several recent developments on combinatorial Hodge theory by \textcite{AHK15,HW17}. The main result we prove in this section is \cref{thm:HessNSD}. Later in \cref{sec:logconcavity}, we use this to prove that the generating polynomial of the bases of any matroid is a log-concave function over the positive orthant.

In this section, we take all matroids to be simple. To describe the algebraic tools used \cite{AHK15,HW17}, we introduce a little more matroid terminology, namely the theory of flats. A subset $F\subseteq E$ is a flat of $M = (E,\I)$ if it is a maximal set with rank equal to $\rank(F)$, i.e., for any $i\not\in F$, $\rank(F\cup\{i\}) =  \rank(F)+1$.  In particular, $F = E$ is the unique flat of rank equal to $\rank(M)$. 

We say that a flat $F$ is \emph{proper} if $F\neq E$. Flats $F_1,F_2$ are \emph{comparable} if $F_1\subseteq F_2$ or $F_2\subseteq F_1$ and they are \emph{incomparable} otherwise. A \emph{flag} of $M$ is a strictly monotonic sequence of  nonempty proper flats of $M$, $F_1\subsetneq F_2 \subsetneq \dots \subsetneq F_{l}$. Note that any flag of $M$ has at most $\rank(M)-1$ flats.

\subsection{The Chow ring}\label{sec:ChowRing}

Here we go through some of the commutative algebra used by \textcite{AHK15} and explain a special case of their main theorem. Following the set up of \textcite{AHK15}, for a matroid $M$ of rank $r+1$ on the ground set $E$, define the \emph{Chow ring} to be the ring 
\[
	A^*(M)=\R\brackets{x_F\given F \text{ is a nonempty proper flat of } M}
\] whose generators $x_F$ satisfy the relations 
\[
 x_{F_1}x_{F_2} =0  \text{ for all incomparable $F_1,F_2$}\qquad\text{and}\qquad
 \sum_{F \ni i} x_{F} - \sum_{F\ni j} x_{F} = 0 \text{ for all $i,j\in E$}.
\]
Since these relations are homogeneous polynomials in the generators, $A^*(M)$ is a graded ring, and we use $A^d(M)$ to denote homogeneous polynomials in $A^*(M)$ of degree $d$. It turns out that the top degree part, $A^{r}(M)$, is a one-dimensional vector space over $\R$, 
and we write  ``$\deg$'' for the isomorphism $A^{r}(M) \simeq \R$ determined by the property that
$ \deg(x_{F_1}\dots x_{F_{r}})=1$
for any flag $F_1\subsetneq F_2\subsetneq \dots \subsetneq F_{r}$ of $M$.

A function $f$ on the set of nonempty proper subsets of $E$ is said to be \emph{strictly submodular} if 
\[ f(S)+f(T) > f(S\cap T) + f(S\cup T) \]
for any two incomparable subsets $S,T$ of $E$, where we take $f(\emptyset) = f(E) = 0$. We say $f$ is \emph{submodular} if it satisfies the weak form of the above inequality, i.e., with possible equality. We remark that our notion of submodularity differs slightly from the conventional notion, in that we effectively require, in addition to conventional submodularity, that $f$ takes a value of $0$ on $\emptyset, E$.

Define the open convex cone 
\[
	K(M)=\set*{\sum_F f(F) x_{F}  \given f \text{ is strictly submodular}}\subset A^1(M),
\]
where the sum is over all flats of $M$. We will use $\barK(M)$ to denote the Euclidean closure of this cone, namely the elements of $ A^1(M)$ whose coefficients give a submodular function on subsets of $E$. 

The following is a special case of one of the main theorems of \textcite{AHK15}.
\begin{theorem}[{\cite[Theorem~8.9]{AHK15}}]\label{thm:AHK}
	Let $M$ be a simple matroid of rank $r+1$. For any $\ell_0, \ell_1,\dots,\ell_{r-2}$ in $K(M)$, consider the symmetric bilinear form $Q_{\ell_1,\dots,\ell_{r-2}}: A^1(M)\times A^1(M) \rightarrow \R$ defined by 
	\[ Q_{\ell_1,\dots,\ell_{r-2}}(v,w)=\deg(v\cdot \ell_1 \cdot \ell_2 \cdot \dots \ell_{r-2}\cdot w). \]
	Then, as a quadratic form on $A^1(M)$, $Q_{\ell_1,\dots,\ell_{r-2}}$ is negative definite on the kernel of $Q_{\ell_1,\dots,\ell_{r-2}}(\ell_0, \cdot)$, i.e., on 
	\[ \set*{v\in A^1(M)\given Q_{\ell_1,\dots,\ell_{r-2}}(\ell_0, v)=0}. \]
\end{theorem}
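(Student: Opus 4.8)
The plan is not to reprove the Hodge–Riemann relations for matroids from scratch — that is the deep theorem of \textcite{AHK15} — but to deduce the statement from their Theorem~8.9, once our definitions of the Chow ring $A^*(M)$, the degree map $\deg$, and the cone $K(M)$ are matched with theirs. The first step is this dictionary: our $A^*(M)$ is literally the ring studied in \cite{AHK15}, and our $\deg\colon A^r(M)\to\R$ is their normalization, namely the unique linear functional sending $x_{F_1}\cdots x_{F_r}$ to $1$ for every complete flag $F_1\subsetneq\dots\subsetneq F_r$ of nonempty proper flats; that $A^r(M)$ is one-dimensional and that $\deg$ is well defined I would simply cite from their foundational results. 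The substantive point to check is that our cone $K(M)$, consisting of the classes $\sum_F f(F)x_F$ with $f$ strictly submodular and $f(\emptyset)=f(E)=0$, is exactly the ample cone of \cite{AHK15}, so that each $\ell_i\in K(M)$ is an ample class in their sense. It matters here that we require $\ell_i\in K(M)$ and not merely $\ell_i\in\barK(M)$: the definiteness in the Hodge–Riemann relations genuinely needs ample, not just nef, classes.

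The second step is to match the indexing. Since $M$ has rank $r+1$, the Chow ring is concentrated in degrees $0,1,\dots,r$, so in the notation of \cite{AHK15} the ambient dimension is $r$ and the Hodge–Riemann relations are asserted in each degree $q\le r/2$; we only use $q=1$. For degree $1$ and ample classes $\ell_0,\ell_1,\dots,\ell_{r-2}$, the primitive subspace of $A^1(M)$ is the kernel of multiplication by the degree-$(r-1)$ class $\ell_0\ell_1\cdots\ell_{r-2}$ into $A^r(M)$; since $A^r(M)$ is one-dimensional, a vector $v\in A^1(M)$ lies in this kernel if and only if $\deg(v\cdot\ell_0\cdot\ell_1\cdots\ell_{r-2})=0$, that is, if and only if $Q_{\ell_1,\dots,\ell_{r-2}}(\ell_0,v)=0$. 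The Hodge–Riemann form in degree $1$ pairs $v,w\in A^1(M)$ by multiplying with the remaining $r-2$ ample classes $\ell_1,\dots,\ell_{r-2}$ and applying $\deg$, up to the sign $(-1)^1=-1$; in other words it is $-Q_{\ell_1,\dots,\ell_{r-2}}$. Hence the degree-$1$ relation, specialized to the tuple $(\ell_0,\ell_1,\dots,\ell_{r-2})$, says precisely that $Q_{\ell_1,\dots,\ell_{r-2}}$ is negative definite on $\set{v\in A^1(M)\given Q_{\ell_1,\dots,\ell_{r-2}}(\ell_0,v)=0}$, which is the claim.

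With these two steps in place, the statement is immediate from Theorem~8.9 of \cite{AHK15}. The genuine content — that these Hodge–Riemann relations hold for the Chow ring of an \emph{arbitrary} matroid, not only a realizable one — is exactly their main theorem and is the step I expect to be the real obstacle if one wanted a self-contained argument. \Textcite{AHK15} prove it by an intricate induction that passes between matroids via matroid flips (star subdivisions at flats), transporting the entire Lefschetz package — Poincaré duality, hard Lefschetz, and the Hodge–Riemann relations — across each flip through a careful signature computation, with a base case that reduces essentially to projective space. I would not reproduce that argument; importing their theorem in the form above is precisely the role this statement plays in the rest of the paper.
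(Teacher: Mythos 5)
Your proposal is correct and matches the paper exactly: the paper offers no independent proof of this statement, presenting it as a direct specialization of \textcite[Theorem~8.9]{AHK15}, which is precisely what you do after (carefully and correctly) matching the Chow ring, degree map, ample cone $K(M)$, and the degree-$1$ primitive subspace with the conventions of that work. Your translation of the indices (top degree $r$ for a rank-$(r+1)$ matroid, $r-2$ ample classes in the form, sign $(-1)^1$) is accurate, and your observation that strict submodularity (ampleness, not mere nefness) is what yields definiteness rather than semidefiniteness is consistent with the paper's subsequent remark about passing to $\barK(M)$.
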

Note that the kernel of $Q_{\ell_1,\dots,\ell_{r-2}}(\ell_0, \cdot)$ has codimension one, implying that the operator $Q_{\ell_1,\dots,\ell_{r-2}}$ has at most one nonnegative  eigenvalue.

Also, observe that the above result naturally extends to taking $\ell_0, \ell_1,\dots,\ell_{r-2}$ in the Euclidean closure $\barK(M)$ at the expense of having the slightly weaker guarantee that the operator $Q_{\ell_1,\dots,\ell_{r-2}}$ will be negative \emph{semidefinite} on the kernel of $Q_{\ell_1,\dots,\ell_{r-2}}(\ell_0, \cdot)$. 
In this case, $Q_{\ell_1,\dots,\ell_{r-2}}$ has at most one positive eigenvalue.


\subsection{Graded M\"obius Algebra}
To connect the Chow ring with the basis generating polynomial of a matroid, we introduce another algebra used by \textcite{HW17}. Here we take $M$ to be a simple matroid of rank $r$ on the ground set $[n]$. For flats $F_1, F_2$, define $F_1\vee F_2$ to be the inclusion-minimal flat containing $F_1\cup F_2$.

Let $B^*(M)$ denote the ring $\R\brackets{y_F\given F \text{ is a flat of $M$}}$ whose generators $y_F$ satisfy the relations
\begin{equation}\label{def:mobiusalgebra}
	y_{F_1}\cdot y_{F_2}=
	\begin{cases}
		y_{F_1\vee F_2} & \text{if } \rank(F_1)+\rank(F_2)=\rank(F_1\vee F_2),\\
		0& \text{otherwise},
	\end{cases}
\end{equation}
for all pairs of flats $F_1, F_2$.  These relations imply that for any flat $F$ and any basis $I_F$ of $F$, $y_F$ equals the product  $\prod_{i\in I_F} y_i$, where $y_i=y_{\set{i}}$. It follows that we can instead take $y_1, \dots, y_n$ as generators of $B^*(M)$ and that the relations on these generators are degree-homogeneous. Then $B^*(M)$ is a graded algebra and we use $B^d(M)$ to denote the homogeneous polynomials of degree $d$ in $B^*(M)$.

\Textcite{HW17} relate this to the Chow ring as follows. Let $\barM$ denote the matroid of rank $r+1$ on ground set $\{0,1,\dots, n\}$ obtained by adding $0$ as a coloop. Its independent sets have the form $I$ or $\{0\}\cup I$, where $I$ is independent in $M$. In the Chow ring of $\barM$, for each $i=1, \dots, n$, define the degree one element
\[\beta_i=\sum_{F:i\in F, 0\notin F} x_{F}\in A^1(\barM),\]
where the sum is taken over flats $F$ of $\barM$ for which  $i\in F$ and $0\notin F$. Since the indicator function of the condition $i\in F$ and $0\notin F$ is submodular, $\beta_i$ belongs to the closed convex cone $\overline{K}(\barM)$.

\Citeauthor{HW17} use this to establish the following relationship between $A^*(\barM)$ and $B^*(M)$. 
\begin{theorem}[{\cite[Prop 9]{HW17}}]
	There is a unique injective graded $\R$-algebra homomorphism
	\[ \varphi:B^*(M)\mapsto A^*(\barM)\qquad\text{with}\qquad\varphi(y_i) = \beta_i.\]
\end{theorem}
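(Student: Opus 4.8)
The plan is to construct $\varphi$ directly on the generators $y_i$, get uniqueness for free, reduce existence to a single identity for products of the $\beta_i$ inside the Chow ring of $\barM$, and then attack injectivity by a separate argument; injectivity is the step I expect to be hard.

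\textbf{Uniqueness and reduction of existence.} As already noted in the text, $y_1,\dots,y_n$ generate $B^*(M)$ as an $\R$-algebra, so any graded $\R$-algebra homomorphism out of $B^*(M)$ is determined by the images of the $y_i$; this gives uniqueness at once, and reduces existence to checking that $y_i\mapsto\beta_i$ respects the relations \cref{def:mobiusalgebra}. Writing $\beta_F:=\prod_{i\in I_F}\beta_i$ for a basis $I_F$ of a flat $F$ of $M$, it suffices to prove one statement in $A^*(\barM)$: \emph{for any $i_1,\dots,i_k\in[n]$, the product $\beta_{i_1}\cdots\beta_{i_k}$ is $0$ unless $i_1,\dots,i_k$ are pairwise distinct and independent in $M$, in which case it equals $\beta_F$ for $F=\mathrm{cl}(\{i_1,\dots,i_k\})$} --- i.e. the product depends only on the flat spanned. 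Granting this, $\beta_F$ is well defined; and \cref{def:mobiusalgebra} follows: if $\rank(F_1)+\rank(F_2)=\rank(F_1\vee F_2)$ then submodularity of the matroid rank forces $\rank(F_1\cap F_2)=0$, hence $F_1\cap F_2=\emptyset$ since $M$ is simple, so concatenating disjoint bases of $F_1,F_2$ gives a basis of $F_1\vee F_2$ and $\beta_{F_1}\beta_{F_2}=\beta_{F_1\vee F_2}$; otherwise the concatenation of a basis of $F_1$ with a basis of $F_2$, as a list, either has a repeated entry or spans a set of rank strictly less than its size, so $\beta_{F_1}\beta_{F_2}=0$. Thus $\varphi(y_F):=\beta_F$ extends to a graded $\R$-algebra homomorphism $B^*(M)\to A^*(\barM)$ with $\varphi(y_i)=\beta_i$.

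\textbf{The Chow-ring computation.} To prove the displayed identity, expand each $\beta_i=\sum_{F:\,i\in F,\,0\notin F}x_F$ and multiply out $\beta_{i_1}\cdots\beta_{i_k}$. A monomial $x_{F_1}\cdots x_{F_k}$ survives in $A^*(\barM)$ only if the $F_j$ are pairwise comparable, i.e. form a weakly increasing chain of nonempty proper flats of $\barM$ none of which contains $0$; after collecting equal factors, the product becomes a sum of terms $x_{G_1}\cdots x_{G_m}$ over honest flags $G_1\subsetneq\dots\subsetneq G_m$ of $\barM$ with $0\notin G_m$, weighted by the combinatorial data of how the indices $i_1,\dots,i_k$ distribute along the flag (each $i_j$ assigned to some $G_\ell$ that contains it). I would then use the linear relations $\sum_{F\ni i}x_F=\sum_{F\ni j}x_F$ to move indices between comparable generators and normalize this sum so that it visibly depends only on the top flat $G_m=\mathrm{cl}(\{i_1,\dots,i_k\})$ and collapses to $0$ as soon as two indices coincide or the index set is dependent (no admissible flag of the correct length with a legal index distribution exists). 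This is essentially the computation of \textcite{HW17}; the bookkeeping with index distributions and the degree-one relations is fiddly but routine, and I do not expect it to be the main obstacle.

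\textbf{Injectivity --- the main obstacle.} Every monomial in $y_1,\dots,y_n$ is either $0$ or of the form $y_F$, so $\set{y_F\given\rank(F)=d}$ spans $B^d(M)$ and $\dim_\R B^d(M)\le\card{\set{F\given\rank(F)=d}}$; hence it suffices to show $\set{\beta_F=\varphi(y_F)\given\rank(F)=d}$ is linearly independent in $A^d(\barM)$, which then forces equality of dimensions and injectivity of $\varphi$ in each degree. To detect a relation $\sum_F c_F\beta_F=0$ over rank-$d$ flats, I would pair it with $\beta_{F'}$ for $F'$ running over rank-$(r-d)$ flats and apply $\deg$. By the identity above, $\deg(\beta_F\beta_{F'})\neq0$ only when a basis of $F$ and a basis of $F'$ together form a basis of $[n]$, i.e. exactly when $F,F'$ are \emph{complementary} ($\rank(F)+\rank(F')=r$, $F\cap F'=\emptyset$, $F\vee F'=[n]$), in which case $\beta_F\beta_{F'}=\beta_{[n]}$ and the value is the fixed constant $N:=\deg(\beta_{[n]})$; $N>0$ because the flag of closures of an initial segment of a basis of $M$ is an admissible flag contributing $+1$, and the remaining terms cannot cancel it as $\beta_i\in\overline{K}(\barM)$ and products of such classes have nonnegative degree. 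Thus the relation forces $\sum_{F\text{ complementary to }F'}c_F=0$ for every rank-$(r-d)$ flat $F'$, and linear independence reduces to the combinatorial nondegeneracy claim that the flat-complementarity incidence matrix --- rows indexed by rank-$d$ flats, columns by rank-$(r-d)$ flats, entry $1$ on complementary pairs --- has full row rank. This is the crux, and it should not be expected to be soft: full row rank here in particular yields the ``top-heavy'' inequality $\card{\set{F\given\rank(F)=d}}\le\card{\set{F'\given\rank(F')=r-d}}$. I would attempt it by Möbius inversion over the geometric lattice of flats of $M$, using that every flat admits a complement of each complementary rank to build a suitably triangular system; if a direct lattice argument proves too delicate, the fallback --- avoiding this pairing entirely --- is to expand each $\beta_F$ in a Feichtner--Yuzvinsky-type monomial basis of $A^*(\barM)$ and exhibit a leading basis monomial attached to $F$ that is distinct for distinct flats, making the transition matrix unitriangular.
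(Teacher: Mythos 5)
The paper does not prove this statement --- it is imported verbatim from \textcite{HW17} --- so there is no internal proof to compare against; I can only assess your argument on its own terms. Your uniqueness argument and your reduction of existence to the single Chow-ring identity (that $\beta_{i_1}\cdots\beta_{i_k}$ vanishes unless $\set{i_1,\dots,i_k}$ is independent in $M$, and otherwise depends only on its closure) are correctly structured, and the deduction of the relations \cref{def:mobiusalgebra} from that identity is fine. But be aware that this identity \emph{is} the substantive content of the proposition; labelling the computation ``fiddly but routine bookkeeping'' leaves the existence half essentially unproved.

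The genuine gap is in the injectivity argument. You propose to detect a relation $\sum_{\rank(F)=d}c_F\beta_F=0$ by multiplying with $\beta_{F'}$ for rank-$(r-d)$ flats $F'$ and applying $\deg$. This pairing records, for each $F'$, only the sum of $c_F$ over flats $F$ complementary to $F'$, so it imposes at most $\card{\set{F'\given \rank(F')=r-d}}$ linear conditions on $\card{\set{F\given \rank(F)=d}}$ unknowns. For $d>r/2$ the number of unknowns typically exceeds the number of conditions --- for the uniform matroid of rank $3$ on $n\ge 4$ elements and $d=2$ you have $\binom{n}{2}$ unknowns and only $n$ conditions --- so in those degrees the argument provably cannot establish linear independence, regardless of how the incidence matrix behaves. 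For $d\le r/2$, the full-row-rank claim you need already implies the Dowling--Wilson top-heavy inequality $\card{\set{F\given\rank(F)=d}}\le\card{\set{F\given\rank(F)=r-d}}$, which at the time of this paper was a long-standing open conjecture for general matroids (its known proofs, even in the realizable case treated by \textcite{HW17}, rest on heavy algebro-geometric input); it will not fall to M\"obius inversion over the lattice of flats, and a proof of Proposition 9 that needed it would be badly circular in difficulty. The workable route is the one you relegate to a fallback: pair $\beta_F$ against degree-$(r-d)$ \emph{flag monomials} $x_{G_1}\cdots x_{G_{r-d}}$ of $A^*(\barM)$ rather than against other $\beta$-classes. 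There are enough of these to separate the $\beta_F$, and one can arrange a triangular dual system; that is essentially how \textcite{HW17} argue, and it is the part of your proposal that would need to be developed into an actual proof.
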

Note that for any basis $B$ of $M$, $\prod_{i\in B} y_i  = y_{[n]}$ is nonzero in $B^r(M)$. On the other hand, if $S \subseteq [n]$  is a dependent set of the matroid $M$, then $\prod_{i\in S} y_i$ is zero in  $B^*(M)$. Then from the existence and injectivity of this map, it follows that for any set $B \subseteq [n]$ with $\card{B}=r$, up to global scaling by a positive real number,
\[
	\deg\parens*{\prod_{i\in B} \beta_i}=
	\begin{cases}
		1 & \text{if $B$ is a basis of $M$},\\
		0 & \text{otherwise,}
	\end{cases}
\] 
where $\deg:A^r(\barM)\rightarrow \R$ is the isomorphism discussed in \cref{sec:ChowRing}.

This is particularly useful for us because of the following connection with differential operators on the basis generating polynomial $g_M(z_1, \dots, z_n)$.

\begin{proposition}
\label{prop:ChowToDer}
	For a matrix $\V \in \R^{n\times r}$ with columns $\v_1\given \dots\given \v_r \in \R^n$,  
	\[
		\deg\parens*{\prod_{j=1}^r  \sum_{i=1}^n V_{ij} \beta_i}=\partial_{\v_1}\cdots\partial_{\v_r}g_M(\z). 
	\]
	Furthermore, for any $0\leq k\leq r$ and $\bmlambda=(\lambda_1,\dots,\lambda_n)\in \R^n$, 
	\[
		\deg\parens*{\parens*{\sum_{i=1}^n\lambda_i \beta_i}^{r-k} \cdot \prod_{j=1}^k \sum_{i=1}^n V_{ij} \beta_i}=(r-k)! \cdot \eval{\partial_{\v_1}\cdots\partial_{\v_k} g_M(\z)}_{\z = \bmlambda}. 
	\]
\end{proposition}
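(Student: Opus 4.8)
The plan is to expand both sides multilinearly and match them term-by-term over the bases of $M$; the second identity then follows from the first by a homogeneity argument.

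\textbf{First identity.} Expanding the product in the Chow ring of $\barM$,
\[ \prod_{j=1}^r \sum_{i=1}^n V_{ij}\,\beta_i \;=\; \sum_{i_1,\dots,i_r\in[n]} V_{i_1 1}\cdots V_{i_r r}\,\beta_{i_1}\cdots\beta_{i_r}. \]
Since $M$ is simple, each singleton $\{i\}$ is a flat with $\rank(\{i\})=1$, so the defining relation of $B^*(M)$ gives $y_i^2=0$; as $\varphi$ is a graded $\R$-algebra homomorphism with $\varphi(y_i)=\beta_i$, this yields $\beta_i^2=0$. Hence only the terms with $i_1,\dots,i_r$ pairwise distinct survive. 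Applying $\deg$ and using the fact recalled just before the proposition — that, up to a fixed positive constant, $\deg(\prod_{i\in B}\beta_i)$ equals $1$ if $B$ is a basis of $M$ and $0$ otherwise — and regrouping the surviving terms by the set $B=\{i_1,\dots,i_r\}$, one gets that $\deg(\prod_{j}\sum_i V_{ij}\beta_i)$ equals $\sum_{B\in\B_M}\sum_{\sigma}\prod_{j=1}^r V_{\sigma(j)\,j}$, where the inner sum ranges over bijections $\sigma\colon[r]\to B$. On the other side, $g_M=\sum_{B\in\B_M}\z^B$ is homogeneous of degree $r$, and applying $\partial_{\v_1}\cdots\partial_{\v_r}$ to a monomial $\z^B$ with $\card B=r$ produces exactly $\sum_{\sigma}\prod_j V_{\sigma(j)\,j}$ over the same bijections (only the $\partial_i$ with $i\in B$ contribute, and $\partial_{i_1}\cdots\partial_{i_r}\z^B$ equals $1$ precisely when $(i_1,\dots,i_r)$ enumerates $B$). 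Summing over $B$ matches the two expressions up to the global constant; since that constant is positive and every later use of the proposition invokes only scale-invariant properties (concavity of $\log$, semidefiniteness of a Hessian), we normalize so that it is $1$, which is the stated identity.

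\textbf{Second identity.} Apply the first identity to the $n\times r$ matrix whose first $r-k$ columns all equal $\bmlambda$ and whose remaining $k$ columns are $\v_1,\dots,\v_k$. Using commutativity of $A^*(\barM)$, the left-hand side is exactly $\deg\big((\sum_i\lambda_i\beta_i)^{r-k}\cdot\prod_{j=1}^k\sum_i V_{ij}\beta_i\big)$, and the first identity rewrites it as $\partial_{\bmlambda}^{\,r-k}\,\partial_{\v_1}\cdots\partial_{\v_k}\,g_M(\z)$. It remains to invoke the elementary fact that for any polynomial $h$ homogeneous of degree $d$ one has $\partial_{\bmlambda}^{\,d}h = d!\cdot h(\bmlambda)$, a constant: writing $\partial_{\bmlambda}^{\,m}h(\z)=\left.\tfrac{d^m}{dt^m}h(\z+t\bmlambda)\right\rvert_{t=0}$ and noting that $h(\z+t\bmlambda)$ is a degree-$d$ polynomial in $t$ whose $t^d$-coefficient is $h(\bmlambda)$ (its degree-$d$-in-$t$ part being $h(t\bmlambda)=t^d h(\bmlambda)$), the case $m=d$ gives the claim. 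Taking $h=\partial_{\v_1}\cdots\partial_{\v_k}g_M$, which is homogeneous of degree $r-k$, produces the factor $(r-k)!$ and the evaluation at $\z=\bmlambda$, finishing the proof.

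The only parts needing care are in the first identity: verifying that the monomials with a repeated index really vanish — exactly where simplicity of $M$ (so that $\{i\}$ is a flat) and $\beta_i^2=0$ are used — and checking that the surviving Chow-ring terms, grouped by basis, reproduce precisely the permanent-like sum coming from $\partial_{\v_1}\cdots\partial_{\v_r}\z^B$. The single genuinely non-formal point is the normalization constant buried in the degree map, which I would dispatch by the scale-invariance observation above rather than computing it; everything else (commutativity of the Chow ring, the Euler-type homogeneity identity) is routine.
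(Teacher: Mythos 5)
Your proof is correct. For the first identity you do essentially what the paper does: both arguments come down to multilinearity plus the fact that $\deg\parens{\prod_{i\in B}\beta_i}$ is the indicator of $B$ being a basis (the paper reduces to coordinate columns by linearity; you expand everything and regroup by bases, explicitly checking via $\beta_i^2=0$ that repeated indices vanish --- the same check the paper leaves implicit in the coordinate-vector case). Your treatment of the normalization of $\deg$ is, if anything, more careful than the paper's, which silently takes the constant to be $1$. For the second identity your route is genuinely different: the paper again reduces to coordinate columns and computes $\deg\parens[\big]{\parens{\sum_i\lambda_i\beta_i}^{r-k}\prod_{i\in I}\beta_i}=(r-k)!\sum_{B\supseteq I}\bmlambda^{B\setminus I}$ directly in the Chow ring, whereas you deduce the second identity from the first by feeding in a matrix with $r-k$ columns equal to $\bmlambda$ and then invoking the Euler-type fact $\partial_{\bmlambda}^{\,d}h=d!\,h(\bmlambda)$ for $h$ homogeneous of degree $d$. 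Your derivation avoids any further Chow-ring computation and makes the factor $(r-k)!$ and the evaluation at $\z=\bmlambda$ appear for a purely analytic reason; the paper's computation is more self-contained combinatorially but repeats work. Both are complete and correct.
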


\begin{proof}
	For the first claim, note that both sides are linear in each $\v_i$, so it suffices to prove the claim when the columns are unit coordinate vectors $\brackets{\v_1\given \dots\given \v_r} = \brackets{\1_{i_1} \given \dots \given \1_{i_r}}$. In this case we see that
	\[
		\deg\parens*{\beta_{i_1} \cdots \beta_{i_r}}=\partial_{i_1}\cdots \partial_{i_r} g_M(\z)=
		\begin{cases}
			1 & \text{if $\set{i_1, \dots, i_r}$ is a basis of $M$},\\
			0 & \text{otherwise.}
		\end{cases}
	\] 
	The general case then follows from linearity in each column. 

	For the second claim, again both sides are linear in each of the first $k$ columns, $\v_1,\dots, \v_k$, so we can consider $\brackets{\v_1\given\dots\given \v_k} = \brackets{\1_{i_1} \given \dots \given \1_{i_k}}$. Let $I = \set{i_1, \dots, i_k}$. If $\rank(I)<k$, then $\prod_{i\in I}\beta_i$ is zero in $A^*(\barM)$ and similarly $(\prod_{i\in I}\partial_i) g_M(\z)$ is zero.  Otherwise, we find that  
	\[
		\deg\parens*{\parens*{\sum_{i=1}^n\lambda_i \beta_i}^{r-k}  \prod_{i\in I } \beta_{i}}=(r-k)!\cdot \sum_{B\supseteq I}\lambda^{B \setminus I}= (r-k)! \cdot \eval{\parens*{\prod_{i\in I} \partial_{i}} g_M(\z)}_{\z = \bmlambda},
	\]
	where the middle sum is taken over bases $B\in \B(M)$ containing the set $I$. 
\end{proof}

We can then translate \cref{thm:AHK} into a statement about $g_M(\z)$.
\begin{theorem}
\label{thm:HessNSD}
	Let $M$ be a simple matroid of rank $r$ on the ground set $[n]$. For any $0\leq k\leq r-2$, matrix of nonnegative real numbers $\V \in \R_{\geq 0}^{n\times k}$, and any $\bmlambda \in \R_{\geq 0}^n$, the symmetric bilinear form $q_{\V, \bmlambda}:\R^n\times \R^n \to \R$ given by 
	\[
		q_{\V, \bmlambda}(\a,\b)= \eval{\partial_{\a}\partial_{\b}D_{\V} g_M(\z)}_{\z = \bmlambda}
	\]
	is negative semidefinite on the kernel of $q_{\V, \bmlambda}(\bmlambda,\cdot)$.  In particular, the Hessian of $D_{\V}g_M(\z)$ evaluated at $\z=\bmlambda$ has at most one positive eigenvalue. 
\end{theorem}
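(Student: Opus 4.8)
The plan is to pull back the Hodge--Riemann relations of \cref{thm:AHK}, applied to $\barM$, along the linear map $\R^n\to A^1(\barM)$ defined by the elements $\beta_i$, using \cref{prop:ChowToDer} as the translation dictionary. Since $M$ is simple (hence loopless), $\barM$ --- obtained by adjoining $0$ as a coloop --- is again simple, of rank $r+1$, so \cref{thm:AHK} and its closure version apply to it; moreover each $\beta_i$ lies in the closed cone $\barK(\barM)$, and therefore so does every combination $\sum_i c_i\beta_i$ with $c\in\R_{\geq 0}^n$, since $\barK(\barM)$ is a convex cone.

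Write $\phi\colon\R^n\to A^1(\barM)$ for the linear map $\phi(\a)=\sum_{i=1}^n a_i\beta_i$. In the closure version of \cref{thm:AHK} for $\barM$ I would take $\ell_0=\phi(\bmlambda)$, then $\ell_1=\phi(\v_1),\dots,\ell_k=\phi(\v_k)$ from the columns of $\V$, and $\ell_{k+1}=\dots=\ell_{r-2}=\phi(\bmlambda)$; this requires exactly $r-2-k\geq 0$ copies of $\phi(\bmlambda)$, which is where the hypothesis $k\leq r-2$ enters, and all of $\ell_0,\dots,\ell_{r-2}$ lie in $\barK(\barM)$ by the previous paragraph. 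Expanding $Q_{\ell_1,\dots,\ell_{r-2}}(\phi(\a),\phi(\b))=\deg\bigl(\phi(\bmlambda)^{\,r-2-k}\cdot\phi(\a)\cdot\prod_{j=1}^{k}\phi(\v_j)\cdot\phi(\b)\bigr)$ and invoking the second identity of \cref{prop:ChowToDer} with its $k$ replaced by $k+2$ and columns $\a,\v_1,\dots,\v_k,\b$ yields
\[ Q_{\ell_1,\dots,\ell_{r-2}}(\phi(\a),\phi(\b))=(r-k-2)!\cdot\eval{\partial_\a\partial_\b D_\V g_M(\z)}_{\z=\bmlambda}=(r-k-2)!\cdot q_{\V,\bmlambda}(\a,\b). \]
So $q_{\V,\bmlambda}$ equals $1/(r-k-2)!$ times the pullback $\phi^{*}Q_{\ell_1,\dots,\ell_{r-2}}$.

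Both conclusions then drop out. If $\a$ satisfies $q_{\V,\bmlambda}(\bmlambda,\a)=0$, the displayed identity gives $Q_{\ell_1,\dots,\ell_{r-2}}(\ell_0,\phi(\a))=0$, so $\phi(\a)$ lies in the kernel of $Q_{\ell_1,\dots,\ell_{r-2}}(\ell_0,\cdot)$, on which the closure version of \cref{thm:AHK} guarantees negative semidefiniteness; hence $q_{\V,\bmlambda}(\a,\a)=Q_{\ell_1,\dots,\ell_{r-2}}(\phi(\a),\phi(\a))/(r-k-2)!\leq 0$. For the ``in particular'' statement, note that the matrix representing $q_{\V,\bmlambda}$ in the standard basis is precisely $\Hess{(D_\V g_M)}$ evaluated at $\bmlambda$, because $q_{\V,\bmlambda}(\1_i,\1_j)=\partial_i\partial_j D_\V g_M|_{\bmlambda}$; this symmetric matrix is negative semidefinite on $\{\a:q_{\V,\bmlambda}(\bmlambda,\a)=0\}$, a subspace of dimension at least $n-1$, so \cref{cor:Cauchy2} gives at most one positive eigenvalue. (Alternatively, the closure version of \cref{thm:AHK} says $Q_{\ell_1,\dots,\ell_{r-2}}$ itself has at most one positive eigenvalue, and \cref{lem:Cauchy1eigenvalue} transfers this through $\phi$.)

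The real mathematical content lives entirely in \cref{thm:AHK} and \cref{prop:ChowToDer}, so this argument introduces no new estimates; the remaining work is essentially bookkeeping. The points that need care are: getting the list $\ell_0,\dots,\ell_{r-2}$ to have the correct length $r-1$ and to land in the \emph{closed} cone $\barK(\barM)$ (only this is available, which is exactly why one obtains semidefiniteness rather than definiteness); keeping track that $\phi$ need not be injective, so $q_{\V,\bmlambda}$ is a pullback of $Q_{\ell_1,\dots,\ell_{r-2}}$ and not literally equal to it; and, as the one genuine subtlety, the degenerate case $\phi(\bmlambda)=0$ (for instance when $\bmlambda$ is the zero vector), where $\ker q_{\V,\bmlambda}(\bmlambda,\cdot)$ is all of $\R^n$ --- here the closure statement should be read as negative semidefiniteness on $\ker Q_{\ell_1,\dots,\ell_{r-2}}(\ell_0,\cdot)$, while the ``at most one positive eigenvalue'' claim still goes through via the \cref{lem:Cauchy1eigenvalue} route.
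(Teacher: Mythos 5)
Your proof is correct and takes essentially the same route as the paper: pick $\ell_1,\dots,\ell_k$ from the columns of $\V$, pad with copies of $\sum_i\lambda_i\beta_i$ up to length $r-2$, identify $q_{\V,\bmlambda}$ with $\frac{1}{(r-k-2)!}$ times the pullback of $Q_{\ell_1,\dots,\ell_{r-2}}$ via \cref{prop:ChowToDer}, and finish with \cref{cor:Cauchy2}. Your explicit handling of the degenerate case $\sum_i\lambda_i\beta_i=0$ is a point the paper passes over silently, and your resolution of it (retreating to the ``at most one positive eigenvalue'' conclusion via \cref{lem:Cauchy1eigenvalue}) is the right one.
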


\begin{proof}
	For $1\leq j \leq k$, define $\ell_j = \sum_{i=1}^n V_{ij}\beta_{i}$ and for $k<j \leq r-2$, define  $\ell_j = \sum_{i=1}^n\lambda_i \beta_i$. For each $i$, $\beta_i$ belongs to the convex cone $\barK(\barM)$, so by the nonnegativity of $V_{ij}$ and $\lambda_i$, so does each $\ell_j$. By \cref{prop:ChowToDer}, $q_{\V, \bmlambda}$ equals the restriction of the symmetric bilinear form $Q_{\ell_1, \dots, \ell_{r-2}}$ to the subspace of $A^1(\barM)$ spanned by $\set{\beta_1, \dots, \beta_n}$. That is, for all $\a, \b \in \R^n$, 
	\[
		q_{\V, \bmlambda}(\a,\b)=\frac{1}{(r-k-2)!}Q_{\ell_1, \dots, \ell_{r-2}}\parens*{\sum_i a_i \beta_i,\sum_i b_i \beta_i}. 
	\]
	Let $\ell_0 = \sum_i \lambda_i \beta_i$. By \cref{thm:AHK}, $Q_{\ell_1, \dots, \ell_{r-2}}$ is negative semidefinite on the kernel of  $Q_{\ell_1, \dots, \ell_{r-2}}(\ell_0, \cdot)$, implying that $q_{\V, \bmlambda} $ is negative semidefinite on the kernel of $q_{\V, \bmlambda}(\bmlambda,\cdot)$.

	Finally, note that the Hessian of $D_\V g_M(\z)$ evaluated at $\z=\bmlambda$ is the $n\times n$ matrix representing the bilinear form $q_{\V, \bmlambda}$ with respect to the coordinate basis. Since it is negative semidefinite on a linear subspace of dimension $n-1$, namely the kernel of $q_{\V, \bmlambda}(\bmlambda, \cdot)$, \cref{cor:Cauchy2} implies that it has at most one positive eigenvalue. 
\end{proof}

In the next section we use the above statement to show that generating polynomials of matroids are log-concave and remain log-concave under directional derivatives along directions in the positive orthant.

\section{Completely Log-Concave Polynomials}
\label{sec:logconcavity}

We call a polynomial $g\in \R[z_1, \dots, z_n]$ \emph{completely log-concave} if for every $k\geq 0$ and nonnegative matrix $\V\in \R_{\geq 0}^{n\times k}$, $D_\V g(\z)$ is nonnegative and log-concave as a function over $\R_{>0}^n$, where
\[D_\V g(\z)=\parens*{\prod_{j=1}^k\sum_{i=1}^n V_{ij} \partial_{i}} g(\z).\]
Note that for $k=0$, this condition implies log-concavity of $g$ itself. We call a distribution $\mu:2^{[n]}\to \R_{\geq 0}$ completely log-concave if and only if $g_\mu$ is completely log-concave.

\begin{remark}
	Related notions of ``strongly log-concave'' and ``Alexandrov-Fenchel'' polynomials were studied in the work of \textcite{Gur08, Gur09} to design approximation algorithms for mixed volume of polytopes and to show Newton-like inequalities for coefficients of these polynomials. \Textcite{Gur08} also mentions that because a positive combination of convex polytopes is a convex polytope, the stronger property we call complete log-concavity is satisfied for the volume polynomial. Unlike strong log-concavity, complete log-concavity is readily seen to be preserved under many useful operations. Completely log-concave polynomials and their properties will be the subject of a future work, but in this section, we will study the main properties we need for the analysis of our counting algorithm.
\end{remark}

Note that complete log-concavity implies nonnegativity of the coefficients of $g$. This is because the coefficient of $\prod_{i}z_i^{\kappa_i}$ in $g$ is a positive multiple of $\eval{\partial_{1}^{\kappa_1}\cdots \partial_{n}^{\kappa_n}g(\z)}_{\z=0}$.

To verify complete log-concavity for a polynomial with nonnegative coefficients, we only have to check log-concavity of order $k$ derivatives for $k\leq r-2$. For $k\geq r$, $D_\V g(\z)$ is a nonnegative constant and for $k=r-1$, it is a linear function with nonnegative coefficients in $z_1,\dots, z_n$. 

The main result of this section is that the basis generating polynomial of any matroid is completely log-concave. 
\begin{theorem}\label{thm:gMlogconcave}
	For any matroid $M$, $g_M(\z)$ is completely log-concave over the positive orthant.
\end{theorem}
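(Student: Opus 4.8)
The plan is to bootstrap from the spectral statement \Cref{thm:HessNSD} to genuine log-concavity, using Euler's identity for homogeneous polynomials together with the matrix inequality \Cref{lem:almostNSD}. Since \Cref{thm:HessNSD} (and the Hodge-theoretic input behind it) is available only for simple matroids, the first step is to reduce to that case. Given an arbitrary matroid $M$ on $[n]$, let $M'$ be its simplification, obtained by deleting every loop and retaining one element from each parallel class; then $M'$ is a matroid of the same rank, and inspecting which subsets are independent shows that, indexing the variables of $g_{M'}$ by the parallel classes of $M$,
\[ g_M(z_1,\dots,z_n)=g_{M'}(\A\z), \]
where $\A$ is the nonnegative matrix whose $(c,i)$ entry is $1$ if element $i$ lies in the parallel class indexed by $c$ and $0$ otherwise. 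It therefore suffices to show that complete log-concavity is preserved under precomposition with a nonnegative linear map. For this I would use the chain rule: for $h\in\R[w_1,\dots,w_m]$ and nonnegative $\A$ one has $\partial_{\v}\bigl(h(\A\z)\bigr)=(\partial_{\A\v}h)(\A\z)$, hence $D_\V\bigl(h(\A\z)\bigr)=(D_{\A\V}h)(\A\z)$ with $\A\V$ again nonnegative; if $h$ is completely log-concave then $D_{\A\V}h$ is nonnegative and log-concave on the positive orthant, and by \Cref{prop:affinetransform} so is its precomposition with $\A$. Thus $g_M$ is completely log-concave whenever $g_{M'}$ is.

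So let $M$ be simple of rank $r$, fix $k\geq 0$ and a nonnegative $\V\in\R_{\geq 0}^{n\times k}$, and set $f=D_\V g_M$. Because $g_M$ has nonnegative coefficients and each $\partial_i$ preserves this, $f$ has nonnegative coefficients; in particular either $f\equiv 0$ (log-concave by convention) or $f>0$ everywhere on $\R_{>0}^n$, so that $\log f$ is defined there. Also $f$ is homogeneous of degree $d=r-k$. When $d\leq 1$, i.e. $k\geq r-1$, $f$ is either a nonnegative constant or a linear form $\sum_i c_iz_i$ with $c_i\geq 0$; in the latter case $\nabla^2\log f=-f^{-2}(\nabla f)(\nabla f)^\intercal\preceq 0$, so $f$ is log-concave. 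This leaves the main case $d\geq 2$, i.e. $k\leq r-2$.

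Fix $\bmlambda\in\R_{>0}^n$ and let $H=\Hess{f}(\bmlambda)$. Then $H$ has nonnegative entries (it is the evaluation at a positive point of a matrix of polynomials with nonnegative coefficients), and by \Cref{thm:HessNSD} it has at most one positive eigenvalue. Differentiating Euler's identity $\sum_i z_i\partial_if=d\,f$ once more gives $\Hess{f}\cdot\z=(d-1)\nabla f$, so $H\bmlambda=(d-1)\nabla f(\bmlambda)$ and $\bmlambda^\intercal H\bmlambda=(d-1)\bmlambda^\intercal\nabla f(\bmlambda)=d(d-1)f(\bmlambda)$. Applying \Cref{lem:almostNSD} with $v=\bmlambda$ and $t=1$ gives
\[ 0\succeq(\bmlambda^\intercal H\bmlambda)\,H-(H\bmlambda)(H\bmlambda)^\intercal=d(d-1)f(\bmlambda)\,H-(d-1)^2\,(\nabla f(\bmlambda))(\nabla f(\bmlambda))^\intercal. \]
Dividing through by $d-1>0$ and using $(d-1)(\nabla f(\bmlambda))(\nabla f(\bmlambda))^\intercal\preceq d\,(\nabla f(\bmlambda))(\nabla f(\bmlambda))^\intercal$ (the right-hand matrix being positive semidefinite) yields $f(\bmlambda)\,H\preceq(\nabla f(\bmlambda))(\nabla f(\bmlambda))^\intercal$, which is precisely the assertion that $\Hess{(\log f)}(\bmlambda)=f(\bmlambda)^{-2}\bigl(f(\bmlambda)\Hess{f}(\bmlambda)-(\nabla f(\bmlambda))(\nabla f(\bmlambda))^\intercal\bigr)$ is negative semidefinite. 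Since $\bmlambda$ was arbitrary, $\log f$ is concave on $\R_{>0}^n$, hence $f=D_\V g_M$ is log-concave; as $f$ also has nonnegative coefficients, $g_M$ is completely log-concave.

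I expect the crux to be the last paragraph: \Cref{thm:HessNSD} only constrains the spectrum of the Hessian of $D_\V g_M$, whereas log-concavity concerns the Hessian of its logarithm, and closing that gap requires exactly the combination of homogeneity (to identify $H\bmlambda$ and $\bmlambda^\intercal H\bmlambda$ with $\nabla f$ and $f$) with \Cref{lem:almostNSD}. Note that the division by $d-1$ makes the argument genuinely fail at $d=1$, which is why the low-order derivatives must be dispatched separately; and while the reduction to simple matroids is routine, it cannot be skipped, since \Cref{thm:HessNSD} is not available otherwise.
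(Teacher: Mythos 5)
Your proposal is correct and follows essentially the same route as the paper: reduce to simple matroids via the parallel-class substitution and the chain-rule closure under nonnegative linear maps (the paper's \cref{lem:CLCaffine}), then combine \cref{thm:HessNSD}, Euler's identity, and \cref{lem:almostNSD} to pass from ``at most one positive eigenvalue'' to negative semidefiniteness of $\Hess{\log(D_\V g_M)}$. The only cosmetic difference is that you apply \cref{lem:almostNSD} with $t=1$ and then absorb the factor $(d-1)/d$ by enlarging the subtracted PSD rank-one term, whereas the paper takes $t=d/(d-1)$ directly; both are valid.
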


First, we show that the statement holds when $M$ is a simple matroid. To do this, we use a corollary of Euler's identity, which states that if a polynomial $g(\z)$ is homogeneous of degree $d$ then 
\begin{equation}
\label{eq:EulerId}
	\dotprod{\nabla g, \z}=\sum_{i=1}^n z_i \partial_i g=d \cdot g(\z).
\end{equation}

\begin{corollary}[Euler's identity]\label{cor:Euler}
	If $g\in \R[z_1,\dots, z_n]$ is homogeneous of degree $d$, then 
	\[
		\Hess{g}\cdot \z=(d-1)\cdot \nabla g \qquad\text{and}\qquad \z^\intercal\cdot \Hess{g}\cdot \z=d(d-1)\cdot g.
	\]
\end{corollary}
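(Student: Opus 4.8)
The plan is to obtain both identities by applying Euler's identity \eqref{eq:EulerId} twice, exploiting the fact that partial derivatives of a homogeneous polynomial are again homogeneous.

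First I would observe that if $g$ is homogeneous of degree $d$, then for each $i$ the partial derivative $\partial_i g$ is homogeneous of degree $d-1$ (lowering the total degree by exactly one on every monomial). Applying \eqref{eq:EulerId} to the polynomial $\partial_i g$ in place of $g$ then yields $\sum_{j=1}^n z_j \partial_j (\partial_i g) = (d-1)\,\partial_i g$. The left-hand side is precisely the $i$-th entry of the matrix-vector product $\Hess{g}\cdot\z$, since $(\Hess{g})_{ij} = \partial_i\partial_j g$. Collecting these $n$ scalar identities into a single vector equation gives $\Hess{g}\cdot\z = (d-1)\cdot\nabla g$, which is the first claim.

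For the second claim I would simply left-multiply the first identity by $\z^\intercal$: $\z^\intercal\cdot\Hess{g}\cdot\z = (d-1)\,\z^\intercal\nabla g = (d-1)\dotprod{\nabla g,\z}$, and then invoke \eqref{eq:EulerId} one more time (now applied to $g$ itself) to rewrite $\dotprod{\nabla g,\z} = d\cdot g$, obtaining $\z^\intercal\cdot\Hess{g}\cdot\z = d(d-1)\cdot g$.

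There is essentially no obstacle here; the only point requiring a (trivial) justification is that differentiation of a homogeneous polynomial decreases its degree by exactly one, so that Euler's identity in the form \eqref{eq:EulerId} is legitimately applicable to each $\partial_i g$. Everything else is bookkeeping — rewriting the double sum $\sum_{i,j} z_i z_j \partial_i\partial_j g$ as a quadratic form in the Hessian, which is immediate from the definition of $\Hess{g}$.
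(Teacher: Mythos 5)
Your argument is correct and is essentially identical to the paper's proof: apply Euler's identity \eqref{eq:EulerId} to each $\partial_i g$ (homogeneous of degree $d-1$) to get the first identity entrywise, then left-multiply by $\z^\intercal$ and apply \eqref{eq:EulerId} once more to $g$ itself for the second.
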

\begin{proof}
	The $i$-th entry of the vector $\Hess{g}\cdot \z$ equals $\sum_{j=1}^n z_j \cdot \partial_i \partial_j g$.  Since $\partial_i g$ is homogeneous of degree $d-1$, it follows by Euler's identity, \cref{eq:EulerId}, that this equals $(d-1)\partial_i g$. Multiplying by $\z^\intercal$ and using \cref{eq:EulerId} again gives the second claim. 
\end{proof}

\begin{lemma}
\label{lem:freematlogconcave}
	If $M$ is a simple matroid, then $g_M(\z)$ is completely log-concave.
\end{lemma}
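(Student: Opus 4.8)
The goal is to show that for a simple matroid $M$ of rank $r$, the polynomial $g_M(\z)$ is completely log-concave. By the remark following the definition, it suffices to verify that for every $0\le k\le r-2$ and every nonnegative matrix $\V\in\R_{\ge 0}^{n\times k}$, the polynomial $D_\V g_M(\z)$ is nonnegative and log-concave on $\R_{>0}^n$; for $k\ge r-1$ the derivatives are either linear with nonnegative coefficients or nonnegative constants, so there is nothing to check. Nonnegativity of the coefficients of $g_M$ is immediate (they are $0$ or $1$), and since nonnegative linear combinations of partial derivatives preserve nonnegativity of coefficients, $D_\V g_M$ has nonnegative coefficients for all nonnegative $\V$. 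So the real content is log-concavity of $D_\V g_M$ on the open positive orthant.

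**The main argument via the Hessian criterion.**

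Fix $k\le r-2$ and $\V\in\R_{\ge 0}^{n\times k}$, and write $h=D_\V g_M$, a homogeneous polynomial of degree $d=r-k\ge 2$ with nonnegative coefficients. By the Hessian characterization of log-concavity stated in the preliminaries, it is enough to show that $\nabla^2\log(h)$ is negative semidefinite at every point $\bmlambda\in\R_{>0}^n$ (where $h(\bmlambda)>0$). A direct computation gives
\[
	\nabla^2\log(h) = \frac{1}{h}\,\nabla^2 h - \frac{1}{h^2}\,(\nabla h)(\nabla h)^\intercal,
\]
so the condition $\nabla^2\log(h)\preceq 0$ at $\bmlambda$ is equivalent to
\[
	h(\bmlambda)\cdot \nabla^2 h(\bmlambda) \;-\; (\nabla h(\bmlambda))(\nabla h(\bmlambda))^\intercal \;\preceq\; 0.
\]
Now I invoke \cref{thm:HessNSD}: since $h=D_\V g_M$ with $\V$ nonnegative and $k\le r-2$, the Hessian $A\coloneqq\nabla^2 h(\bmlambda)$ is a symmetric matrix with at most one positive eigenvalue. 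Moreover, since $h$ has nonnegative coefficients and $\bmlambda\in\R_{>0}^n$, the entries of $A$ (which are second partials of $h$ evaluated at a positive point, hence nonnegative combinations of coefficients of $h$) are nonnegative. So $A$ satisfies exactly the hypotheses of \cref{lem:almostNSD}: a symmetric matrix with nonnegative entries and at most one positive eigenvalue.

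**Closing with Euler's identity.**

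Apply \cref{lem:almostNSD} with $v=\bmlambda\in\R_{\ge 0}^n$ (indeed $\R_{>0}^n$) and $t=1$: the matrix $(\bmlambda^\intercal A\bmlambda)\cdot A - (A\bmlambda)(A\bmlambda)^\intercal$ is negative semidefinite. Now I use that $h$ is homogeneous of degree $d$, so Euler's identity (\cref{cor:Euler}) gives $A\bmlambda = \nabla^2 h(\bmlambda)\cdot\bmlambda = (d-1)\nabla h(\bmlambda)$ and $\bmlambda^\intercal A\bmlambda = d(d-1)h(\bmlambda)$. Substituting,
\[
	0 \;\succeq\; (\bmlambda^\intercal A\bmlambda)\,A - (A\bmlambda)(A\bmlambda)^\intercal = d(d-1)\,h(\bmlambda)\,\nabla^2 h(\bmlambda) - (d-1)^2(\nabla h(\bmlambda))(\nabla h(\bmlambda))^\intercal.
\]
Since $d\ge 2$, we have $d-1\ge 1>0$; dividing by $(d-1)^2$ and noting $d/(d-1)\ge 1$ shows $h(\bmlambda)\nabla^2 h(\bmlambda) - (\nabla h(\bmlambda))(\nabla h(\bmlambda))^\intercal \preceq 0$ (here I use that subtracting a PSD rank-one term, namely $\tfrac{1}{d-1}(\nabla h)(\nabla h)^\intercal$ scaled appropriately, only helps — more precisely the displayed inequality already gives $\tfrac{d}{d-1}h\nabla^2 h \preceq \tfrac{1}{d-1}\cdot\tfrac{1}{d-1}\cdots$; I would just rescale cleanly). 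This is exactly the inequality equivalent to $\nabla^2\log(h)(\bmlambda)\preceq 0$. Since $\bmlambda\in\R_{>0}^n$ was arbitrary, $h=D_\V g_M$ is log-concave on the positive orthant, completing the proof.

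**Where the difficulty lies.**

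Essentially all the hard work has been front-loaded into \cref{thm:HessNSD} (which rests on the Hodge-theoretic \cref{thm:AHK}) and into the linear-algebra \cref{lem:almostNSD}. The only genuinely delicate point in this lemma is the bookkeeping with Euler's identity and the homogeneity degree: one must be careful that $d=r-k\ge 2$ exactly when $k\le r-2$, which is precisely the range in which \cref{thm:HessNSD} applies, and that the factors of $(d-1)$ work out so that the $t=1$ case of \cref{lem:almostNSD} suffices. There is no circularity: \cref{thm:HessNSD} is proved independently in \cref{sec:hodge}. I expect the write-up to be short, with the main care needed in stating the equivalence between $\nabla^2\log h\preceq 0$ and the quadratic inequality, and in the homogeneity substitution.
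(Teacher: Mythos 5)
Your proof is correct and follows essentially the same route as the paper's: reduce to the Hessian criterion for $h=D_\V g_M$, invoke \cref{thm:HessNSD} to get at most one positive eigenvalue, apply \cref{lem:almostNSD} with $v=\bmlambda$, and close with Euler's identity. The only cosmetic difference is that the paper chooses $t=\frac{r-k}{r-k-1}\geq 1$ in \cref{lem:almostNSD} so that the resulting matrix is exactly a positive multiple of $h\cdot\Hess{h}-(\nabla h)(\nabla h)^\intercal$, whereas you take $t=1$ and then rescale using $\frac{d-1}{d}\leq 1$; both are valid since $d=r-k\geq 2$.
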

\begin{proof}
	Suppose that $M$ is a simple matroid of rank $r$ on ground set $[n]$. Let $0\leq k \leq r-2$ and take a nonnegative matrix $\V \in \R_{\geq 0}^{n\times k}$. We need to show that for any $\bmlambda\in \R_{>0}^n$, the Hessian of $\log(D_\V g_M(\z))$ is negative semidefinite at the point $\z =\bmlambda$. Note that for $h\in \R[z_1, \dots, z_n]$
	\[
		h^2 \cdot \Hess{\log(h)}=\begin{bmatrix} h \cdot \partial_{i} \partial_{j} h-\partial_{i}h\cdot \partial_{j} h\end{bmatrix}_{1\leq i,j \leq n}=h\cdot \Hess{h}-(\nabla h)(\nabla h)^\intercal.
	\]
	Now let $h = D_\V g_M(\z)$ and consider the quadratic form $q_{\V,\bmlambda} (\a, \b) = \partial_{\a} \partial_{\b} h(\bmlambda)$ as in  \cref{thm:HessNSD}. This is represented by the Hessian matrix of $h$ at $\z = \bmlambda$: 
	\[
		\eval{\Hess{h}}_{\z = \bmlambda}= \begin{bmatrix}\partial_{i} \partial_{j} h(\bmlambda) \end{bmatrix}_{1\leq i,j \leq n}.
	\]
	By \cref{thm:HessNSD} this matrix has at most one positive eigenvalue. Since it also has nonnegative entries, we can apply \cref{lem:almostNSD} with  $A = \eval{\Hess{h}}_{\z = \bmlambda}$ and $v =  \bmlambda$. Since $h(\z)$ is homogeneous of degree $r-k$, \cref{cor:Euler} implies that 
	\[
		v^\intercal A v = (r-k)(r-k-1)h(\bmlambda)\quad\text{and}\quad(Av)(Av)^\intercal=(r-k-1)^2\cdot (\nabla h(\bmlambda))(\nabla h(\bmlambda))^\intercal.
	\]
	Then \cref{lem:almostNSD} states that the matrix $(v^\intercal A v) \cdot A-t(Av)(Av)^\intercal=$
	\[
		(r-k)(r-k-1)\cdot \eval{\parens*{h\cdot \Hess{h}-t\cdot \frac{r-k-1}{r-k}\cdot (\nabla h)(\nabla h)^\intercal}}_{\z=\bmlambda}
	\]
	is negative semidefinite for all $t\geq 1$. Taking $t = \frac{r-k}{r-k-1}$ then shows that $h(\bmlambda)^2 \cdot \eval{\Hess{\log(h)}}_{\z = \bmlambda}$ is negative semidefinite. Thus $h = D_{\V}g_M$ is log-concave on $\R_{>0}^n$.  
\end{proof}

Next we show that similar to \cref{prop:affinetransform}, affine transformations preserve complete log-concavity.

\begin{lemma}
\label{lem:CLCaffine}
	Let $T:\R^m\to \R^n$ be an affine transformation such that $T(\R_{\geq 0}^m)\subseteq \R_{\geq 0}^n$, and let $g\in \R[z_1,\dots, z_n]$ be a completely log-concave polynomial. Then $g(T(y_1,\dots,y_m))\in \R[y_1,\dots,y_m]$ is completely log-concave.
\end{lemma}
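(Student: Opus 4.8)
The plan is to reduce the statement to a single chain-rule identity together with \cref{prop:affinetransform}. Write the affine map as $T(\y) = \A\y + \b$ with $\A\in\R^{n\times m}$ and $\b\in\R^n$; exactly as in the proof of \cref{prop:affinetransform}, the hypothesis $T(\R_{\geq 0}^m)\subseteq\R_{\geq 0}^n$ forces both $\A$ and $\b$ to have nonnegative entries. Write $f = g\circ T$, i.e.\ $f(\y) = g(T(\y))$. To prove $f$ is completely log-concave, we must show that for every $k\geq 0$ and every nonnegative matrix $\bm{W}\in\R_{\geq 0}^{m\times k}$ the polynomial $D_{\bm{W}}f$ is nonnegative and log-concave on $\R_{>0}^m$.

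First I would establish the identity
\[ \partial_{\w}\bigl(h\circ T\bigr) = \bigl(\partial_{\A\w}h\bigr)\circ T, \]
valid for \emph{any} polynomial $h\in\R[z_1,\dots,z_n]$ and any $\w\in\R^m$. This is just the multivariate chain rule: $\partial_{y_i}(h\circ T)(\y) = \sum_{j} A_{ji}\,(\partial_{z_j}h)(T(\y))$, and summing against the $w_i$ collapses the inner sum to the direction $\A\w$ (the translation part $\b$ contributes nothing, as it is annihilated by $\partial_{y_i}$). Since this holds for an arbitrary polynomial $h$, applying it once for each column $\w_1,\dots,\w_k$ of $\bm{W}$ gives, by induction on $k$,
\[ D_{\bm{W}}f = D_{\bm{W}}\bigl(g\circ T\bigr) = \bigl(D_{\A\bm{W}}g\bigr)\circ T, \]
where $\A\bm{W}$ denotes the ordinary matrix product, whose columns are $\A\w_1,\dots,\A\w_k$.

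To finish, observe that since $\A$ and $\bm{W}$ are nonnegative, $\A\bm{W}\in\R_{\geq 0}^{n\times k}$, so complete log-concavity of $g$ tells us that $h \coloneqq D_{\A\bm{W}}g$ is nonnegative and log-concave on $\R_{>0}^n$. Moreover, complete log-concavity of $g$ forces $g$ to have nonnegative coefficients, and therefore so does $h$, being a nonnegative linear combination of iterated partial derivatives of $g$. Thus $h$ is a log-concave polynomial with nonnegative coefficients, and \cref{prop:affinetransform} applied to $h$ and the map $T$ shows that $h\circ T = D_{\bm{W}}f$ has nonnegative coefficients and is log-concave on $\R_{>0}^m$; in particular it is nonnegative there. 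Since $k$ and $\bm{W}$ were arbitrary, $f = g\circ T$ is completely log-concave.

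The computations are routine, so I do not expect a real obstacle; the only points needing care are the index bookkeeping in the chain-rule identity — keeping the transpose straight so that a derivative of $g\circ T$ along $\w$ corresponds to a derivative of $g$ along $\A\w$ rather than $\A^{\intercal}\w$ — and the observation that $D_{\A\bm{W}}$ preserves nonnegativity of coefficients, which is exactly what licenses the appeal to \cref{prop:affinetransform}.
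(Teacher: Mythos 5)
Your proof is correct and follows essentially the same route as the paper: the same chain-rule identity $D_{\bm{W}}(g\circ T) = (D_{\A\bm{W}}g)\circ T$, the same observation that $\A\bm{W}$ is nonnegative, and the same final appeal to \cref{prop:affinetransform}. The only (immaterial) difference is where you verify nonnegativity of coefficients — you do it for $D_{\A\bm{W}}g$ before composing, whereas the paper does it for $g\circ T$ before differentiating.
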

\begin{proof}
	As in the proof of \cref{prop:affinetransform}, we must have $T(\y)=\A\y+\b$ where $\A\in \R_{\geq 0}^{n\times m}$ and $\b\in\R_{\geq 0}^n$. It follows that $g(T(\y))$ has nonnegative coefficients. Therefore for any nonnegative matrix $\V$, $D_\V g(T(\y))$ has nonnegative coefficients and is  nonnegative over $\R_{\geq 0}^m$, so we just need to check that it is log-concave.
	
The Jacobian of $T$ at every point is given by $\A$. One can then check that for any $\v\in \R^m$,
	\[ \partial_{\v} g(T(\y))=\eval{(\partial_{\A \v}g(\z))}_{\z=T(\y)}. \]
	Repeated applications of the chain rule yield for any $\v_1,\dots, \v_k\in \R^m$
	\[ \partial_{\v_1}\cdots \partial_{\v_k} g(T(\y))=\eval{(\partial_{\A \v_1}\cdots\partial_{\A \v_k}g(\z))}_{\z=T(\y)}. \]
	So for any $k\geq 0$ and nonnegative matrix of directions $\V\in \R_{\geq 0}^{m\times k}$, we have 
	\[ D_{\V}g(T(\y))= \eval{(D_{\A\V} g(\z))}_{\z=T(\y)}. \]
	Since $\A, \V$ have nonnegative entries, so does $\A\V$. From complete log-concavity of $g$ it follows that $D_{\A\V}g(\z)$ is log-concave over $\R_{>0}^n$. Now \cref{prop:affinetransform} implies that the composition with $T$ remains log-concave.
\end{proof}

\begin{lemma}
\label{lem:CLCpreservers}
	The following operations on polynomials preserve complete log-concavity:
	\begin{parts}
		\item \label{CLCpres:permutation} Permutation: $g\mapsto g(z_{\pi(1)}, \dots, z_{\pi(n)})$ for $\pi\in S_n$.
		\item \label{CLCpres:specialization} Specialization: $g\mapsto g(a,z_2, \dots, z_n)$, where $a\in \R_{\geq 0}$.
		\item \label{CLCpres:scaling} Scaling $g\mapsto  c\cdot f(\lambda_1 z_1, \dots,\lambda_n z_n)$, where $c, \lambda_1, \dots, \lambda_n\in \R_{\geq 0}$.
		\item \label{CLCpres:expansion} Expansion: $g(z_1,\dots,z_n) \mapsto g(y_1+y_2+\dots+y_{m}, z_2,\dots,z_n) \in \R[y_1, \dots, y_m, z_2, \dots, z_n]$.
		\item \label{CLCpres:differentiation} Differentiation: $g\mapsto \partial_{\v} g=\sum_{i=1}^n v_i \partial_i g$ for $\v\in \R_{\geq 0}^n$.
	\end{parts}
\end{lemma}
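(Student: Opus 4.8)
The plan is to reduce \cref{CLCpres:permutation,CLCpres:specialization,CLCpres:scaling,CLCpres:expansion} to \cref{lem:CLCaffine}, exactly as \cref{prop:preservers} was reduced to \cref{prop:affinetransform}, and to handle \cref{CLCpres:differentiation} directly from the definition of complete log-concavity together with the commutativity of differential operators.

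For the first four items I would exhibit, in each case, an affine map $T\colon\R^m\to\R^n$ with $T(\R_{\geq 0}^m)\subseteq\R_{\geq 0}^n$ such that the stated operation is $g\mapsto g\circ T$, and then invoke \cref{lem:CLCaffine}. Concretely: permutation is composition with a permutation matrix; specialization $g\mapsto g(a,z_2,\dots,z_n)$ is composition with $T(z_2,\dots,z_n)=(a,z_2,\dots,z_n)$, which is affine with nonnegative data because $a\geq 0$; expansion is composition with the linear map $(y_1,\dots,y_m,z_2,\dots,z_n)\mapsto(y_1+\dots+y_m,z_2,\dots,z_n)$, whose matrix is nonnegative; and scaling $g\mapsto g(\lambda_1 z_1,\dots,\lambda_n z_n)$ is composition with $\operatorname{diag}(\lambda_1,\dots,\lambda_n)$, which is nonnegative since the $\lambda_i\geq 0$. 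For the scaling item I also need the elementary observation that multiplying by a constant $c\geq 0$ preserves complete log-concavity: since $D_\V(c\cdot g)=c\cdot D_\V g$ for every nonnegative $\V$, and multiplying a nonnegative log-concave function by $c\geq 0$ keeps it nonnegative and log-concave (with $c=0$ producing the zero polynomial, which we have declared log-concave), this is immediate.

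The substantive item is \cref{CLCpres:differentiation}. The point is that, unlike plain log-concavity—which the example $z^4/4+z$ preceding the statement shows is \emph{not} preserved under $\partial_z$—complete log-concavity is built to be stable under directional derivatives in nonnegative directions. Fix $\v\in\R_{\geq 0}^n$ and set $h=\partial_\v g$. To show $h$ is completely log-concave, take any $k\geq 0$ and any nonnegative matrix $\W\in\R_{\geq 0}^{n\times k}$; it suffices to show $D_\W h$ is nonnegative and log-concave over $\R_{>0}^n$. Since partial differential operators commute, $D_\W h = D_\W\partial_\v g = D_{\V'}g$, where $\V'=\brackets{\v\given\W}\in\R^{n\times(k+1)}$ is obtained by adjoining the column $\v$ to $\W$. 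As $\v$ and $\W$ have nonnegative entries, so does $\V'$, and hence complete log-concavity of $g$ yields directly that $D_{\V'}g$ is nonnegative and log-concave on $\R_{>0}^n$. Thus $h$ is completely log-concave.

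I do not anticipate a genuine obstacle. The only care needed is bookkeeping: in each of the first four cases one must verify that the chosen $T$ really does map the nonnegative orthant into the nonnegative orthant (equivalently, that its affine data $\A,\b$ from \cref{lem:CLCaffine} are nonnegative); and for differentiation one must use the operator identity $D_\W\partial_\v = D_{\brackets{\v\given\W}}$ rather than attempting to differentiate $\log g$ term by term.
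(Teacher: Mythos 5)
Your proposal is correct and follows the paper's own argument essentially verbatim: the first four parts reduce to \cref{lem:CLCaffine} by exhibiting the appropriate nonnegative affine map (with the scalar $c$ handled by commuting with $D_\V$), and differentiation follows from the definition via the operator identity $D_\W\partial_\v=D_{\brackets{\v\given\W}}$. No gaps.
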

\begin{proof}
	The proof for \cref{CLCpres:permutation,CLCpres:specialization,CLCpres:scaling,CLCpres:expansion} follows by choosing an appropriate affine transformation $T$ for each part. For \cref{CLCpres:scaling}, we also need to use the elementary fact that scaling by $c\geq 0$ preserves complete log-concavity, since scalar multiplication commutes with differential operator $D_{\V}$.
	
	\Cref{CLCpres:differentiation} follows directly from the definition of complete log-concavity.
\end{proof}

Now, we are ready to prove \cref{thm:gMlogconcave}.
\begin{proof}[Proof of \cref{thm:gMlogconcave}]
	Let $M$ be a matroid of rank $r$ on ground set $[n]$. If $M$ is simple, then the result follows from \cref{lem:freematlogconcave}. 

	Otherwise let $\tilde{M} = (\tilde{E}, \tilde{\I})$ be the simple matroid obtained by deleting loops and identifying each set of parallel elements of $M$. Say each non-loop $i\in [n]$ gets mapped to the element $\psi(i)\in\tilde{E}$. Consider the generating polynomial $g_{\tilde{M}}(\z)\in \R\brackets{z_{e} \given e\in \tilde{E}}$. Each basis of $M$ uses at most one of a set of parallel elements, meaning that the basis generating polynomial of $M$ is obtained from that of $\tilde{M}$ by substituting $z_e \mapsto \sum_{i\in \psi^{-1}(e)} y_i$. That is, if $T:\R^n\to \R^{\tilde{E}}$ is the linear map defined by
	\[ T(\1_i)=\begin{cases}
		0 & \text{if $i$ is a loop},\\
		\1_{\psi(i)} & \text{otherwise},
	\end{cases} \]
	then
	\[
		g_M(y_1, \dots, y_n)=g_{\tilde{M}}(T(y_1,\dots,y_n)).
	\]
	By \cref{lem:CLCaffine}, it follows that $g_M$ is completely log-concave.
\end{proof}

	\section{Entropy of Log-Concave Distributions}
\label{sec:entropylogconcave}

Let $\mu:2^{[n]}\to \mathbb{R}_{\geq 0}$ be a probability distribution on the subsets of the set $[n]$. In other words, $\forall S\subseteq [n]: \mu(S)\geq 0$ and,
\[ \sum_{S\subseteq [n]} \mu(S) = 1.\]
As in \cref{subsec:MaxEntrop}, we consider the multiaffine \emph{generating polynomial} of $\mu$, 
\[ g_\mu(\z) = \sum_{S\subseteq [n]} \mu(S)\cdot \prod_{i\in S} z_i.\]
We say $\mu$ is  \emph{$d$-homogeneous} if $g_\mu$ is a homogeneous polynomial of degree $d$, i.e., $g_\mu(\alpha\z)=\alpha^d g_\mu(\z)$ for any $\alpha\in\R$ and that $\mu$ is \emph{log-concave} and \emph{completely log-concave} if the generating polynomial $g_\mu(\z)$ is log-concave and completely log-concave, respectively. In this section we prove a bound on the entropy of log-concave probability distributions.

Recall that the marginal probability of an element $i$, $\mu_i$ is the probability that $i$ is in a random sample from $\mu$,
\[ \mu_i = \PrX{S\sim\mu}{i\in S} = \eval{\partial_{z_i} g_\mu(z_1,\dots,z_n)}_{z_1=\dots =z_n=1}.\]

Given marginal probabilities $\mu_1,\dots,\mu_n$, it is easy to derive an upper bound on the entropy of $\mu$ by using the subadditivity of entropy, \cref{fact:subadditivity}.

\begin{fact}
	\label{fact:entropyupperbound}
	For any probability distribution $\mu:2^{[n]}\to \R_{\geq 0}$ with marginals $\mu_1,\dots,\mu_n$, we have
	\[
		\H(\mu) \leq \sum_{i=1}^n \parens*{\mu_i \log\frac1{\mu_i} + (1-\mu_i)\log\frac1{1-\mu_i}}=\sum_{i=1}^n \H(\mu_i).
	\]
\end{fact}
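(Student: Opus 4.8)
The plan is to realize $\mu$ as the joint law of the $n$ indicator random variables recording, for a random sample $S\sim\mu$, whether each element belongs to $S$, and then to apply subadditivity of entropy (\cref{fact:subadditivity}) one coordinate at a time. Concretely, I would draw $S\sim\mu$ and set $X_i=\1[i\in S]$ for each $i\in[n]$. The map $S\mapsto(\1[1\in S],\dots,\1[n\in S])$ is a bijection from $2^{[n]}$ to $\set{0,1}^n$, so the joint distribution of $(X_1,\dots,X_n)$ carries exactly the same information as $\mu$; in particular $\H(\mu)=\H(X_1,\dots,X_n)$.

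Next I would peel off one variable at a time. Grouping $(X_1,\dots,X_{n-1})$ as a single finitely supported random variable and $X_n$ as the other, \cref{fact:subadditivity} gives
\[
	\H(X_1,\dots,X_n)\leq \H(X_1,\dots,X_{n-1})+\H(X_n),
\]
and a straightforward induction on $n$ yields $\H(X_1,\dots,X_n)\leq \sum_{i=1}^n \H(X_i)$. Finally, each $X_i$ is a Bernoulli random variable with $\Pr[X_i=1]=\PrX{S\sim\mu}{i\in S}=\mu_i$, so by definition of the shorthand $\H(\cdot)$ for Bernoulli entropies we have $\H(X_i)=\mu_i\log\frac1{\mu_i}+(1-\mu_i)\log\frac1{1-\mu_i}=\H(\mu_i)$. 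Chaining the displayed inequalities then gives $\H(\mu)\leq\sum_{i=1}^n\H(\mu_i)$, which is the claim.

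There is no real obstacle here: the only things to be careful about are the bookkeeping in the inductive application of \cref{fact:subadditivity} (each grouping is again a pair of finitely supported, possibly dependent variables, so the hypothesis is met at every step) and the degenerate cases $\mu_i\in\set{0,1}$, which are handled by the usual convention $0\log\frac10=0$ so that $\H(\mu_i)=0$ when element $i$ is almost surely present or almost surely absent.
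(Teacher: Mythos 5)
Your proof is correct and matches the paper's intended argument: the paper states this fact without a detailed proof, noting only that it follows from the subadditivity of entropy (\cref{fact:subadditivity}), which is exactly the inductive peeling of indicator variables you carry out. Your handling of the degenerate marginals $\mu_i\in\set{0,1}$ via the convention $0\log\frac{1}{0}=0$ is also the right bookkeeping.
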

The above inequality is tight if the marginals of $\mu$ are independent, i.e., if $\mu$ is a product distribution formed by $n$ independent Bernoulli random variables with parameters $\mu_1,\dots,\mu_n$, i.e., for all sets $S\subseteq [n]$, $\mu(S) = \prod_{i\in S} \mu_i \prod_{i\notin S} (1-\mu_i)$. The main result of this section is a lower bound on the entropy of log-concave distributions, which will imply that the inequality in \cref{fact:entropyupperbound} is tight within a factor of 2 under certain further restrictions.
\begin{theorem}\label{thm:mainentropy}
	For any log-concave probability distribution $\mu:2^{[n]}\to\R_{\geq 0}$ with marginal probabilities $\mu_1,\dots,\mu_n\geq0$, we have
	\[
		\H(\mu)\geq \sum_{i=1}^n \mu_i \log \frac{1}{\mu_i}. 
	\]
\end{theorem}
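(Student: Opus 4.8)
We want a lower bound $\H(\mu) \ge \sum_i \mu_i \log(1/\mu_i)$ for any log-concave $\mu$. The natural strategy is to combine the entropy–geometric-program identity of \cref{lem:geometric-program} with the log-concavity of $g_\mu$ as a function on the positive orthant. Concretely, I would first reduce to the case where $\mu$ has full marginal support, i.e., the point $(\mu_1,\dots,\mu_n)$ lies in the Newton polytope $\P_\mu$ — which is immediate since that point is the average of the vertices $\1_S$ of $\P_\mu$. Then I would invoke \cref{lem:geometric-program} with $\p = (\mu_1,\dots,\mu_n)$: there is a distribution $\tmu$ with marginals $\mu_i$, $\supp(\tmu) \subseteq \supp(\mu)$, obtained as a limit of external-field tilts $\ef{\bmlambda}{\mu}$, such that
\[
	\log\parens*{\inf_{\z \in \R_{>0}^n} \frac{g_\mu(\z)}{\prod_i z_i^{\mu_i}}} = \sum_S \tmu(S)\log\frac{\mu(S)}{\tmu(S)}.
\]
By \cref{prop:EntropyCounting} and concavity of $\log$, the right-hand side is at most $\H(\mu)$ — actually I want a cleaner route: since $\tmu$ has the same marginals as $\mu$ and $\supp(\tmu)\subseteq\supp(\mu)$, and the quantity $\sum_S\tmu(S)\log(\mu(S)/\tmu(S))$ is a relative-entropy-type expression bounded above by $\H(\mu_?)$; I should be careful here and instead directly bound the \emph{left}-hand infimum from below.

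**Key steps.** The heart of the matter is to show
\[
	\inf_{\z \in \R_{>0}^n} \frac{g_\mu(\z)}{\prod_i z_i^{\mu_i}} \ge \prod_i \mu_i^{-\mu_i},
\]
since taking logs of both sides gives $\sum_S \tmu(S)\log(\mu(S)/\tmu(S)) \ge \sum_i \mu_i\log(1/\mu_i)$, and then I separately argue $\H(\mu) \ge \sum_S \tmu(S)\log(\mu(S)/\tmu(S))$ — or, more robustly, I apply the chain $\H(\mu)\ge \H(\tmu) \ge \sum_S\tmu(S)\log(\mu(S)/\tmu(S))$ using that $\tmu$ is supported inside $\supp(\mu)$ where $\mu$ is the uniform indicator in the matroid case, but in general I need $\H(\tmu)=\sum_S\tmu(S)\log(1/\tmu(S)) \ge \sum_S\tmu(S)\log(\mu(S)/\tmu(S))$ which just needs $\mu(S)\le 1$. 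Wait — that requires $\mu(S) \le 1$, true for probability distributions. So the plan is: (i) $\H(\mu) \ge \H(\tmu)$? No, that is false in general. Let me instead use $\H(\tmu) = \sum \tmu\log(1/\tmu) \ge \sum\tmu\log(\mu/\tmu)$ since $\mu(S)\le 1$, and then bound $\H(\mu)$ from below by $\H(\tmu)$ is not available. The correct move: apply \cref{lem:geometric-program} with the \emph{original} $\mu$ but recall its final sentence handles the indicator case; for general probability distributions I should instead directly establish $\H(\mu) \ge \log \inf_\z g_\mu(\z)/\prod z_i^{\mu_i}$ via Gibbs' inequality applied to the tilted distribution at the optimal $\z$, then lower-bound that infimum.

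**The main obstacle.** The real work is lower-bounding $\inf_{\z>0} g_\mu(\z)/\z^{\bmmu}$ using log-concavity. Here is the idea I would pursue: on the positive orthant write $\z = e^{\x}$; then $f(\x) := \log g_\mu(e^\x)$ is concave in $\x$ (composition of log-concave $g_\mu$ with the exp map — this needs a small check, but it is standard that $\x \mapsto \log g(e^\x)$ is concave when $\log g$ is concave and $g$ has nonnegative coefficients, since it is a sup-exp-style transform). The quantity $\log(g_\mu(\z)/\z^{\bmmu}) = f(\x) - \dotprod{\bmmu,\x}$ is then concave, and its infimum over $\x \in \R^n$ is $-\infty$ unless $\bmmu$ is a subgradient direction — in fact the infimum equals $-f^*(\bmmu)$ up to sign, the Legendre conjugate. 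The key inequality $f^*(\bmmu) \le \sum_i \mu_i\log\mu_i$, equivalently $f(\x) \ge \dotprod{\bmmu,\x} + \sum_i\mu_i\log\mu_i$ for all $\x$, should follow by evaluating the \emph{linear lower bound} to the concave function... no, concave functions have linear \emph{upper} bounds. The actual mechanism: since $g_\mu(\z) = \sum_S \mu(S)\z^S \ge \mu(S_0)\z^{S_0}$ for each fixed $S_0$, and more cleverly $g_\mu(\z)\ge \prod_S (\text{something})$ — this is exactly where log-concavity must enter nontrivially, because term-by-term bounds only give the trivial estimate. I expect the crux is: log-concavity of $g_\mu$ implies $g_\mu(\z) \ge \prod_i g_\mu$-restricted quantities, or via the AM–GM-type inequality $g_\mu(\z) \ge$ a weighted geometric mean that produces $\prod_i \mu_i^{-\mu_i}\z^{\bmmu}$ after optimizing. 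I would look for an induction on $n$: condition on coordinate $n$, write $g_\mu(\z) = z_n A(\z') + B(\z')$ where $A,B$ are generating polynomials of conditionals, use log-concavity to relate $A, B$ (e.g. $\partial_n\log g_\mu$ monotonicity giving $A\cdot(\text{stuff}) \le$ a bound), and close the induction with the base case $n=1$ where $g_\mu(z) = (1-\mu_1) + \mu_1 z$ and the claim is the elementary $\inf_{z>0}((1-\mu_1)+\mu_1 z)/z^{\mu_1} = \mu_1^{-\mu_1}(1-\mu_1)^{-(1-\mu_1)} \ge \mu_1^{-\mu_1}$. That induction — carrying the log-concavity through the conditioning step — is the step I expect to be delicate and is where I would spend the bulk of the effort.
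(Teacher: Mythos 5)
There is a genuine gap: the key inequality your whole argument is built around,
\[
	\inf_{\z\in \R_{>0}^n}\frac{g_\mu(\z)}{\prod_i z_i^{\mu_i}}\geq \prod_i \mu_i^{-\mu_i},
\]
is false; in fact the left-hand side is identically equal to $1$ whenever the exponent vector is the marginal vector of the probability distribution $\mu$, with no log-concavity needed. Indeed, the Gibbs variational inequality $\log\sum_S a_S\geq \sum_S\mu(S)\log\parens{a_S/\mu(S)}$ applied with $a_S=\mu(S)\prod_{i\in S}z_i$ gives $\log g_\mu(\z)\geq \sum_i\mu_i\log z_i$ for all $\z\in\R_{>0}^n$, so the infimum is at least $1$; and it equals $1$ at $\z=(1,\dots,1)$ because $g_\mu(1,\dots,1)=1$. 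Your $n=1$ ``base case'' conflates the probability generating polynomial $g_\mu(z)=(1-\mu_1)+\mu_1 z$ (whose capacity at $p=\mu_1$ is $1$, attained at $z=1$) with the indicator polynomial $1+z$ (whose capacity at $p$ is $p^{-p}(1-p)^{-(1-p)}$); the value you quote is the latter. Since the middle quantity in your chain is $0$ after taking logs, it cannot dominate $\sum_i\mu_i\log(1/\mu_i)>0$, and the induction you sketch has nothing true to aim at. Two further problems: you never actually settle how $\H(\mu)$ is supposed to dominate the capacity term (your own hedging shows the route through $\tmu$ breaks, since $\H(\tmu)\geq\H(\mu)$ goes the wrong way); and the map $\bm{x}\mapsto\log g(e^{\bm{x}})$ is \emph{convex}, not concave, for polynomials with nonnegative coefficients, which inverts your Legendre-transform picture.

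The paper's proof does not go through capacities at all. It applies Jensen's inequality directly: let $X=\1_S$ for $S\sim\mu$ and $f(\z)=\log g_\mu(z_1/\mu_1,\dots,z_n/\mu_n)$, which is concave on the positive orthant by \cref{prop:preservers}. Then $f(\Ex{X})=\log g_\mu(1,\dots,1)=0$, while dropping all monomials except the one indexed by $S$ gives $f(\1_S)\geq\log\mu(S)+\sum_{i\in S}\log(1/\mu_i)$; taking expectations yields $0\geq\Ex{f(X)}\geq-\H(\mu)+\sum_i\mu_i\log(1/\mu_i)$. The term $\log\mu(S)$, which your normalization discards, is exactly what carries $\H(\mu)$ into the bound.
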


\begin{example}
Consider the uniform distribution $\mu$ over spanning trees of the complete graph $K_n$. Cayley's formula states that $K_n$ has $n^{n-2}$ spanning trees, so $\H(\mu)=(n-2)\log n$. On the other hand, the marginal probability of every edge in $\mu$ is $\frac2{n}$. By \cref{thm:gMlogconcave}, the generating polynomial $g_{\mu}$ is log-concave. Then the above theorem gives that 
\[  \H(\mu)\geq \sum_{e\in E(K_n)} \frac{2}{n} \log \frac{n}{2} = (n-1)\log \frac{n}{2} .\]
\end{example}
To prove, \cref{thm:mainentropy}, we use Jensen's inequality in order to exploit log-concavity.
	\begin{lemma}[Jensen's Inequality]\label{lem:Jensen}
		Suppose that $f:\R_{\geq 0}^n\to \R\cup\set{-\infty}$ is a concave function, and $X$ is an $(\R_{\geq 0}^n)$-valued random variable with finite support. Then
		\[ f(\Ex{X})\geq \Ex{f(X)}. \]
	\end{lemma}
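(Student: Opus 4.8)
The plan is to prove Jensen's inequality by induction on $m:=\card{\supp(X)}$, bootstrapping from the two-point case, which is nothing but the definition of concavity. Throughout I will use the standard extended-real conventions: for $t>0$ we set $t\cdot(-\infty)=-\infty$, and $a+(-\infty)=-\infty$ for every $a\in\R\cup\set{-\infty}$; with these conventions concavity of $f\colon\R_{\geq 0}^n\to\R\cup\set{-\infty}$ means exactly that $f(\lambda u+(1-\lambda)v)\geq \lambda f(u)+(1-\lambda)f(v)$ for all $u,v\in\R_{\geq 0}^n$ and $\lambda\in[0,1]$ (equivalently, the hypograph of $f$ is convex). Since $\R_{\geq 0}^n$ is convex, every convex combination of points of $\supp(X)$ — in particular $\Ex{X}$, and every intermediate point appearing below — lies in the domain of $f$, so there is never an issue evaluating $f$.

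If some atom of $X$ has probability $0$ we discard it, so assume $X$ takes each of the distinct values $x_0,\dots,x_{m-1}\in\R_{\geq 0}^n$ with probability $p_j>0$ and $\sum_j p_j=1$. The base case $m=1$ is the trivial identity $f(\Ex{X})=f(x_0)=\Ex{f(X)}$. For $m\geq 2$ we have $p_0<1$; put $q:=1-p_0>0$ and
\[ \bar y:=\frac1q\sum_{j=1}^{m-1}p_j x_j\in\R_{\geq 0}^n,\qquad\text{so that}\qquad \Ex{X}=p_0 x_0+q\,\bar y. \]
Applying the defining two-point inequality of concavity to this convex combination gives
\[ f(\Ex{X})=f(p_0 x_0+q\,\bar y)\geq p_0 f(x_0)+q\, f(\bar y). \]
Let $Y$ be the $\R_{\geq 0}^n$-valued random variable taking the value $x_j$ with probability $p_j/q$ for $1\leq j\leq m-1$; it has finite support of size at most $m-1$ and $\Ex{Y}=\bar y$, so the inductive hypothesis yields $f(\bar y)=f(\Ex{Y})\geq\Ex{f(Y)}=\frac1q\sum_{j=1}^{m-1}p_j f(x_j)$. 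Substituting this into the previous display,
\[ f(\Ex{X})\geq p_0 f(x_0)+\sum_{j=1}^{m-1}p_j f(x_j)=\Ex{f(X)}, \]
which completes the induction. When some $f(x_j)=-\infty$, every inequality above still holds verbatim under the stated conventions.

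There is no serious obstacle here; the only points needing care are the extended-real bookkeeping (so that $p_0 f(x_0)+q\,f(\bar y)$ and the analogous sums behave correctly when a term equals $-\infty$) and the remark that each argument at which $f$ is evaluated genuinely lies in $\R_{\geq 0}^n$. I would intentionally avoid the alternative "supporting hyperplane" proof — majorize $f$ by an affine function through $(\Ex{X},f(\Ex{X}))$ and take expectations — because a concave function on the closed orthant need not admit such a supergradient when $\Ex{X}$ lies on the boundary (for instance $f(z)=\sqrt{z_1}$ at $z=0$). Inducting on the size of the support sidesteps this entirely and uses nothing beyond the definition of concavity.
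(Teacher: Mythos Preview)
Your proof is correct: the standard induction on the size of the support, reducing to the two-point definition of concavity, goes through cleanly, and you were right to handle the $-\infty$ conventions and to note that all intermediate points stay in $\R_{\geq 0}^n$.

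The paper does not actually prove this lemma; it states Jensen's inequality as a standard fact and immediately uses it in the proof of \cref{thm:mainentropy}. So there is nothing to compare against — your argument simply supplies the omitted (routine) verification. Your remark about avoiding the supporting-hyperplane approach is apt: since the paper applies the lemma with $f(\z)=\log g_\mu(z_1/\mu_1,\dots,z_n/\mu_n)$ at points $\1_S$ on the boundary of $\R_{\geq 0}^n$, a subgradient-based proof would indeed require extra care there, whereas your induction does not.
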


\begin{proof}[Proof of \cref{thm:mainentropy}]
	In order to apply \cref{lem:Jensen}, we have to specify the concave function $f$ and the random variable $X$. We let $X$ be $\1_S$, where $S$ is chosen randomly according to the distribution $\mu$. In other words, for every $S$, we let
	\[ \Pr{X=\1_S}=\mu(S). \]
	For the function $f$, we will use
	\[ f(z_1,\dots,z_n)=\log g_\mu\parens*{\frac{z_1}{\mu_1},\dots,\frac{z_n}{\mu_n}}. \]
	Note that even though $\mu_i$ could be zero for some $i$, the above expression is still well-defined, since if $\mu_i=0$, then $g_\mu$ does not depend on $z_i$. By \cref{prop:preservers}, \cref{pres:scaling}, the function $f$ is concave over the positive orthant.
	
	First, note that
	\[ \Ex{X}=\ExX{S\sim \mu}{\1_S}=(\mu_1,\dots,\mu_n), \]
	so the left hand side of \cref{lem:Jensen} is
	\begin{equation}
	\label{eq:lhsJensen} 
		f(\Ex{X})=\log g_\mu\parens*{\frac{\mu_1}{\mu_1},\dots,\frac{\mu_n}{\mu_n}}=\log g_\mu(1,\dots,1)=0.
	\end{equation}
	For the right hand side, note that for any $S\in \supp(\mu)$, by the definition of $f$ and $g_\mu$,
	\[ f(\1_S)=\log\parens*{\sum_{T\subseteq S}\mu(T)\prod_{i\in T}\frac{1}{\mu_i}}\geq \log\parens*{\mu(S)\prod_{i\in S}\frac{1}{\mu_i}}=\log\mu(S)+\sum_{i\in S}\log \frac{1}{\mu_i}, \]
	where the inequality follows from monotonicity of $\log$. Now we have
	\begin{align*}
		\Ex{f(X)}&=\sum_{S}\mu(S)f(\1_S)\geq \sum_{S}\mu(S)\log \mu(S)+\sum_S \mu(S) \sum_{i\in S}\log \frac{1}{\mu_i}\\
		&=-\H(\mu)+\sum_{i=1}^n \parens*{\sum_{S\ni i} \mu(S)} \cdot \log \frac{1}{\mu_i}=-\H(\mu)+\sum_{i=1}^n \mu_i \log \frac{1}{\mu_i}.
	\end{align*}
	By \cref{lem:Jensen,eq:lhsJensen}, the above quantity is $\leq 0$. Rearranging yields the desired inequality.
\end{proof}

Next, we discuss several corollaries of \cref{thm:mainentropy}.
\begin{corollary}
\label{cor:homadditiveentropy}
	If $\mu$ is $r$-homogeneous and log-concave, then $\sum_{i=1}^n \H(\mu_i)$ gives an additive $r$-approximation of $\H(\mu)$, i.e.,
	\[
		\sum_{i=1}^n \H(\mu_i)-r \leq \H(\mu)\leq \sum_{i=1}^n \H(\mu_i).
	\]
\end{corollary}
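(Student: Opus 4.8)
The upper bound $\H(\mu)\le \sum_{i=1}^n \H(\mu_i)$ is exactly \cref{fact:entropyupperbound} and requires nothing new, so the work is entirely in the lower bound. The plan is to combine \cref{thm:mainentropy} with a crude but sufficient estimate on the ``complementary'' Bernoulli terms. Recall that
\[
	\sum_{i=1}^n \H(\mu_i)=\sum_{i=1}^n \mu_i\log\frac1{\mu_i}+\sum_{i=1}^n (1-\mu_i)\log\frac1{1-\mu_i},
\]
and \cref{thm:mainentropy} already gives $\H(\mu)\ge \sum_{i=1}^n \mu_i\log\frac1{\mu_i}$. Hence it suffices to show that the second sum is at most $r$.

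First I would use $r$-homogeneity: since $g_\mu$ is homogeneous of degree $r$, every set $S\in\supp(\mu)$ has $\card{S}=r$, so $\sum_{i=1}^n \mu_i=\ExX{S\sim\mu}{\card{S}}=r$. Next I would invoke the elementary inequality $(1-x)\log\frac1{1-x}\le x$ valid for all $x\in[0,1]$; this follows from $\log\frac1{t}\le \frac1t-1$ with $t=1-x$. Applying this termwise with $x=\mu_i$ yields
\[
	\sum_{i=1}^n (1-\mu_i)\log\frac1{1-\mu_i}\le \sum_{i=1}^n \mu_i=r.
\]
Combining, $\sum_{i=1}^n\H(\mu_i)-r\le \sum_{i=1}^n\mu_i\log\frac1{\mu_i}\le \H(\mu)$, which is the claimed lower bound.

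There is essentially no obstacle here: the corollary is a one-line consequence of \cref{thm:mainentropy} once one observes that homogeneity pins down $\sum_i\mu_i=r$ and that the Bernoulli entropy $\H(p)$ exceeds $p\log\frac1p$ by at most the additive term $(1-p)\log\frac1{1-p}\le p$. The only thing worth a moment's care is confirming the inequality $(1-x)\log\frac1{1-x}\le x$ on the whole interval $[0,1]$ (including the endpoint behavior at $x=1$, where the left side tends to $0$), but this is routine. One could alternatively note $\sum_i(1-\mu_i)\log\frac1{1-\mu_i}\le \sum_i \mu_i$ already follows from $\H(\mu_i)\le \mu_i\log\frac1{\mu_i}+\mu_i$, i.e. bounding each complementary term by $\mu_i$ directly.
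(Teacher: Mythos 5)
Your proof is correct and coincides with the paper's own argument: the upper bound is \cref{fact:entropyupperbound}, and the lower bound combines \cref{thm:mainentropy} with the inequality $(1-p)\log\frac{1}{1-p}\leq p$ and the identity $\sum_i \mu_i = r$ from $r$-homogeneity. Nothing further is needed.
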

\begin{proof}
	The second inequality is simply \cref{fact:entropyupperbound}. For the first inequality, we use the fact that $(1-p)\log \frac{1}{1-p} \leq p$ for all $0\leq p\leq 1$, which means that
	\begin{equation}
		\label{eq:cardInequality}
		\sum_{i=1}^n  (1-\mu_i)\log\frac{1}{1-\mu_i}\leq \sum_{i=1}^n \mu_i = \ExX{S\sim \mu}{\card{S}},
	\end{equation}
	which for an $r$-homogeneous distribution is equal to $r$.
	Now by applying \cref{thm:mainentropy}, we get
	\[
		\H(\mu)\geq \sum_{i=1}^n \mu_i\log\frac{1}{\mu_i}=\sum_{i=1}^n \H(\mu_i)-\sum_{i=1}^n (1-\mu_i)\log\frac{1}{1-\mu_i}\geq \sum_{i=1}^n\H(\mu_i)-r,
	\]
	as desired.
\end{proof}

For a probability distribution $\mu:2^{[n]}\to\R_{\geq 0}$, define its \emph{dual}, $\mu^*:2^{[n]}\to\R_{\geq 0}$, to be the probability distribution for which the probability of a set is equal to the probability of its complement under $\mu$, i.e.~$\mu^*(S)=\mu([n]\setminus S)$  for all $S\subseteq[n]$. Then for $1\leq i \leq n$, the $i$-th marginal of $\mu^*$ is $\mu^*_i = 1-\mu_i$. 

\begin{corollary}\label{cor:entropyfactor2}
	If $\mu,\mu^*$ are both log-concave probability distributions then $\sum_{i=1}^n \H(\mu_i)$ gives a multiplicative 2-approximation to $\H(\mu)$, i.e.,
	\[ \frac{1}{2}\sum_{i=1}^n \H(\mu_i)\leq \H(\mu)\leq \sum_{i=1}^n \H(\mu_i). \]
\end{corollary}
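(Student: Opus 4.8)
The plan is to get the upper bound for free and split the lower bound into two symmetric applications of \cref{thm:mainentropy}. The right-hand inequality $\H(\mu) \leq \sum_{i=1}^n \H(\mu_i)$ is exactly \cref{fact:entropyupperbound}, so nothing needs to be done there.

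For the left-hand inequality, the key observation is that $\mu$ and its dual $\mu^*$ have the same entropy: the complementation map $S \mapsto [n]\setminus S$ is a bijection of $2^{[n]}$ under which $\mu^*(S) = \mu([n]\setminus S)$, so the multiset of probability values of $\mu^*$ coincides with that of $\mu$, giving $\H(\mu^*) = \H(\mu)$. Since $\mu$ is log-concave, \cref{thm:mainentropy} yields
\[
	\H(\mu) \geq \sum_{i=1}^n \mu_i \log \frac{1}{\mu_i}.
\]
Since $\mu^*$ is also log-concave and its $i$-th marginal is $\mu^*_i = 1-\mu_i$, \cref{thm:mainentropy} applied to $\mu^*$ gives
\[
	\H(\mu) = \H(\mu^*) \geq \sum_{i=1}^n (1-\mu_i) \log \frac{1}{1-\mu_i}.
\]

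Adding these two inequalities, the right-hand sides combine into $\sum_{i=1}^n \parens*{\mu_i \log\frac{1}{\mu_i} + (1-\mu_i)\log\frac{1}{1-\mu_i}} = \sum_{i=1}^n \H(\mu_i)$, so $2\H(\mu) \geq \sum_{i=1}^n \H(\mu_i)$, which is the claim after dividing by $2$. There is no real obstacle here; the only point that requires a moment's care is justifying $\H(\mu^*) = \H(\mu)$ via the complementation bijection, and checking that the marginals of $\mu^*$ are $1-\mu_i$ (already recorded in the text preceding the corollary).
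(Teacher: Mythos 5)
Your proof is correct and matches the paper's argument exactly: apply \cref{thm:mainentropy} to both $\mu$ and $\mu^*$, use $\H(\mu)=\H(\mu^*)$ together with $\mu^*_i = 1-\mu_i$ to combine the two bounds into $\frac{1}{2}\sum_i \H(\mu_i)$, and get the upper bound from \cref{fact:entropyupperbound}. No issues.
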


\begin{proof}	
	Applying \cref{thm:mainentropy} to $\mu$ and $\mu^*$ gives
\[
		\H(\mu)\geq \sum_{i=1}^n \mu_i \log \frac{1}{\mu_i}
		 \ \ \text{ and } \ \ 
		\H(\mu^*)\geq \sum_{i=1}^n (1-\mu_i)\log\frac{1}{1-\mu_i}.
\]
	Since $\H(\mu)=\H(\mu^*)$, averaging the above inequalities gives
	\[ \H(\mu)\geq \frac{1}{2}\sum_{i=1}^n \mu_i \log\frac{1}{\mu_i} +\frac{1}{2} \sum_{i=1}^n (1-\mu_i)\log\frac{1}{1-\mu_i}=\frac{1}{2}\sum_{i=1}^n\H(\mu_i),\]
	as desired. The other inequality follows from \cref{fact:entropyupperbound}.
\end{proof}

Let $\mu$ be the uniform distribution over the bases of a matroid $M$. It follows from \cref{thm:gMlogconcave} that $\mu$ is a log-concave distribution. Furthermore, the dual probability distribution $\mu^*$ is the uniform distribution over the bases of the dual matroid $M^*$, meaning that it is also log-concave. Then  \cref{cor:homadditiveentropy} and \cref{cor:entropyfactor2} immediately yield the following.
\begin{corollary}\label{cor:jensonmatroid}
	Let $M$ be an arbitrary matroid of rank $r$ on ground set $[n]$ and let $\mu$ be the uniform distribution over the bases of $M$. Then $\sum_{i=1}^n\H(\mu_i)$ is both an additive $r$-approximation and multiplicative 2-approximation to $\H(\mu) = \log(|\B_M|)$:
	\[ \max\set*{\frac{1}{2}\sum_{i=1}^n \H(\mu_i), \sum_{i=1}^n \H(\mu_i)-r}\leq \H(\mu)\leq \sum_{i=1}^n \H(\mu_i). \]
\end{corollary}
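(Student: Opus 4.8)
The plan is to obtain this as an immediate consequence of \cref{cor:homadditiveentropy} and \cref{cor:entropyfactor2}, so the work is entirely in checking that their hypotheses apply to the uniform distribution $\mu$ over $\B_M$. First I would note that the generating polynomial of $\mu$ is $g_\mu = g_M/\card{\B_M}$, a nonnegative scalar multiple of $g_M$; by \cref{thm:gMlogconcave} the polynomial $g_M$ is (completely) log-concave, so by \cref{prop:preservers}, \cref{pres:scaling}, $g_\mu$ is log-concave, i.e.\ $\mu$ is a log-concave distribution. Since every basis of $M$ has size exactly $r$, the polynomial $g_\mu$ is homogeneous of degree $r$, so $\mu$ is $r$-homogeneous. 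Finally, $\mu$ is uniform over its support $\B_M$, so \cref{prop:EntropyCounting} gives $\H(\mu) = \log\card{\B_M}$.

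Next I would deal with the dual distribution $\mu^*$, defined by $\mu^*(S) = \mu([n]\setminus S)$. Because the bases of the dual matroid $M^*$ are exactly the complements of the bases of $M$, the distribution $\mu^*$ is precisely the uniform distribution over $\B_{M^*}$. Since $M^*$ is itself a matroid, \cref{thm:gMlogconcave} applies to it, and the same scaling argument as above shows that $\mu^*$ is log-concave; moreover its $i$-th marginal is $1-\mu_i$, so $\mu$ and $\mu^*$ are the pair of log-concave distributions required by \cref{cor:entropyfactor2}.

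Putting the pieces together: applying \cref{cor:homadditiveentropy} to the $r$-homogeneous log-concave distribution $\mu$ yields $\sum_{i=1}^n \H(\mu_i) - r \le \H(\mu) \le \sum_{i=1}^n \H(\mu_i)$, and applying \cref{cor:entropyfactor2} to the pair $\mu,\mu^*$ yields $\tfrac12\sum_{i=1}^n \H(\mu_i) \le \H(\mu) \le \sum_{i=1}^n \H(\mu_i)$. Combining the two lower bounds by taking their maximum, and using $\H(\mu) = \log\card{\B_M}$, gives exactly the stated inequality.

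There is no serious obstacle here — the statement is a packaging of the two preceding corollaries. The only points demanding a little care are the duality bookkeeping (verifying that $\mu^*$ really is the uniform distribution on $\B_{M^*}$, hence log-concave, and that its marginals are $1-\mu_i$) and observing that both corollaries' hypotheses — $r$-homogeneity for one, log-concavity of the dual for the other — hold simultaneously, which they do. I would expect this to be the shortest proof in the section.
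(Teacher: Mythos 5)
Your proposal matches the paper's own argument exactly: the paper likewise deduces log-concavity of $\mu$ from \cref{thm:gMlogconcave}, identifies $\mu^*$ with the uniform distribution over the bases of the dual matroid $M^*$ to get its log-concavity, and then invokes \cref{cor:homadditiveentropy} and \cref{cor:entropyfactor2}. The proof is correct and complete.
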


We will also use the following fact, which enables us to apply \cref{thm:mainentropy} to distributions other than the uniform distribution on $\B_M$.
\begin{lemma}\label{lem:muCLogConv}
	Let $M$ be a matroid on ground set $[n]$ and let $\p$ be a point in $\P_M$. Then there is a distribution $\tmu$ supported on $\B_M$ with marginals $\p$, i.e., $\tmu_i=p_i$, such that both $\tmu$ and $\tmu^*$ are completely log-concave. Furthermore $\tmu$ and $\tmu^*$ can be obtained as the limit of external fields applied to $\mu$ and $\mu^*$, where $\mu$ is the uniform distribution on $\B_M$.
\end{lemma}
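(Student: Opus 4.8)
The plan is to run the limiting construction behind \cref{cor:maxentropy} on $\mu$ and its dual $\mu^{*}$ simultaneously. Let $\mu$ be the uniform distribution on $\B_M$, so that $g_\mu = g_M$ is completely log-concave by \cref{thm:gMlogconcave}; since $\mu^{*}$ is the uniform distribution on the bases of the dual matroid $M^{*}$, the same theorem gives that $\mu^{*}$ is completely log-concave. The observation that couples the two is that external fields commute with duality: for $\bmlambda \in \R_{>0}^n$, writing $\bmlambda^{-1} = (\lambda_1^{-1},\dots,\lambda_n^{-1})$, one reads off from \cref{eq:ExtField} that $(\ef{\bmlambda}{\mu})^{*} = \ef{\bmlambda^{-1}}{\mu^{*}}$, because $\bmlambda^{[n]\setminus S}$ is a fixed positive multiple of $\prod_{i\in S}\lambda_i^{-1}$.

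Next I would record that for every $\bmlambda \in \R_{>0}^n$ the external field $\ef{\bmlambda}{\mu}$ is completely log-concave: its generating polynomial is, up to a positive scalar, $g_\mu(\lambda_1 z_1,\dots,\lambda_n z_n)$, and scaling preserves complete log-concavity by \cref{lem:CLCpreservers}, \cref{CLCpres:scaling}. The same holds for $\ef{\bmlambda^{-1}}{\mu^{*}}$, so in view of the previous paragraph, for every $\bmlambda$ the distribution $\ef{\bmlambda}{\mu}$ has the property that it and its dual are both completely log-concave.

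Then I would invoke \cref{thm:maxentropy} with the point $\p \in \P_M = \P_\mu$: for each $k$ there is $\bmlambda^{(k)} \in \R_{>0}^n$ such that the marginals of $\ef{\bmlambda^{(k)}}{\mu}$ differ from $\p$ by at most $1/k$ in every coordinate. These distributions all lie in the compact set of probability distributions on $2^{[n]}$ supported inside $\B_M$, so after passing to a subsequence we may assume $\ef{\bmlambda^{(k)}}{\mu} \to \tmu$ for some probability distribution $\tmu$ with $\supp(\tmu) \subseteq \B_M$. The complementation map $\nu \mapsto \nu^{*}$ is a linear bijection of the finite-dimensional vector space of functions $2^{[n]} \to \R$, hence continuous, so $\ef{(\bmlambda^{(k)})^{-1}}{\mu^{*}} = (\ef{\bmlambda^{(k)}}{\mu})^{*} \to \tmu^{*}$ along the same subsequence, with no further extraction needed. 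Marginals depend linearly, hence continuously, on the distribution, so the marginals of $\tmu$ equal $\lim_k (\text{marginals of } \ef{\bmlambda^{(k)}}{\mu}) = \p$. It remains to observe that complete log-concavity is a closed condition: for each $k'$ and each nonnegative $\V \in \R_{\geq 0}^{n\times k'}$ the map $g \mapsto D_\V g$ is linear, hence continuous, and the nonnegative log-concave polynomials of bounded degree form a closed set (\cref{sec:preliminaries}), so the completely log-concave polynomials are an intersection of closed sets. Therefore $\tmu$ and $\tmu^{*}$, being Euclidean limits of the completely log-concave distributions $\ef{\bmlambda^{(k)}}{\mu}$ and $\ef{(\bmlambda^{(k)})^{-1}}{\mu^{*}}$, are completely log-concave, which proves the main claim, and the displayed sequences also give the ``furthermore'' clause.

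I expect this to be a packaging argument with no single hard step. The two places wanting genuine (if light) care are the duality identity $(\ef{\bmlambda}{\mu})^{*} = \ef{\bmlambda^{-1}}{\mu^{*}}$, which ensures that the dual of an external field on $\mu$ is again an external field (now on $\mu^{*}$) and hence amenable to the same compactness argument, and the closedness of the class of completely log-concave polynomials, which is the analogue for complete log-concavity of the closedness used for \cref{cor:maxentropy}. The substantive input, that $g_M$ and $g_{M^{*}}$ are completely log-concave in the first place, is exactly \cref{thm:gMlogconcave}.
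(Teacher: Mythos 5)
Your proposal is correct and follows essentially the same route as the paper's proof: complete log-concavity of $g_M$ and $g_{M^*}$ via \cref{thm:gMlogconcave}, the duality identity $(\ef{\bmlambda}{\mu})^*=\ef{\bmlambda^{-1}}{\mu^*}$, preservation under external fields by \cref{lem:CLCpreservers}, and the max-entropy limit with closedness of the completely log-concave class. The only difference is that you spell out the compactness and closedness steps that the paper delegates to \cref{cor:maxentropy} and leaves implicit.
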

\begin{proof}
	If $\mu$ is the uniform distribution over $\B_M$, then $g_{\mu}(\z) = g_M(\z)$, which is completely log-concave by \cref{thm:gMlogconcave}. Similarly, $\mu^*$ is the uniform distribution on the bases of the dual matroid, so $\mu^*$ is also completely log-concave. Furthermore, since $\mu$ and $\mu^*$ are homogeneous distributions, for any $\bmlambda=(\lambda_1,\dots,\lambda_n)\in \R_{>0}^n$, $(\ef{(\lambda_1,\dots,\lambda_n)}{\mu})^*=\ef{(\lambda_1^{-1},\dots,\lambda_n^{-1})}{\mu^*}$, where  $\ef{}{}$ is the external field operation described in \cref{eq:ExtField} of \cref{subsec:MaxEntrop}. 

	By \cref{lem:CLCpreservers}, \cref{CLCpres:scaling}, both $\ef{\bmlambda}{\mu}$ and $(\ef{\bmlambda}{\mu})^*$ are completely log-concave. Now take $\tmu$ to be the distribution promised by \cref{cor:maxentropy} with marginals $\p$. Then $\tmu$ is a limit of distributions $\ef{\bmlambda}{\mu}$, and $\tmu^*$ is the limit of $(\ef{\bmlambda}{\mu})^*$. It follows that both $\tmu$ and $\tmu^*$ are completely log-concave.
\end{proof}

	\section{Max Entropy Convex Programs and Counting Bases of a Matroid}
\label{sec:maxentropymatroidbases}

In this section we prove \cref{thm:mainmatroid}. Let $M$ be a matroid of rank $r$ on ground set $[n]$. Let $\mu:2^{[n]}\rightarrow \R_{\geq 0}$ be the uniform distribution over the bases of $M$. By \cref{cor:jensonmatroid}, it would be enough to compute the marginals of $\mu$, but it can be seen that computing marginals is no easier than counting bases.

Instead, we use the convex programming framework described in \cref{sec:framework}. We claim that the optimum solution of the following concave program gives an additive $r$-approximation to $\H(\mu)=\log(|\B_M|)$ as well as a multiplicative $2$-approximation:
\begin{equation}
\label{cp:matroidbases}
	\tau=\max\set*{\sum_{i=1}^n \H(p_i)\given \p=(p_1,\dots,p_n)\in \P_M}.
\end{equation}
The objective function is a concave function of $\p$, so we can solve the above program using, e.g., the ellipsoid method.

\begin{proof}[Proof of \cref{thm:mainmatroid}]
	Let $\p = (p_1,\dots,p_n)\in \P_M$ be a vector achieving the maximum in \cref{cp:matroidbases}. The output of our algorithm will simply be
	\[ \ALG=e^\tau=\exp\parens*{\sum_{i=1}^n \H(p_i)}.\] 
	By \cref{prop:EntropyCounting}, the entropy $\H(\mu)$ equals $\log(\card{\B_M})$. Therefore to prove \cref{thm:mainmatroid}, it suffices to show that $\tau=\log(\ALG)$ is an additive $r$-approximation  and also a multiplicative 2-approximation of $\H(\mu)$, i.e., $\max\set*{\frac{1}{2}\tau, \tau-r}\leq \H(\mu)\leq \tau$.
	
	Firstly, note that since $(\mu_1,\dots,\mu_n)\in \P_M$, we have
	\begin{equation}\label{eq:upperboundsinglematroid} \tau\geq \sum_{i=1}^n \H(\mu_i)\geq \H(\mu), \end{equation}
	where the first inequality follows from the definition, \cref{cp:matroidbases}, and the second inequality follows from the subadditivity of entropy, \cref{fact:entropyupperbound}.

	Secondly, since $\p$ is in the polytope $\P_M = \P_{\mu}$, by \cref{lem:muCLogConv}, there is a probability distribution $\tmu$ on the bases of $M$ such that for all $i$, $\tmu_i = p_i$, and both $\tmu$ and $\tmu^*$ are log-concave. Applying \cref{cor:homadditiveentropy,cor:entropyfactor2}  to $\tmu$, we get
	\[ \H(\tmu)\geq \max\set*{\frac{1}{2}\sum_{i=1}^n \H(\tmu_i), \sum_{i=1}^n \H(\tmu_i)-r}. \]
	But note that $\sum_{i=1}^n\H(\tmu_i)=\sum_{i=1}^n\H(p_i)=\tau$, so
	$ \H(\tmu)\geq \max\set*{\frac{1}{2}\tau, \tau-r}$.
	
	Since $\mu$ is the uniform distribution over its support, and $\supp(\tmu)\subseteq \supp(\mu)$, its follows from \cref{prop:EntropyCounting} that $\H(\mu)\geq \H(\tmu)$. So we find that
	\[ \H(\mu)\geq \max\set*{\frac{1}{2}\tau, \tau-r}, \]
	which together with \cref{eq:upperboundsinglematroid} finishes the proof.
\end{proof}

	\section{Counting Common Bases of Two Matroids}
\label{sec:matroidintersection}

In this section we prove \cref{thm:matroidintersection}. Given two matroids $M,N$ on the ground set of elements $[n]$ we want to estimate the number of common bases of $M$ and $N$. We may assume, trivially, that both matroids are of the same rank $r$.  Following the same framework we used for a single matroid, described in \cref{sec:framework}, we solve the following concave program
\begin{equation}
	\label{eq:matroidintersectionCP}
	\tau = \max\set*{\sum_{i=1}^n \H(p_i)\given \p=(p_1,\dots,p_n)\in \P_M\cap \P_N},
\end{equation}
using, e.g., the ellipsoid method, and output $\ALG=e^\tau$ as our estimate for the number of common bases. Our main result is that $\ALG$ provides a multiplicative $2^{O(r)}$ approximation to $\card{\B_M\cap \B_N}$, or equivalently that $\tau$ provides an additive $O(r)$ approximation to $\log(\card{\B_M\cap \B_N})$.
\begin{proof}[Proof of \cref{thm:matroidintersection}]
	If $\mu$ is the uniform distribution on $\B_M\cap \B_N$, then the vector of its marginals $(\mu_1,\dots,\mu_n)$ belongs to $\P_M\cap \P_N$. It follows that
	\[ \tau\geq \sum_{i=1}^n \H(\mu_i)\geq \H(\mu) =\log(\card{\B_M\cap\B_N}), \]
	where the second inequality is an application of the subadditivity of entropy, \cref{fact:entropyupperbound}. To prove \cref{thm:matroidintersection}, it suffices to show that $\tau-O(r)$ is a lower bound for $\H(\mu)=\log(\card{\B_M\cap\B_N})$.
	
	Let $g_M(\y)\in \R[y_1,\dots,y_n]$ be the generating polynomial of $M$ and $g_{N^*}(\z)\in \R[z_1,\dots,z_n]$ be the generating polynomial of $N^*$, the dual matroid of $N$. Then the product $g_M(\y)g_{N^*}(\z)$ is the generating polynomial of the direct sum $M\oplus  N^*$, which is a matroid:
	\[ g_{M\oplus N^*}(\y, \z)=g_M(\y)g_{N^*}(\z). \]
Then by \cref{thm:gMlogconcave}, $g_M(\y)g_{N^*}(\z)$ is a completely log-concave polynomial. Since $\deg(g_M)=r$ and $\deg(g_{N^*})=n-r$, the product  $g_M(\y)g_{N^*}(\z)$ is a polynomial of degree $n$ in the $2n$ variables $y_1,\dots,y_n,z_1,\dots,z_n$. This polynomial fully encodes the matroids $M$ and $N$. In particular we can obtain $\card{\B_M\cap \B_N}$ as the following expression
	\begin{equation}
	\label{eq:intersectionmixedpoly}
		\card{\B_M\cap\B_N}=\parens*{\prod_{i=1}^n(\partial_{y_i}+\partial_{z_i})}g_M(\y)g_{N^*}(\z).
	\end{equation}
	To see this, observe that we can rewrite the right hand side as
	\[ \sum_{S\subseteq [n]}\parens*{\prod_{i\in S}\partial_{y_i}}\parens*{\prod_{i\in [n]\setminus S} \partial_{z_i}}g_M(\y)g_{N^*}(\z). \]
	The term corresponding to each $S$ is zero unless $S$ is independent in $M$ and $[n]\setminus S$ is independent in $N^*$. But this can only happen when $S$ is a common basis of $M$ and $N$, and in that case, this term is the constant $1$.
	
	We will use complete log-concavity of $g_M(\y)g_{N^*}(\z)$ to prove a lower bound on the expression in \cref{eq:intersectionmixedpoly}. Conveniently, the differential operators in \cref{eq:intersectionmixedpoly} are of the type $\partial_{\v}$ for $\v\in \R_{\geq 0}^n$, under which completely log-concave polynomials are closed, and this will be crucial for the proof. We will apply \cref{prop:joint}, which will be fully stated and proved later in this section, to the polynomial $g_M(\y)g_{N^*}(\z)$, to show that for any $\p=(p_1,\dots,p_n)\in [0,1]^n$, the following inequality holds (see \cref{cor:capacitylb}):
	\begin{equation}
	\label{eq:capacityp}
		\begin{aligned}
		\eval{\parens*{\prod_{i=1}^n(\partial_{y_i}+\partial_{z_i})}g_M(\y)g_{N^*}(\z)}_{\y=\z=0}&\geq \parens*{\frac{\p}{e^2}}^\p \cdot \inf_{\y,\z>0} \frac{g_M(\y)g_{N^*}(\z)}{\y^\p \z^{1-\p}}\\
		&=\parens*{\frac{\p}{e^2}}^\p\cdot\parens*{\inf_{\y>0}\frac{g_M(\y)}{\y^\p}}\cdot\parens*{\inf_{\z>0}\frac{g_{N^*}(\z)}{\z^{1-\p}}}.
		\end{aligned}
	\end{equation}
	Notice that evaluation at $\y=\z=0$ is inconsequential here, because the expression being evaluated is a constant. We will be particularly interested in the case where $\p \in \P_M\cap \P_N$. When this happens, $\p\in \P_N$, which implies that $1-\p\in \P_{N^*}$. By \cref{lem:muCLogConv}, there are distributions $\nu$ on $\B_M$ and $\omega$ on $\B_{N^*}$, obtained as limits of external fields applied to uniform distributions, with marginals $\p, 1-\p$, such that $\nu, \nu^*, \omega, \omega^*$ are all completely log-concave.
	\[ \forall i\in [n]: \nu_i=p_i\qquad\text{and}\qquad\forall i\in [n]: \omega_i=1-p_i. \]
	By \cref{lem:geometric-program}, we have
	\[ \H(\nu)=\log\parens*{\inf_{\y>0}\frac{g_M(\y)}{\y^\p}}\qquad\text{and}\qquad\H(\omega)=\log\parens*{\inf_{\z>0}\frac{g_{N^*}(\z)}{\z^{1-\p}}}. \]
	Now we apply \cref{thm:mainentropy} to $\nu, \omega^*$. Note that the marginals of both $\nu$ and $\omega^*$ are $\p$. So we get
	\[ \min\set*{\H(\nu), \H(\omega^*)}\geq \sum_{i=1}^n p_i\log\frac{1}{p_i}. \]
	Noting that $\H(\omega^*)=\H(\omega)$ and combining the previous two equations, we get
	\[ \log\parens*{\parens*{\inf_{\y>0}\frac{g_M(\y)}{\y^\p}}\cdot\parens*{\inf_{\z>0}\frac{g_{N^*}(\z)}{\z^{1-\p}}}}\geq 2\sum_{i=1}^n p_i\log \frac{1}{p_i}. \]
	Plugging this back into \cref{eq:capacityp}, and using \cref{eq:intersectionmixedpoly}, we get
	\begin{align*}
		\log\parens*{\card{\B_M\cap\B_N}}&\geq \sum_{i=1}^n p_i\log \frac{p_i}{e^2}+2\sum_{i=1}^n p_i\log \frac{1}{p_i}\\
		&=\sum_{i=1}^n p_i\log \frac{1}{p_i}-2\sum_{i=1}^n p_i\\
		&=\sum_{i=1}^n \H(p_i)-\sum_{i=1}^n (1-p_i)\log\frac{1}{1-p_i}-2\sum_{i=1}^n p_i\\
		&\geq \sum_{i=1}^n \H(p_i)-3r,
	\end{align*}
	where, for the last inequality we use the fact that $(1-p_i)\log\frac{1}{1-p_i}\leq p_i$ for $0\leq p_i\leq 1$ and $\sum_{i=1}^n p_i=r$. In summary, for any $\p\in \P_M\cap \P_N$, the quantity $\sum_{i=1}^n \H(p_i)-3r$ is a lower bound for $\log(\card{\B_M\cap\B_N})$. This completes the proof.
\end{proof}
Now we state and prove the technical part used in the proof above. 
\begin{theorem}
\label{prop:joint}
Let $g\in\R[y_1,\dots,y_n,z_1,\dots,z_n]$ be a completely log-concave multiaffine polynomial and $\p\in [0, 1]^n$. Then,
\[ \eval{\parens*{\prod_{i=1}^n (\partial_{y_i}+\partial_{z_i})}g(\y, \z)}_{\y=\z=0}\geq \phi(\p) \cdot \inf_{\y, \z\in \R_{>0}^n} \frac{g(\y,\z)}{\y^\p \z^{1-\p}}, \]
where $\phi(\p)$ is independent of the polynomial $g$, and is given by the following expression:
\[ \phi(\p)=\prod_{i=1}^n \parens*{p_i^{p_i}\cdot (1-p_i)^{1-p_i}\cdot \frac{1}{1+p_i(1-p_i)}}. \]
\end{theorem}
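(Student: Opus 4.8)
The plan is to peel the operators $\partial_{y_i}+\partial_{z_i}$ off one at a time and show that each peeling costs a factor of at most $c(p_i):=p_i^{p_i}(1-p_i)^{1-p_i}/(1+p_i(1-p_i))$ in the ``capacity'' $\inf_{\y,\z>0}g(\y,\z)/(\y^\p\z^{1-\p})$, so that $\phi(\p)=\prod_i c(p_i)$ accumulates after $n$ steps. The operators $\partial_{y_i}+\partial_{z_i}$ commute with one another and with the substitutions $y_j=z_j=0$, and each $\partial_{y_i}+\partial_{z_i}$ is a directional derivative $\partial_{\v}$ along a nonnegative vector $\v$ while $y_i=z_i=0$ is a nonnegative specialization, so by \cref{lem:CLCpreservers} every intermediate polynomial remains completely log-concave and multiaffine. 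Hence by induction it suffices to prove the following one-step statement: if $h\in\R[y,z,w_1,\dots,w_m]$ is completely log-concave and multiaffine, $p\in[0,1]$, and $\bm q\in[0,1]^m$, then, writing $h':=\eval{(\partial_y+\partial_z)h}_{y=z=0}$,
\[ \inf_{\w>0}\frac{h'(\w)}{\w^{\bm q}}\ \ge\ c(p)\cdot\inf_{y,z>0,\ \w>0}\frac{h(y,z,\w)}{y^pz^{1-p}\,\w^{\bm q}}. \]
The base case $m=0$ is the trivial statement $h'=h\ge h$ (an empty product of the $c(p_i)$ equals $1$), valid since completely log-concave polynomials have nonnegative coefficients.

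For the one-step statement I would fix a point $\w>0$ and reduce to an inequality in four scalars. Write $h(y,z,\w)=A+yB+zC+yzD$, so $A,B,C,D\ge 0$ (nonnegativity of coefficients) and $h'(\w)=B+C$. Specializing $h$ at the fixed $\w$ preserves complete log-concavity, hence in particular log-concavity of $A+yB+zC+yzD$ on $\R_{>0}^2$; a direct computation of the Hessian of the logarithm shows this is equivalent to the single inequality $AD\le 2BC$. Since for this $\w$ one has $\inf_{y,z,\w'>0}h(y,z,\w')/(y^pz^{1-p}(\w')^{\bm q})\le \w^{-\bm q}\inf_{y,z>0}h(y,z,\w)/(y^pz^{1-p})$, the one-step statement reduces, upon taking the infimum over $\w$ at the end, to proving that for every choice of $A,B,C,D\ge 0$ with $AD\le 2BC$,
\[ B+C\ \ge\ c(p)\cdot\inf_{y,z>0}\frac{A+yB+zC+yzD}{y^pz^{1-p}}. \]

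To prove this scalar inequality I would carry out the infimum over $y$ in closed form — the elementary identity $\inf_{y>0}(P+yQ)/y^p=P^{1-p}Q^p/(p^p(1-p)^{1-p})$ for $P,Q\ge 0$ (with the convention $0^0=1$) — which turns the right-hand side into $\tfrac{c(p)}{p^p(1-p)^{1-p}}\inf_{z>0}(A/z+C)^{1-p}(B+zD)^p$. Now comes the one clever choice: substitute the particular value $z_0=\tfrac{2(1-p)B}{pD}$. Then $B+z_0D=\tfrac{2-p}{p}B$, while $A/z_0+C=\tfrac{p\,AD}{2(1-p)B}+C\le\tfrac{C}{1-p}$ by the constraint $AD\le 2BC$, so the right-hand side is at most $\tfrac{(2-p)^p}{1+p(1-p)}\cdot\tfrac{B^pC^{1-p}}{p^p(1-p)^{1-p}}$. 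Finally Bernoulli's inequality gives $(2-p)^p=(1+(1-p))^p\le 1+p(1-p)$, and weighted AM--GM gives $B^pC^{1-p}=p^p(1-p)^{1-p}(B/p)^p(C/(1-p))^{1-p}\le p^p(1-p)^{1-p}(B+C)$; together these bound the right-hand side by $B+C$, as needed. The degenerate cases are easy and handled separately: if $D=0$ the infimum equals $B^pC^{1-p}/(p^p(1-p)^{1-p})$ and one uses $B^pC^{1-p}\le B+C$; if $B=0$ or $C=0$ (which forces $AD=0$) the infimum is $0$.

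The only genuinely delicate point is the choice of $z_0$: it is exactly what converts the constant $2$ in the log-concavity constraint $AD\le 2BC$ into the factor $1+p(1-p)$ via Bernoulli's inequality, and less careful choices (for instance one coming from a crude application of AM--GM to $A+yzD$) yield a bound that is too weak for $p$ near $0$. Everything else — the inductive bookkeeping, the Hessian computation behind $AD\le 2BC$, the closed-form infima, and the degenerate cases — is routine.
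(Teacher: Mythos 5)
Your proposal is correct and follows essentially the same route as the paper: induction that peels off one pair $\partial_{y_i}+\partial_{z_i}$ at a time, reduction to a bivariate multiaffine polynomial whose log-concavity is characterized by $2BC\geq AD$ (the paper's \cref{lem:bivariate}), and a proof of the resulting scalar inequality (the paper's \cref{lem:n=1Prop}) by evaluating at a well-chosen point and finishing with weighted AM--GM. Your closed-form $y$-infimum, the substitution $z_0=2(1-p)B/(pD)$, and Bernoulli's inequality replace the paper's majorization $g\leq (b+z)(c+y)+bc$ and evaluation at $(\haty,\hatz)=(cp/(1-p),\,b(1-p)/p)$, and your infimum-chaining form of the induction sidesteps the paper's $\epsilon$--$\delta$ bookkeeping, but these are cosmetic variants of the same argument.
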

Before proving \cref{prop:joint}, note that we can simplify it, by using the following inequalities which are valid for all $p_i\in [0, 1]$:
\[ (1-p_i)^{1-p_i}\geq e^{-p_i}\qquad\text{and}\qquad \frac{1}{1+p_i(1-p_i)}\geq e^{-p_i}. \]
Both inequalities can be proven by taking $\log$ from the left hand sides to get convex functions of $p_i$, and lower bounding by the tangent line at $p_i=0$. Together they imply that
\[ \phi(\p)\geq \prod_{i=1}^n \parens*{p_i^{p_i}\cdot e^{-p_i}\cdot e^{-p_i}}=\parens*{\frac{\p}{e^2}}^\p, \]
giving the following. 
\begin{corollary}
\label{cor:capacitylb}
	For any completely log-concave multiaffine polynomial $g\in \R[y_1,\dots,y_n,z_1,\dots,z_n]$ and $\p\in [0,1]^n$ the following inequality holds:
	\[ \eval{\parens*{\prod_{i=1}^n (\partial_{y_i}+\partial_{z_i})}g(\y, \z)}_{\y=\z=0}\geq \parens*{\frac{\p}{e^2}}^\p \inf_{\y, \z\in \R_{>0}^n} \frac{g(\y,\z)}{\y^\p \z^{1-\p}}. \]
\end{corollary}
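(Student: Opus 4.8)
The corollary is immediate from \cref{prop:joint} and the two elementary inequalities $(1-p_i)^{1-p_i}\ge e^{-p_i}$ and $1/(1+p_i(1-p_i))\ge e^{-p_i}$ noted above, so the plan is to prove \cref{prop:joint}, where $\phi(\p)=\prod_i c(p_i)$ with $c(p):=p^p(1-p)^{1-p}/(1+p(1-p))$. \emph{Coordinate peeling.} For a completely log-concave multiaffine $g$ in the variables $y_1,\dots,y_n,z_1,\dots,z_n$ and $\p\in[0,1]^n$, write $\operatorname{cap}_\p(g)=\inf_{\y,\z>0} g(\y,\z)/(\y^{\p}\z^{1-\p})$, and note $\operatorname{cap}$ of a nonnegative constant is that constant. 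I would induct on $n$ (the case $n=0$ being trivial): applying $\partial_{y_n}+\partial_{z_n}$ and then setting $y_n=z_n=0$ produces, by \cref{lem:CLCpreservers} (\cref{CLCpres:differentiation} then \cref{CLCpres:specialization}), a completely log-concave multiaffine $\tilde g\in\R[y_1,\dots,y_{n-1},z_1,\dots,z_{n-1}]$, and since the operators $\partial_{y_i}+\partial_{z_i}$ commute, $\bigl(\prod_{i\le n}(\partial_{y_i}+\partial_{z_i})g\bigr)\big|_{\y=\z=0}=\bigl(\prod_{i<n}(\partial_{y_i}+\partial_{z_i})\tilde g\bigr)\big|_{\y=\z=0}$. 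So it suffices to prove the single-step bound $\operatorname{cap}_{\p'}(\tilde g)\ge c(p_n)\operatorname{cap}_\p(g)$ with $\p'=(p_1,\dots,p_{n-1})$; chaining this over all coordinates yields \cref{prop:joint}.

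\emph{Freezing the spectators.} Fix positive values $(\y',\z')$ for every variable except $y_n,z_n$, and let $\hat h(s,t):=g(\y',s,\z',t)=A+sB+tC+stD$ with $A,B,C,D\ge0$; this is bivariate, multiaffine, and completely log-concave (\cref{lem:CLCpreservers}, \cref{CLCpres:specialization}). Then $\tilde g(\y',\z')=B+C$, and splitting the last coordinate off the infimum gives $\operatorname{cap}_\p(g)=\inf_{\y',\z'>0}\operatorname{cap}_{p_n,1-p_n}(\hat h)/\bigl((\y')^{\p'}(\z')^{1-\p'}\bigr)$ and $\operatorname{cap}_{\p'}(\tilde g)=\inf_{\y',\z'>0}(B+C)/\bigl((\y')^{\p'}(\z')^{1-\p'}\bigr)$. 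Thus everything reduces to a two-variable statement: whenever $A,B,C,D\ge0$ with $\hat h$ log-concave on $\R_{>0}^2$ and $p\in[0,1]$,
\[ B+C\ \ge\ c(p)\cdot\operatorname{cap}_{p,1-p}(\hat h). \]

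\emph{The bivariate core estimate.} Computing $\Hess{\log\hat h}$ and clearing denominators, negative semidefiniteness on $\R_{>0}^2$ becomes $(2BC-AD)+BDs+CDt+D^2st\ge0$ for all $s,t>0$, i.e.\ exactly $2BC\ge AD$. The cases $B=0$, $C=0$, or $p\in\{0,1\}$ make $\operatorname{cap}_{p,1-p}(\hat h)=0$ or the claim trivial, so assume $B,C>0$ and $p\in(0,1)$. In the log-linear coordinates $(\log s,\log t)\leftrightarrow(v,w)$ adapted to the direction $(1,1)$ (so that the $B,C$ terms of $\hat h/(s^pt^{1-p})$ depend only on $w$), minimizing the $A$--$D$ part exactly by AM--GM shows
\[ \operatorname{cap}_{p,1-p}(\hat h)=\inf_{\xi>0}\frac{R(\xi)}{\xi^{2p}}=:\operatorname{cap}_{2p}(R),\qquad R(\xi):=B\xi^2+2\sqrt{AD}\,\xi+C . \]
Using $2\sqrt{AD}\le2\sqrt 2\sqrt{BC}$ (from $2BC\ge AD$) and substituting $\xi=\sqrt{C/B}\,\eta$ gives $\operatorname{cap}_{2p}(R)\le B^pC^{1-p}\,\Psi(p)$ with $\Psi(p):=\inf_{\eta>0}\eta^{-2p}(\eta^2+2\sqrt 2\,\eta+1)$, and weighted AM--GM gives $B^pC^{1-p}\le p^p(1-p)^{1-p}(B+C)$. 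Combining these, the two-variable claim follows once one checks the single-variable inequality
\[ \Psi(p)\cdot\bigl(p^p(1-p)^{1-p}\bigr)^2\ \le\ 1+p(1-p)\qquad(p\in[0,1]). \]

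\emph{The main obstacle.} The peeling induction, the Hessian computation, the coordinate change, and the AM--GM steps are all routine; the crux is this last one-variable inequality. It is an equality as $p\to0$ and $p\to1$ and has only a few-percent slack in between, so no slack-losing shortcut survives: in particular the naive choice of $\eta$ minimizing $\eta^{-2p}(\eta^2+1)$ (the optimizer when $A=D=0$) already fails near $p=\tfrac14$. I expect the route is to bound $\Psi(p)$ by evaluating $\eta^{-2p}(\eta^2+2\sqrt2\,\eta+1)$ at a carefully chosen near-optimal $\eta(p)$, then compare the resulting smooth function of $p$ against $(1+p(1-p))/(p^p(1-p)^{1-p})^2$ via logarithms and convexity/tangent-line estimates.
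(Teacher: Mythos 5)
Your reduction of the corollary to \cref{prop:joint}, the coordinate-peeling induction, and the descent to a bivariate statement are all sound, and your route through the bivariate case is genuinely different from the paper's: where \cref{lem:n=1Prop} dominates $g$ pointwise by $(b+z)(c+y)+bc$ (via the second half of \cref{lem:bivariate}) and simply evaluates at the single test point $(\haty,\hatz)=(c\,p/(1-p),\;b\,(1-p)/p)$, you compute the two-variable capacity exactly along the $(1,1)$ direction, reducing it to $\inf_{\xi>0}R(\xi)/\xi^{2p}$ with $R(\xi)=B\xi^2+2\sqrt{AD}\,\xi+C$, and then push the worst case $AD=2BC$ into a universal one-variable function $\Psi(p)$. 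Your peeling also avoids the $\epsilon$--$\delta$ bookkeeping of the paper's induction (which applies the inductive hypothesis up to $\epsilon$ to $h=\eval{(\partial_s+\partial_t)g}_{s=t=0}$ and then restricts $g$ at a near-optimal point); the observation that the fiberwise inequality $B+C\ge c(p_n)\cdot\operatorname{cap}_{p_n,1-p_n}(\hat h)$ passes through the outer infimum is correct.

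The genuine gap is exactly the one you flagged: the inequality $\Psi(p)\cdot\bigl(p^p(1-p)^{1-p}\bigr)^2\le 1+p(1-p)$ is asserted, not proved, and without it the argument is incomplete. It is true, and it closes in one line --- none of the tangent-line or convexity machinery you anticipate is needed. For $p\in(0,1)$ take $\eta=p/(1-p)$ (not the minimizer $\sqrt{p/(1-p)}$ of the $A=D=0$ problem, which you correctly observed fails near $p=\tfrac14$). Then
\[ \eta^2+2\sqrt2\,\eta+1=\frac{p^2+2\sqrt2\,p(1-p)+(1-p)^2}{(1-p)^2}=\frac{1+(2\sqrt2-2)\,p(1-p)}{(1-p)^2},\qquad \eta^{-2p}=\frac{(1-p)^{2p}}{p^{2p}}, \]
so $\Psi(p)\le\bigl(1+(2\sqrt2-2)\,p(1-p)\bigr)/\bigl(p^p(1-p)^{1-p}\bigr)^2$, and since $2\sqrt2-2<1$ the desired bound follows with room to spare; the endpoint cases $p\in\{0,1\}$ are the trivial ones you already dispatched. (Unsurprisingly, $\eta=p/(1-p)$ is precisely the paper's test point $(\haty,\hatz)$ transported through your substitution $\xi=\sqrt{C/B}\,\eta$.) With this line added, your proof is complete.
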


\begin{remark}
We remark that the statement and the proof of \cref{prop:joint} are inspired by those of a similar statement in an earlier work of a subset of authors \cite{AO17}, involving real stable polynomials. The quantity $\inf_{\y,\z>0}g(\y,\z)/\y^\p \z^{1-\p}$ and similar expressions called the ``capacity'' of polynomials have been studied in several works, starting with \textcite{Gur06, Gur08, Gur09} and some more recent works \cite{NS16, AO17, SV17, Lea18}.
\end{remark}

In the rest of this section, we prove \cref{prop:joint}. The proof is by induction on $n$, and both the base case and the induction step reduce to the case of $n=1$. So we will prove this first. An important step is identifying \emph{bivariate} completely log-concave multiaffine polynomials $g(y,z)\in\R[y,z]$. 

\begin{lemma}
	\label{lem:bivariate}
	A polynomial $g(y,z)= a+by+cz+dyz\in \R[y,z]$ with nonnegative coefficients is completely log-concave if and only if  $2bc\geq ad$. Moreover, if $g$ is completely log-concave and $d=1$ then $ (b+z)(c+y) + bc\geq g(y,z)$ for all $(y,z)\in\R^2$. 
\end{lemma}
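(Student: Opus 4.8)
The plan is to reduce the first (iff) claim to a single Hessian computation and then read off both parts from an explicit factorization of one $2\times 2$ determinant. Since $g=a+by+cz+dyz$ has total degree at most $2$, the reduction from \cref{sec:logconcavity} applies: a polynomial of degree at most $2$ with nonnegative coefficients is completely log-concave as soon as it is log-concave on $\R_{>0}^2$, because every order-$k$ directional derivative $D_\V g$ with $\V\in\R_{\geq 0}^{2\times k}$ and $k\ge 1$ is either affine with nonnegative coefficients ($k=1$) or a nonnegative constant ($k\ge 2$), hence automatically nonnegative and log-concave. So the first assertion becomes: for $a,b,c,d\ge 0$, $\log g$ is concave on $\R_{>0}^2$ iff $2bc\ge ad$.

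For this I would use $\Hess{\log(g)}=(g\,\Hess{g}-(\nabla g)(\nabla g)^\intercal)/g^2$, so the question is when $M:=g\,\Hess{g}-(\nabla g)(\nabla g)^\intercal$ is negative semidefinite on $\R_{>0}^2$. Writing $u=\partial_y g=b+dz$ and $v=\partial_z g=c+dy$, and noting that $\Hess{g}$ is the off-diagonal matrix with both off-diagonal entries equal to $d$, one gets
\[ M=\begin{bmatrix} -u^2 & gd-uv \\ gd-uv & -v^2\end{bmatrix}. \]
Its diagonal is always nonpositive, so $M\preceq 0$ iff $\det M\ge 0$; and the determinant telescopes, $\det M=u^2v^2-(gd-uv)^2=gd\,(2uv-gd)$. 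Since $2uv-gd=(2bc-ad)+d(by+cz+dyz)$, this determinant is nonnegative throughout $\R_{>0}^2$ precisely when $2bc\ge ad$: if $2bc\ge ad$ the extra term is a nonnegative combination of monomials in $y,z$, and conversely letting $y,z\to 0^+$ forces $2bc-ad\ge 0$. When $d=0$ the determinant vanishes identically and $2bc\ge ad$ holds trivially, so the equivalence is uniform. This proves the first claim.

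For the ``moreover'' part, with $d=1$ the criterion just proved says $g$ is completely log-concave iff $2bc\ge a$. Then I would simply expand $(b+z)(c+y)+bc=2bc+by+cz+yz$ and subtract $g(y,z)=a+by+cz+yz$ to obtain the constant $2bc-a$, which is $\ge 0$ by complete log-concavity; since this is an identity in $(y,z)$, the inequality $(b+z)(c+y)+bc\ge g(y,z)$ holds on all of $\R^2$, not merely on the positive orthant.

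I do not anticipate a genuine obstacle. The only point needing a word of care is the opening reduction — confirming that no higher-order directional derivative can spoil complete log-concavity — but this is exactly what the degree-$2$ bound provides. After that it is the two-line determinant identity $u^2v^2-(gd-uv)^2=gd(2uv-gd)$ together with a one-line algebraic cancellation.
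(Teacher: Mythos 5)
Your proposal is correct and follows essentially the same route as the paper: reduce complete log-concavity to log-concavity via the degree bound, observe that $g^2\cdot\Hess{\log g}$ has nonpositive diagonal so negative semidefiniteness is equivalent to a nonnegative determinant, and factor that determinant as $gd\,(2uv-gd)$ to extract the condition $2bc\geq ad$; the ``moreover'' part is the same one-line cancellation to the constant $2bc-a$. Your explicit treatment of the $d=0$ case and the difference-of-squares presentation of the determinant are minor cosmetic differences only.
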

\begin{proof}
	Since $g$ has degree two, it is completely log-concave if and only if it is log-concave, which happens if and only if the matrix 
	\[ g^2\cdot \Hess{\log g}= \begin{bmatrix}-(b+dz)^2&ad-bc\\ ad-bc &-(c+dz)^2 \end{bmatrix} \]
	is negative semidefinite for all $y,z\in \R_{\geq0}$. Its determinant simplifies to 
	\[\det\parens*{g^2\cdot \Hess{\log g}} = d \cdot g(y,z) \cdot (2 b c - a d + b d y + c d z + d^2 y z). \]
	Note that the first two factors are nonnegative on $\R_{\geq 0}^2$, and the last factor is nonnegative on $\R_{\geq 0}^2$ if and only if it is nonnegative at $(y,z) = (0,0)$. 
	Since the diagonal elements of the $2\times 2$ matrix are already nonpositive, this implies that 
	$g$ is  log-concave if and only if $2bc - ad \geq 0$.

	If $d = 1$, then $2bc \geq a$ and for all $(y,z)\in \R^2$, $ (b+z)(c+y) + bc-  g(y,z) = 2bc -a \geq 0$.
\end{proof}

Now we state and prove the $n=1$ case of \cref{prop:joint}.
\begin{lemma}\label{lem:n=1Prop}
	If the polynomial $g(y,z)=a+by+cz+dyz\in \R[y,z]$ is log-concave, then for any $p\in [0,1]$,
	\[ b+c \geq \phi(p) \cdot \inf_{y,z>0}  \frac{g(y,z)}{y^p\cdot z^{1-p}}\qquad\text{where}\qquad \phi(p) = \frac{ p^p (1-p)^{1-p}}{1+p(1-p)}. 
\]
\end{lemma}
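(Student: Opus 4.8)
The plan is to observe that the right-hand side is $\phi(p)$ times the \emph{capacity} $\mathrm{cap}_p(g):=\inf_{y,z>0} g(y,z)/(y^pz^{1-p})$, and that to bound the capacity from above it suffices to plug a single well-chosen point $(y_0,z_0)\in\R_{>0}^2$ into $g(y,z)/(y^pz^{1-p})$, controlling $g$ there by the pointwise bound supplied by \cref{lem:bivariate}. Since the statement is symmetric under $(p,b,c,y,z)\mapsto(1-p,c,b,z,y)$, I would first dispose of the boundary cases $p\in\{0,1\}$ — where $\phi(p)=1$ and, by nonnegativity of the coefficients, $\mathrm{cap}_0(g)=c$ and $\mathrm{cap}_1(g)=b$, so the claim is just $b,c\ge 0$ — and the case $p\in(0,1)$ with $b=0$ (the case $c=0$ being symmetric): here $2bc\ge ad$ forces $a=0$ when $d>0$, so $g=z(c+dy)$ and $\mathrm{cap}_p(g)=\inf_{y,z>0}z^p\cdot\frac{c+dy}{y^p}=0$ on letting $z\to 0$, while if $d=0$ then $g=a+cz$ and $\mathrm{cap}_p(g)=0$ on letting $y\to\infty$.

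For the main case $p\in(0,1)$ and $b,c>0$ I would split on $d$. If $d=0$: for any $y_0,z_0>0$, evaluating along the ray $(ty_0,tz_0)$ and letting $t\to\infty$ kills the constant term, so $\mathrm{cap}_p(g)\le (by_0+cz_0)/(y_0^pz_0^{1-p})$; minimizing over $y_0,z_0$ by the weighted AM--GM inequality gives $\mathrm{cap}_p(g)\le b^pc^{1-p}/\bigl(p^p(1-p)^{1-p}\bigr)$. If $d>0$: replacing $g$ by $g/d$ divides both sides of the desired inequality by $d$ and preserves log-concavity, so I may assume $d=1$; then \cref{lem:bivariate} gives $2bc\ge a$ and $g(y,z)\le(b+z)(c+y)+bc$ for all $y,z$. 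Plugging in $(\haty,\hatz)=\bigl(\tfrac{pc}{1-p},\tfrac{(1-p)b}{p}\bigr)$, at which $b+\hatz=b/p$ and $c+\haty=c/(1-p)$, yields
\[ \mathrm{cap}_p(g)\le\frac{(b+\hatz)(c+\haty)+bc}{\haty^{p}\hatz^{1-p}}=\frac{b^pc^{1-p}\,(1+p(1-p))}{p^{2p}(1-p)^{2-2p}}. \]

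In both sub-cases, multiplying by $\phi(p)=p^p(1-p)^{1-p}/(1+p(1-p))$ and cancelling powers of $p$ and $1-p$ reduces everything to $\phi(p)\,\mathrm{cap}_p(g)\le b^pc^{1-p}/\bigl(p^p(1-p)^{1-p}\bigr)$, and a final application of weighted AM--GM, $\bigl(\tfrac bp\bigr)^p\bigl(\tfrac{c}{1-p}\bigr)^{1-p}\le p\cdot\tfrac bp+(1-p)\cdot\tfrac{c}{1-p}=b+c$, completes the argument.

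The computations are routine; the substantive choices are (i) bounding the capacity by evaluation at a point instead of solving the optimization exactly, (ii) using the factorization bound $g\le(b+z)(c+y)+bc$ of \cref{lem:bivariate} so that the minimizer of the pure product $(b+z)(c+y)$ can be reused, with the leftover constant $bc$ accounting for exactly the $1+p(1-p)$ in the denominator of $\phi$, and (iii) scaling the whole polynomial to normalize $d=1$. I expect the only delicate point to be the bookkeeping in the degenerate cases ($p\in\{0,1\}$, or $b$ or $c$ equal to zero), where the chosen evaluation point degenerates and the capacity must instead be computed by a limiting argument.
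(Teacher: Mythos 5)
Your proposal is correct and follows essentially the same route as the paper's proof: handle the boundary and degenerate cases, normalize $d=1$, bound $g$ pointwise by $(b+z)(c+y)+bc$ via \cref{lem:bivariate}, evaluate at the same point $(\haty,\hatz)=\bigl(\tfrac{pc}{1-p},\tfrac{(1-p)b}{p}\bigr)$, and finish with weighted AM--GM. The only immaterial difference is in the case $d=0$ with $bc>0$, where the paper perturbs to the log-concave polynomial $g+\epsilon yz$ while you use a direct ray/limit argument.
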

\begin{proof}
	First let us resolve the boundary case $p\in\set{0,1}$. In this case, $\phi(p)=1$. For $p = 0$, taking $z\to \infty$ and $y\to 0$, shows the desired inequality. Similarly, if $p=1$, we may take $y\to \infty$ and $z\to 0$. So from now on, assume that $p\notin\set{0,1}$.

	Now we make some simplifying assumptions on $g$ and deal with the case $d = 0$. If additionally $c=0$, then $g(y,z)$ does not involve the variable $z$, meaning that the infimum above is zero and the inequality is satisfied. A similar argument resolves the case $b=0$. Finally, if $bc \neq 0$,  then by \cref{lem:bivariate}, for sufficiently small $\epsilon >0$, $g(y,z)+\epsilon yz$ is also log-concave. Therefore we can assume $d$ is nonzero and rescale $g(y,z)$ so that $d = 1$. Using the second part of \cref{lem:bivariate}, it then suffices to take $g(y,z) = (b+z)(c+y) + bc$. 

	Consider $\haty=c \cdot \frac{p}{1-p}$ and $\hatz=b\cdot \frac{1-p}{p}$. For this choice of $\haty,\hatz$, we compute that 
	\[ g(\haty,\hatz) = \parens*{b + b\cdot \frac{1-p}{p}}\parens*{c +  c\cdot \frac{p}{1-p}} +bc = bc \cdot \parens*{\frac{1}{p}  \cdot \frac{1}{1-p}  + 1} = bc \cdot  \frac{ 1 + p(1-p) }{p(1-p)}. \]
	The function of interest then evaluates to  
	\[ \frac{g(\haty,\hatz)}{\haty^p\hatz^{1-p}} = bc \cdot  \frac{ 1 + p(1-p) }{p(1-p)} \cdot \frac{(1- p)^p}{c^p p^p} \cdot \frac{ p^{1-p}}{b^{1-p} (1-p)^{1-p}}= b^p c^{1-p} \cdot  \frac{ 1 + p(1-p) }{\parens*{p^p(1-p)^{1-p}}^2}, \]
	and 
	\[ \phi(p) \cdot \frac{g(\haty,\hatz)}{\haty^p\hatz^{1-p}}=\frac{ b^p c^{1-p} }{p^p(1-p)^{1-p}}\leq p\cdot \frac{b}{p}+(1-p)\cdot \frac{c}{1-p}=b+c. \]
	Here the inequality follows from the weighted arithmetic mean geometric mean inequality. 
\end{proof}

We are now ready to prove \cref{prop:joint}.

\begin{proof}[Proof of \cref{prop:joint}]
	We will proceed by induction on $n$. The base case $n=1$ is the content of \cref{lem:n=1Prop}. 
	Suppose that the proposition holds  in $\R[y_1,\dots,y_n, z_1,\dots,z_n]$ and let $g(s,\y,t,\z)$ be a completely log-concave multiaffine polynomial in $ \R[s,\y,t,\z]$. Let us define the polynomial 
	\[ h(\y, \z) = \eval{(\partial_{s}+\partial_{t})g}_{s=t=0}\in \R[\y, \z].\]

	Since complete log-concavity is preserved under differentiation and restriction, $h(\y, \z)$ is also completely log-concave, and by induction, for any $\p\in [0,1]^n$ and any $\epsilon >0$, there exist $\hatbmy, \hatbmz \in \R_{>0}^n$ for which
	\[ \eval{\parens*{(\partial_s + \partial_t)\prod_{i=1}^{n} (\partial_{y_i}+\partial_{z_i})} g}_{s=t=\y=\z=0}=\eval{\parens*{\prod_{i=1}^{n} (\partial_{y_i}+\partial_{z_i})}h}_{\y=\z=0}\geq \phi(\p)\cdot  \frac{h(\hatbmy,\hatbmz)}{\hatbmy^{\p}\hatbmz^{1-\p}} - \epsilon.\]

	Now consider the bivariate polynomial $f\in \R[s,t]$ given by $f(s,t)=g(s,\hatbmy,t,\hatbmz)$, which is also completely log-concave by \cref{lem:CLCpreservers}. By \cref{lem:n=1Prop}, for any $q\in [0,1]$ and any $\delta>0$, there exist $\hats, \hatt>0$ for which 
	\[\eval{(\partial_s + \partial_t)f}_{s=t=0}\geq \phi(q)\cdot \frac{g(\hats,\hatt)}{\hats^q \hatt^{1-q} } - \delta. \]
	Notice that $\eval{(\partial_s + \partial_t)f}_{s=t=0}$ equals $h(\hatbmy, \hatbmz)$. Putting this all together, we find that 
	\begin{align*}
		\eval{\parens*{(\partial_s + \partial_t)\prod_{i=1}^{n} (\partial_{y_i}+\partial_{z_i})} g}_{s=t=\y=\z=0} & \geq \phi(\p)\cdot \frac{h(\hatbmy,\hatbmz)}{\hatbmy^{\p}\hatbmz^{1-\p}} - \epsilon  \\
		&  =\frac{\phi(\p)}{\hatbmy^\p\hatbmz^{1-\p}}\cdot \parens*{\eval{(\partial_{s}+\partial_{t})f}_{s=t=0}} - \epsilon\\
		& \geq \frac{\phi(\p)}{\hatbmy^{\p}\hatbmz^{1-\p}}\cdot \parens*{\frac{\phi(q)}{\hats^{q}\hatt^{1-q}}\cdot f(\hats, \hatt) - \delta} -\epsilon.
	\end{align*}
	Since $f(\hats, \hatt) = g(\hats, \hatbmy, \hatt, \hatbmz)$ and
	\[
 		\frac{\phi(\p)}{\hatbmy^\p\hatbmz^{1-\p}}\cdot \frac{\phi(q)}{\hats^q\hatt^{1-q}}\cdot f(\hats, \hatt) = \frac{\phi(q,\p)}{(\hats,\hatbmy)^{(q,\p)} (\hatt,\hatbmz)^{1-(q,\p)}}\cdot g(\hats,\hatbmy,\hatt,\hatbmz),
	\]
	taking $\epsilon, \delta \to 0$ at a rate that ensures $\delta \cdot \phi(\p)/\hatbmy^\p \hatbmz^{1-\p}\to 0$ proves that the proposition holds  in the polynomial ring $ \R[s,\y, t, \z]$.
\end{proof}

	\section{Weighted Counts of Common Matroid Bases}
\label{sec:weighted}

In this section we prove \cref{thm:weighted}. As mentioned in \cref{sec:framework}, the algorithm we use is just a simple modification of the unweighted case. For given matroids $M$ and $N$ of rank $r$ on the ground set $[n]$, and weights $\bmlambda=(\lambda_1,\dots,\lambda_n)\in \R_{\geq 0}^n$, we simply solve the following concave program
\[ \tau = \max\set*{\sum_{i=1}^n \parens*{p_i\log \frac{\lambda_i}{p_i}+(1-p_i)\log \frac{1}{1-p_i}}\given \p \in \P_M\cap \P_N}, \]
using, e.g., the ellipsoid method, and output $\ALG=e^\tau$ as an approximation to the $\bmlambda$-weighted count of $\B_M\cap \B_N$, namely $\sum_{B\in \B_M\cap \B_N} \bmlambda^B$.  

It is possible to give a direct proof of \cref{thm:weighted} by carrying the weights through the proofs of \cref{thm:mainmatroid,thm:matroidintersection}. Here, we provide an alternative proof by reduction to the unweighted case.

First let us prove the easy side, that $\ALG$ is an upper bound for the logarithm of the $\bmlambda$-weighted count. As in the unweighted case, we prove this for arbitrary subsets of $2^{[n]}$, not just common bases of matroids.
\begin{lemma}
\label{lem:weighted-upperbound}
	Suppose that $\B\subseteq 2^{[n]}$ and define $\P_\B=\conv\set{\1_B\given B\in \B}$. Let $\bmlambda=(\lambda_1,\dots,\lambda_n)\in \R_{\geq 0}^n$ and let $\mu$ be the uniform distribution on $\B$. Let $\p=(p_1,\dots,p_n)\in \P_\B$ be the marginals of the probability distribution $\ef{\bmlambda}{\mu}$. Then
	\[ \sum_{i=1}^n\parens*{p_i \log \frac{\lambda_i}{p_i}+(1-p_i)\log\frac{1}{1-p_i} }\geq \log\parens*{\sum_{B\in \B}\bmlambda^B}. \]
\end{lemma}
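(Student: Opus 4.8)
The plan is to recognize the left-hand side as the entropy of the tilted distribution $\nu=\ef{\bmlambda}{\mu}$ plus a linear correction term, and to identify $\log$ of the weighted count with that same quantity by a direct entropy computation. Since $\ef{\bmlambda}{\mu}$ is assumed to be a genuine probability distribution, the normalizer $Z=\sum_{B\in\B}\bmlambda^B$ is strictly positive, and because $\mu$ is uniform on $\B$, \cref{eq:ExtField} yields the explicit formula $\nu(B)=\bmlambda^B/Z$ for every $B\in\B$.

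First I would compute $\H(\nu)$ directly. Expanding the definition of entropy and splitting the logarithm,
\[ \H(\nu)=\sum_{B\in\B}\nu(B)\log\frac{1}{\nu(B)}=\log Z-\sum_{B\in\B}\nu(B)\sum_{i\in B}\log\lambda_i=\log Z-\sum_{i=1}^n p_i\log\lambda_i, \]
where the last equality comes from swapping the order of summation and using that $\sum_{B\ni i}\nu(B)$ is exactly the $i$-th marginal $p_i$ of $\nu$, which by hypothesis is the $i$-th coordinate of $\p$. (When some $\lambda_i=0$, every $B\in\B$ with $\nu(B)>0$ avoids $i$, so $p_i=0$, and all the offending terms are read with the usual convention $0\log 0=0$.) Rearranging this identity gives $\log Z=\H(\nu)+\sum_{i=1}^n p_i\log\lambda_i$.

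Next I would invoke subadditivity of the entropy in the form of \cref{fact:entropyupperbound}, applied to $\nu$: since $p_1,\dots,p_n$ are the marginals of $\nu$, we have $\H(\nu)\le\sum_{i=1}^n\H(p_i)=\sum_{i=1}^n\parens*{p_i\log\frac1{p_i}+(1-p_i)\log\frac1{1-p_i}}$. Substituting this bound into the identity for $\log Z$ gives
\[ \log\parens*{\sum_{B\in\B}\bmlambda^B}\le\sum_{i=1}^n\parens*{p_i\log\frac1{p_i}+(1-p_i)\log\frac1{1-p_i}}+\sum_{i=1}^n p_i\log\lambda_i=\sum_{i=1}^n\parens*{p_i\log\frac{\lambda_i}{p_i}+(1-p_i)\log\frac1{1-p_i}}, \]
which is the claimed inequality. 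I do not anticipate any real obstacle here: the whole argument is a short rearrangement around the identity $\H(\nu)=\log Z-\sum_i p_i\log\lambda_i$ together with one application of subadditivity. The only points that merit a careful sentence are the degenerate cases $\lambda_i=0$ (handled by $0\log0=0$ plus the observation that then $p_i=0$) and the verification that the vector $\p$ in the statement is literally the marginal vector of $\nu=\ef{\bmlambda}{\mu}$, which is built into the definition of the external field.
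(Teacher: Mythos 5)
Your proposal is correct and is essentially identical to the paper's own proof: both compute $\H(\ef{\bmlambda}{\mu})=\log Z-\sum_i p_i\log\lambda_i$ and then apply subadditivity of entropy to bound $\H(\ef{\bmlambda}{\mu})$ by $\sum_i\H(p_i)$. Your explicit handling of the degenerate case $\lambda_i=0$ is a small extra care the paper omits.
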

\begin{proof}
	Let $Z$ denote the $\bmlambda$-weighted count of $\B$, namely $\sum_{B\in \B}\bmlambda^B$. Note that the probability distribution $\ef{\bmlambda}{\mu}$ is supported on $\B$ and for $S\in \B$, is given by
	\[ \PrX{\ef{\bmlambda}{\mu}}{S}=\frac{\bmlambda^S}{\sum_{B\in \B} \bmlambda^B}=\frac{\bmlambda^S}{Z}. \]
	Applying the subadditivity of entropy, \cref{fact:subadditivity}, to the probability distribution $\ef{\bmlambda}{\mu}$ we get
	\begin{align*}
		\sum_{i=1}^n\H(p_i)&\geq \H(\ef{\bmlambda}{\mu})=\sum_{S\in \B} \frac{\bmlambda^S}{Z}\log \frac{Z}{\bmlambda^S}= \log(Z)+\sum_{S\in \B} \frac{\bmlambda^S}{Z}\parens*{\sum_{i\in S}\log \frac{1}{\lambda_i}}\\
		&=\log(Z)+\sum_{i=1}^n\parens*{\log \frac{1}{\lambda_i}\cdot \sum_{S\in \B:i\in S}\frac{\bmlambda^S}{Z}}=\log(Z)+\sum_{i=1}^n p_i \log \frac{1}{\lambda_i}. 
	\end{align*}
	Rearranging yields the desired statement.
\end{proof}
Now we prove \cref{thm:weighted}.
\begin{proof}[Proof of \cref{thm:weighted}]
	One side of the desired inequality is an immediate consequence of \cref{lem:weighted-upperbound}, when we let $\B=\B_M\cap \B_N$. So we just need to prove the other side, i.e.,
	\[ 2^{-O(r)}\ALG \leq \sum_{B\in \B_M\cap\B_N}\bmlambda^B, \]
	or equivalently, we will show that for any $\p\in \P_M\cap \P_N$
	\[ \sum_{i=1}^n p_i\log \frac{\lambda_i}{p_i}-O(r)\leq \log\parens*{\sum_{B\in \B_M\cap\B_N}\bmlambda^B}. \]
	Notice that we have dropped the terms $\sum_{i=1}^n (1-p_i)\log\frac{1}{1-p_i}$, because they are bounded by $O(r)$.
	
	For $\bmlambda=\1_{[n]}$, this is the content of \cref{thm:matroidintersection}. Our strategy is to prove the result first for $\bmlambda\in \Z_{\geq 0}^n$; we will then prove the case $\bmlambda\in \Q_{\geq 0}^n$. The result would follow by continuity, because for any fixed $\p$, both sides of the inequality are continuous functions of $\bmlambda$, and $\Q_{\geq 0}^n$ is dense in $\R_{\geq 0}^n$.
	
	Suppose that $\bmlambda\in \Z_{\geq 0}^n$. Let us introduce two new matroids, $\tM, \tN$, to which we will apply \cref{thm:matroidintersection}. We obtain these matroids, by replacing element $i\in [n]$ of the ground set with $\lambda_i$ new parallel elements $i^{(1)}, \dots, i^{(\lambda_i)}$; when $\lambda_i=0$, this corresponds to deleting the element $i$. Let $\pi:\set{i^{(j)}\given i\in [n], j\in [\lambda_i]}\to [n]$ be the projection map from the new ground set to the old ground set given by the equation $\pi(i^{(j)})=i$.
	A subset $I\subseteq \set{i^{(j)}\given i\in [n], j\in [\lambda_i]}$ is independent in $\tM$ or $\tN$ if and only if it contains no two parallel elements $i^{(j)}$ and $i^{(k)}$ for $j\neq k$ and its projection $\pi(I)$ is independent in $M$ or $N$, respectively. The key observation is that
	\[ \card{\B_\tM\cap \B_\tN}=\sum_{B\in \B_M\cap \B_N} \bmlambda^B. \]
	This is because for any common basis $\tB$ of $\tM, \tN$, the projection $B=\pi(\tB)$ is a common basis of $M$ and $N$; furthermore, for any choice of common basis $B\in \B_M\cap \B_N$, there are $\prod_{i\in B}\lambda_i$ choices for $\tB$ containing no pair of parallel elements, such that the projection $\pi(\tB)=B$; any such $\tB$ is a common basis of $\tM, \tN$.
	
	Let $\p\in \P_M\cap \P_N$. It follows that there is a distribution $\mu$ supported on $\B_M\cap \B_N$ whose marginals are $\p$. From $\mu$, we define a distribution $\tmu$ supported on $\B_\tM\cap \B_\tN$ 
	by sampling $B=\set{i_1,\dots,i_r}\in \B_M\cap \B_N$ according to $\mu$ and letting $\tB=\set{i_1^{(j_1)},\dots, i_r^{(j_r)}}$, where each $j_k$ is sampled uniformly at random from $[\lambda_k]$. 
	If $\tmu$ is the resulting distribution of $\tB$, then for any common basis $\tB$ of $\tM, \tN$ with projection $\pi(\tB) = B$, this gives $\tmu(\tB) = \mu(B)\prod_{i\in B}\lambda_i^{-1}$. 
	The marginals of $\tmu$ are given by
	\[ \tmu_{i^{(j)}}=\frac{p_i}{\lambda_i}. \]
	From \cref{thm:matroidintersection} it follows that
	\begin{align*} 
		\log(\card{\B_\tM\cap \B_\tN})&\geq \sum_{i=1}^n\sum_{j=1}^{\lambda_i} \tmu_{i^{(j)}}\log \frac{1}{\tmu_{i^{(j)}}}-O(r)\\ 
		&=\sum_{i=1}^n\sum_{j=1}^{\lambda_i} \frac{p_i}{\lambda_i}\log\frac{\lambda_i}{p_i}-O(r) \\
		&= \sum_{i=1}^n p_i \log \frac{\lambda_i}{p_i}-O(r),
	\end{align*}
	which completes the proof for $\bmlambda\in \Z_{\geq 0}^n$.
	
	Now let us prove this for $\bmlambda\in \Q_{\geq 0}^n$. Let $\p\in \P_M\cap\P_N$ and let $t$ be such that $t\bmlambda\in \Z_{\geq 0}^n$. Then	\[ \log\parens*{\sum_{B\in \B_M\cap \B_N} (t\bmlambda)^B}\geq \sum_{i=1}^n p_i\log \frac{t\lambda_i}{p_i}-O(r). \]
	Notice that the expression inside the $\log$ on the left hand side is homogeneous of degree $r$ in $t$. For the right hand side,  observe that $\sum_{i=1}^n p_i=r$, so we can simplify the inequality to
	\[ r\log(t)+\log\parens*{\sum_{B\in \B_M\cap \B_N}\bmlambda^B}\geq r\log(t)+\sum_{i=1}^n p_i\log \frac{\lambda_i}{p_i}-O(r), \]
	which finishes the proof for $\bmlambda\in \Q_{\geq 0}^n$ and consequently for all of $\R_{\geq 0}^n$.
\end{proof}
	
	\printbibliography
\end{document}